\title{First-Order Logic and Twin-Width for Some Geometric Graphs}
\author{Colin Geniet}{Discrete Mathematics Group, Institute for Basic Science (IBS), Daejeon, South Korea}{colin@ibs.re.kr}{https://orcid.org/0000-0003-4034-7634}{Supported by the Institute for Basic Science (IBS-R029-C1).}
\author{Gunwoo Kim}{School of Computing, KAIST, Daejeon, South Korea \and Discrete Mathematics Group, Institute for Basic Science (IBS), Daejeon, South Korea}{gunwoo.kim@kaist.ac.kr}{}{Supported by the Institute for Basic Science (IBS-R029-C1).}
\author{Lucas Meijer}{Department of Information and Computing Sciences, Utrecht University, The Netherlands}{l.meijer2@uu.nl}{https://orcid.org/0009-0002-4901-5249}{Supported by the Netherlands Organisation for Scientific Research (NWO) under project no. VI.Vidi.213.150.}
\authorrunning{C.\ Geniet, G.\ Kim, and L.\ Meijer}
\keywords{Twin-width, axis-parallel unit segment graphs, circular arc graphs, terrain visibility graphs, first-order logic, model checking, FPT} 
\newtheorem{fact}[theorem]{Fact}
\theoremstyle{remark}
\newcommand{\eqdef}{\coloneqq}
\newcommand{\Fc}{\mathcal{F}}
\newcommand{\Cc}{\mathcal{C}}
\newcommand{\Dc}{\mathcal{D}}
\newcommand{\Ec}{\mathcal{E}}
\newcommand{\Gc}{\mathcal{G}}
\newcommand{\Hc}{\mathcal{H}}
\newcommand{\Kc}{\mathcal{K}}
\newcommand{\Pc}{\mathcal{P}}
\newcommand{\Rc}{\mathcal{R}}
\newcommand{\Mc}{\mathcal{M}}
\newcommand{\Vc}{\mathcal{V}}
\newcommand{\tww}{\mathsf{tww}}
\newcommand{\mw}{\mathsf{mw}}
\newcommand{\N}{\mathbb{N}}
\newcommand{\Z}{\mathbb{Z}}
\newcommand{\R}{\mathbb{R}}
\newcommand{\lex}{\mathsf{lex}}
\newcommand{\fpt}{\ensuremath{\mathsf{FPT}}\xspace}
\newcommand{\aw}{\ensuremath{\mathsf{AW}[*]}\xspace}
\newcommand{\Perm}{\mathfrak{S}}
\newcommand{\urc}{{\scalerel*{\tikz \draw (0,1ex) -- (1ex,1ex) -- (1ex,0);}{1}}}
\newcommand{\ulc}{{\scalerel*{\tikz \draw (0,0) -- (0,1ex) -- (1ex,1ex);}{1}}}
\newcommand{\lrc}{{\scalerel*{\tikz \draw (0,0) -- (1ex,0) -- (1ex,1ex);}{1}}}
\newcommand{\llc}{{\scalerel*{\tikz \draw (0,1ex) -- (0,0) -- (1ex,0);}{1}}}
\newcommand{\seg}{\mathsf{seg}}
\newcommand{\Lend}{\mathsf{left}}
\newcommand{\Rend}{\mathsf{right}}
\newcommand{\Vseg}{\mathsf{Ver}}
\newcommand{\Hseg}{\mathsf{Hor}}
\newcommand{\Xco}{\pi_x}
\newcommand{\Yco}{\pi_y}
\newcommand{\adj}{Adj}
\newcommand{\inc}{Inc}
\newcommand{\floor}[1]{\lfloor #1 \rfloor}
\newcommand{\hormin}{\ensuremath{\mathsf{Hmin}}}
\newcommand{\vermin}{\ensuremath{\mathsf{Vmin}}}
\renewcommand{\emptyset}{\varnothing}
\begin{document}

\maketitle

\begin{abstract}
    For some geometric graph classes, tractability of testing first-order formulas is precisely characterised by the graph parameter \emph{twin-width}.
    This was first proved for interval graphs among others in [BCKKLT, IPEC '22], where the equivalence is called \emph{delineation}, and more generally holds for circle graphs, rooted directed path graphs, and $H$-graphs when $H$ is a forest.
    Delineation is based on the key idea that geometric graphs often admit natural vertex orderings, allowing to use the very rich theory of twin-width for ordered graphs.
    
    Answering two questions raised in their work, we prove that delineation holds for intersection graphs of non-degenerate axis-parallel unit segment graphs, but fails for visibility graphs of 1.5D terrains.
    We also prove delineation for intersection graphs of circular arcs.
\end{abstract}

\section{Introduction}
The \emph{first-order model checking} problem asks, given a graph~$G$ and a first-order (FO) formula~$\phi$, to test whether~$G$ satisfies~$\phi$.
In general, this is a difficult problem, known to be \aw-hard~\cite{downey1996queries},
and subsumes problems such as finding an independent set or dominating set of a specified size, or any given induced subgraph in~$G$.
For a graph class~$\Cc$, one may ask whether this FO model checking problem admits a \emph{fixed parameter tractable} (FPT) algorithm,
i.e.\ one running in time $f(|\phi|) \cdot |G|^{O(1)}$ for some (computable) function~$f$.

Classically, this problem has been studied in sparse graph classes, such as bounded degree graphs~\cite{seese1996linear} and more generally nowhere dense classes~\cite{grohe2017deciding}, which admit such FPT algorithms.
A more recent and incomparable development has come from the graph parameter \emph{twin-width} introduced by Bonnet, Kim, Thomassé, and Watrigant~\cite{Bonnet2022twinwidth1}:
if~$\Cc$ is a class of bounded twin-width, and witnesses of this (so-called \emph{contraction sequences}) can be efficiently computed,
then FO model checking in~$\Cc$ is FPT.

Since bounded twin-width and bounded degree are incomparable, a class with FPT FO model checking need not have bounded twin-width.
Yet a number of remarkable cases exist where twin-width is exactly the boundary of efficient FO model checking:
a fundamental result of Bonnet, Giocanti, Ossona de Mendez, Simon, Thomassé, and Toruńczyk~\cite{twin-width4} is that
for any hereditary (i.e.\ closed under induced subgraphs) class~$\Cc$ of \emph{ordered graphs} (i.e.\ graphs given with a total ordering of vertices, which formulas can query), model checking in~$\Cc$ is FPT if and only if~$\Cc$ has bounded twin-width\footnote{Under the assumption $\fpt \neq \aw$, or equivalently that model checking for \emph{all graphs} is not FPT.}.
By reducing to ordered graphs using a `nice' vertex ordering, similar results were proved for interval graphs and rooted directed path graphs by Bonnet, Chakraborty, Kim, Köhler, Lopes, and Thomassé~\cite{bonnet2022delineation}, who called this kind of equivalence \emph{delineation}.
It also holds for circle graphs~\cite{hlineny2022geometric}, $H$-graphs for any fixed forest~$H$~\cite{bonomo2025noncrossinghgraphs}, and tournaments~\cite{geniet2023tournaments}.
We continue this line of delineation results for geometric intersection graphs.
\begin{theorem}\label{thm:main}
    In each of the following cases, assuming that~$\Cc$ is hereditary and $\fpt \neq \aw$, the class~$\Cc$ has bounded twin-width if and only if FO model checking in~$\Cc$ is FPT:
    \begin{enumerate}
        \item $\Cc$ is a subclass of intersection graphs of arcs in the circle,
        \item $\Cc$ is a subclass of intersection graphs of non-degenerate, axis-parallel unit segments
        (where for algorithmic purposes, graphs in~$\Cc$ are given by their geometric representation), or
        \item more generally, $\Cc$ is a subclass of intersection graphs of non-degenerate, axis-parallel segments with lengths in a fixed finite set $L \subset \R_+$ (again given by geometric representations).
    \end{enumerate}
\end{theorem}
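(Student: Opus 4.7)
The plan is to follow the delineation template of Bonnet et al.~\cite{bonnet2022delineation}, reducing each of the three cases to the ordered-graph dichotomy of~\cite{twin-width4}: a hereditary class of ordered graphs has bounded twin-width if and only if its FO model checking is \fpt. The direction ``bounded twin-width $\Rightarrow$ FPT model checking'' of our theorem is then a standard consequence of the general algorithm of~\cite{Bonnet2022twinwidth1}, once contraction sequences can be computed in FPT time from the geometric representation; this is verified separately for each class.

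For the converse, in each case we equip a graph $G \in \Cc$ with a canonical vertex order $<_G$ read off from its geometric representation, yielding an ordered class $\Cc^<$: for circular arcs, this is the cyclic order of endpoints linearised at an arbitrary cut point; for axis-parallel segments, it is the lexicographic order of midpoints with a fixed tiebreaking rule on orientation and length (extended in the obvious way in Case~3 to handle the finite length set~$L$). The core claim to establish is that $\Cc$ has bounded twin-width if and only if $\Cc^<$ does. The direction $\Cc^< \Rightarrow \Cc$ is immediate since forgetting the order cannot increase twin-width; the reverse direction requires showing that a contraction sequence for~$G$ can be refined into one for $(G, <_G)$ of comparable width, with the refinement guided by the geometric data. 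Combined with the ordered-graph dichotomy, and with the transfer of FPT model checking between $\Cc$ and~$\Cc^<$ (algorithmically trivial in both directions once the geometric representation is part of the input), this yields the theorem.

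The main obstacle lies in parts~2 and~3, the axis-parallel segment cases. Unlike interval graphs, where the canonical linear order is determined by the graph up to reversal, horizontal and vertical segments can cross in an arbitrarily mixed bipartite pattern that is not captured by any single linear order on the vertices. We expect the technical heart of the proof to consist in classifying the obstructions to bounded twin-width in~$\Cc^<$ and showing that each such obstruction translates, via the geometric representation, into an obstruction to bounded twin-width in~$\Cc$; this matches the philosophy of~\cite{twin-width4}, where bounded twin-width of hereditary ordered classes is characterised by forbidden patterns. For circular arcs (part~1), the analogous step is easier because the cyclic order of endpoints is almost combinatorially determined by the graph, and one essentially only needs to argue independence of the twin-width bound from the choice of cut point.
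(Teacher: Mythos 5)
Your proposal correctly identifies the general template (reduce to the ordered-graph dichotomy of \cite{twin-width4} via a canonical vertex ordering derived from the geometric representation), but the precise logical chain you set up is not the one that works, and it glosses over the points where the actual technical content lives.

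First, the central claim you propose to prove, namely that $\Cc$ has bounded twin-width if and only if $\Cc^<$ does, is not what the paper establishes, and its hard direction is doubtful as stated. You suggest ``refining'' a contraction sequence for $G$ into one for $(G,<_G)$ of comparable width; there is no reason this should be possible in general, and the paper does not attempt it. Instead, the paper works with an \emph{auxiliary matrix} (the endpoints matrix $M_\Fc$, which is not the adjacency matrix of $(G,<_G)$) and proves a dichotomy: either $M_\Fc$ has no large grid, in which case a contraction sequence for $G$ can be computed via an FO transduction from an ordered incidence structure of bounded twin-width, or $M_\Fc$ has a large grid, in which case a \emph{transversal pair} is extracted inside $G$ itself. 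The transversal pair yields monadic independence of the unordered class $\Cc$ directly (\cref{lem:transversal-pair-NIP}), and then AW[*]-hardness of model checking in $\Cc$ follows from \cref{thm:independent-hardness}. Your chain, by contrast, would only give unboundedness or independence for the ordered class $\Cc^<$; to get hardness for $\Cc$ you still need the order to be transduction-recoverable from $G$, or some other way to drop back to the unordered class, and this is exactly what the transversal-pair extraction accomplishes. This step is missing from your plan.

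Second, for the segment cases your proposed ordering (lexicographic order of midpoints with orientation/length tie-breaking) does not work, and the paper explicitly explains why: a single global lexicographic ordering lets even ordered matchings of unbounded twin-width appear. The crucial trick is a \emph{two-level} lexicographic ordering, first by which unit square (or $(s,t,\ell)$-cell in case 3) the endpoint falls into, then by coordinates within it; together with a decomposition into unit squares and an edge-partition Ramsey argument for ordered graphs (\cref{thm:ordered-tww-ramsey}, used via \cref{lem:unitgrid-apus}). Relatedly, the grid-to-transversal-pair step requires the representation to be \emph{minimized} in a specific sense (arcs inclusion-wise minimal, or segments pushed bottom-left), which you do not mention; without it, a large grid in the endpoints matrix need not produce any complex substructure in $G$. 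Finally, $\Cc^<$ is not obviously hereditary (the canonical order on an induced subgraph's own minimal representation need not be the restriction of the parent's order), so applying the ordered dichotomy to $\Cc^<$ as you propose has a hypothesis problem. In short: right high-level philosophy, but the core equivalence you formulate is the wrong one, and the mechanisms (auxiliary matrices, two-level ordering, unit-square decomposition, minimization, transversal pairs) that actually carry the proof are absent.
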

The question of delineation was asked in~\cite{bonnet2022delineation} for axis-parallel unit segment graphs, and in~\cite{bonomo2025noncrossinghgraphs} for circular arc graphs.

For axis parallel segments, `non-degenerate' in the statement means we do not allow two horizontal (resp.\ two vertical) segments to intersect.
We require geometric representations of segment graphs to be given, as computing them is NP-hard~\cite{mustata2013unitgrid}.
This is not an issue in circular arc graphs, where representations can be found in polynomial time~\cite{tucker1980circular}.

To complement \cref{thm:main}, we also give a number of examples of non-delineated classes of geometric graphs.
First, relaxing any of the assumptions from the last item of \cref{thm:main} breaks delineation:
\begin{restatable}{theorem}{nondelinsegments}\label{thm:seg-non-delineated}
    There is a hereditary graph class~$\Cc$ with FPT FO model checking, but unbounded twin-width,
    and which can be realised as intersection graphs of any of the following types of segments:
    \begin{enumerate}
        \item \label{item:degen} degenerate axis-parallel unit segments,
        \item \label{item:twolengths} axis-parallel segments in general position where segment lengths take only two values (but these values can vary depending on the graph~$G \in \Cc$),
        \item \label{item:epsilon-lengths} axis-parallel segments in general position with lengths in $[1,1+\epsilon]$ for any fixed~$\epsilon > 0$, or
        \item \label{item:epsilon-directions} unit segments in general position, even when segment slopes are restricted to `vertical', `horizontal plus $\epsilon$', and `horizontal minus $\epsilon$'.
    \end{enumerate}
\end{restatable}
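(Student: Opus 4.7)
The plan is to exhibit a single hereditary graph class $\Cc$ witnessing all four items simultaneously. I would take $\Cc$ to be the hereditary closure of a family $\{G_n\}_{n\in\N}$ of subcubic graphs of unbounded twin-width. Such a family is known to exist by a counting argument (on $n$ vertices, the number of cubic graphs outgrows the number of distinct $k$-contraction sequences for every fixed~$k$). Bounded maximum degree of $\Cc$ immediately yields FPT FO model checking via the classical bounded-degree model-checking result, while $\tww(\Cc) = \infty$ is inherited from the family $\{G_n\}$, so the combinatorial half of the statement is free.

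The bulk of the work is then to realise every $G_n$ as a segment intersection graph of each of the four permitted types. A common template I would use places the $n$ vertices of $G_n$ as small gadgets on an $N \times N$ grid, with $N = N(n)$ polynomial in $n$, each gadget being a bounded-size configuration of axis-parallel unit segments. Each of the $O(n)$ edges of $G_n$ is then drawn as one additional segment (or short polyline), routed to touch exactly the two incident vertex gadgets and nothing else. The mechanism differs between items: for the two-length case and the degenerate case I would use a single long segment per edge (either a literal length-$L_n$ segment, or a long ``bus'' assembled from many overlapping collinear unit segments); for the $[1,1+\epsilon]$ case I would rescale the grid so that $\epsilon$-variations in length suffice to bridge between two adjacent grid rows, and then concatenate such near-unit segments into polylines playing the role of long segments; for the three-slope case I would exploit the deterministic pairwise-crossing pattern of the two near-horizontal families of slopes $+\epsilon$ and $-\epsilon$, combined with the verticals, to realise arbitrary bipartite adjacencies, appealing to a standard reduction from subcubic graphs to bipartite subcubic graphs.

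The central obstacle throughout is to show that no spurious intersections arise: the constructed configuration must realise exactly $G_n$, not a supergraph. This is most delicate in the $[1,1+\epsilon]$ and the three-slope cases, where the global slack $\epsilon$ is fixed while $n$ grows, so the construction cannot absorb additional combinatorial complexity into a vanishing local slack and must instead distribute it globally. I expect the resolution to rely on a careful book-style or staircase layout in which all edges incident to a single vertex are nested within a dedicated corridor, so that crossings can be ruled out by a global monotonicity argument rather than by checking each pair of segments individually.
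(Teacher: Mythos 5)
Your approach differs fundamentally from the paper's, and it contains a gap that I believe is fatal as stated. The paper does not use subcubic graphs: it introduces a new class~$\Hc_f$, whose members~$H_\sigma^\ell$ consist of two half-graphs joined by paths of length~$\ell \approx \log\log n$ along a permutation~$\sigma$. Because these graphs contain half-graphs, they have \emph{unbounded} degree, so bounded-degree model checking is unavailable; the paper instead proves bounded merge-width (via a lemma about attaching long paths), which yields FPT model checking and monadic dependence. The class~$\Hc_f$ is then realised explicitly as intersection graphs in all four settings by a single drawing (Figures~6 and~7), and the paper stresses precisely that its construction ``differs significantly'' from the subcubic encoding of~\cite{bonnet2022delineation}, in part because~$\Hc_f$ additionally yields a counterexample to a conjecture about monadically stable classes.

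The decisive problem with your proposal is that subcubic graphs simply cannot all be realised in cases~2--4. In cases~2 and~3, ``axis-parallel segments in general position'' means two parallel segments never intersect (a parallel intersection would be a degenerate overlap), so the intersection graph is necessarily bipartite, with horizontal segments on one side and vertical on the other. In case~4, the three permitted slopes likewise make the intersection graph 3-partite. But subcubic graphs include~$K_4$, which is neither bipartite nor 3-colourable, so the class of subcubic graphs is not a subclass of these segment graph classes. You would have to restrict to bipartite (and, simultaneously, 3-partite) subcubic graphs — you gesture at a ``standard reduction to bipartite subcubic'' for the three-slope case, but the theorem demands a single hereditary class realisable in all four ways, so the restriction must be built into~$\Cc$ from the start. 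And even after restricting, it is far from clear that every bipartite subcubic graph is a grid intersection graph; this is known for bipartite planar graphs (Hartman--Newman--Ziv) but not in the generality you would need, and your twin-width lower bound forces you into non-planar examples. Finally, you correctly identify that the heart of the geometric argument is ruling out spurious intersections under the tight constraints of items~3 and~4, but you leave this at a heuristic sketch; in the paper this step is a concrete, checkable figure tailored to the structure of~$H_\sigma^\ell$, not a generic gadget argument.
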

It was already known from \cite[Figure~8]{bonnet2022delineation} that axis-parallel segments with only two lengths give non-delineated graph classes (case~\ref{item:twolengths}).
We still include it in the result as, amusingly, a single graph class~$\Cc$ is an example of non-delineation for all four cases of \cref{thm:seg-non-delineated}.
Our non-delineated class~$\Cc$ differs significantly from the one constructed in \cite[Figure~8]{bonnet2022delineation}, which was an encoding of all cubic graphs.
Graphs in~$\Cc$ consist of two half-graphs, joined by paths of slightly more than constant length along an arbitrary permutation.
We prove that they have unbounded twin-width, but bounded \emph{merge-width} (a generalisation of twin-width capturing bounded expansion classes proposed by Dreier and Toruńczyk~\cite{merge-width}).
The latter implies monadic dependence and FPT FO model checking.

A variant of this construction shows that delineation also fails for visibility graphs of 1.5D-terrains (i.e.\ polygonal curves formed by an $x$-monotone sequence of points), answering negatively a question of~\cite{bonnet2022delineation}.
\begin{restatable}{theorem}{terrainnotdelin}\label{thm:terrain-not-delin}
    There is a class~$\Cc$ of 1.5D-terrain visibility graphs, whose hereditary closure allows FPT FO model checking, but which has unbounded twin-width.
\end{restatable}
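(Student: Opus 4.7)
The plan is to mimic the construction of \cref{thm:seg-non-delineated} inside 1.5D-terrain visibility, producing a class of terrains whose visibility graphs consist of two half-graphs joined by super-constantly-long paths encoding an arbitrary permutation. The twin-width lower bound and the bounded-merge-width upper bound from the earlier construction should then transport via a first-order interpretation, giving a class of terrain visibility graphs with unbounded twin-width but FPT FO model checking. Concretely, I would first realize a half-graph as (part of) a terrain visibility graph: a descending staircase of vertices $A = \{a_1, \dots, a_n\}$ on the left and an ascending staircase $B = \{b_1, \dots, b_n\}$ on the right, with heights chosen so that $a_i$ sees $b_j$ exactly when $i \le j$. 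Two such gadgets, placed at the two ends of the terrain and separated by a tall blocking mountain, yield the two half-graphs whose marked vertices cannot see each other directly across the mountain.

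Next I would encode the permutation paths on the sides of the blocking mountain. For each pair $(i, \sigma(i))$ of the chosen permutation $\sigma$, I would place a chain of $f(n)$ path vertices in a dedicated pocket cut into the mountain, so that (i)~consecutive path vertices see each other, (ii)~the two endpoints peek over their respective slopes and see exactly the designated vertices $a_i$ and $b_{\sigma(i)}$, and (iii)~path vertices in different pockets are hidden from each other behind small sub-peaks. The length $f(n)$ is chosen to be slightly super-constant, exactly as in \cref{thm:seg-non-delineated}, which is what enables an arbitrary permutation to be encoded while keeping merge-width bounded.

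With this realization in hand, the properties transport. Because the visibility graphs obtained differ from the graphs of \cref{thm:seg-non-delineated} only by a bounded number of `local' extra edge sets (forced by the fact that consecutive vertices on the terrain are always mutually visible), the two classes are mutually first-order interpretable. Since both boundedness of twin-width and FPT FO model checking are preserved by FO interpretations, the terrain class inherits unbounded twin-width, and its hereditary closure inherits bounded merge-width and hence FPT model checking.

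The hard part will be realizing all $n$ permutation paths simultaneously on a single terrain without creating spurious long-range visibility edges. If a path is simply laid flat on top of the mountain, its vertices would see many vertices of other paths and of the half-graphs, destroying the structure. The delicate geometric bookkeeping is to design a nested system of sub-peaks with carefully tuned heights and widths that confine each path's line of sight to its two intended endpoints on the outside and to its immediate neighbours inside its pocket, and then to check that the unavoidable local extra edges form only a bounded collection of simple relations that the merge-width / interpretation argument of \cref{thm:seg-non-delineated} can absorb without loss.
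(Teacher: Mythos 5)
Your high-level plan is sound in spirit (realize a half-graph-plus-long-paths gadget on a terrain, transduce subdivided cliques to get unbounded twin-width, and use merge-width for the positive side), but the concrete realization you propose is not carried out and, as far as I can tell, does not work; you yourself flag the ``hard part'' without resolving it. The difficulty is more than bookkeeping: in a 1.5D terrain, visibility is a long-range relation, and any vertex that ``peeks over'' a ridge sees an unbounded swath of the terrain on the far side, not just one designated endpoint. Trying to hide $\Theta(n)$ pairwise-disjoint paths of length $f(n)$ inside pockets of a blocking mountain, each with two controlled sightlines escaping the pocket, creates exactly the kind of unstructured crosstalk that you cannot plausibly absorb with a \emph{bounded} number of flips. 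Your claim that the resulting graph differs from $H_\sigma^\ell$ by ``only a bounded number of local extra edge sets'' is the crux of the argument, and the proposal gives no reason to believe it.

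The paper avoids this problem entirely by building a different auxiliary graph $G_\sigma^\ell$ that sits naturally on a single convex--concave terrain: a slightly convex ascending path $C$ (which is automatically a clique), a long slightly concave path $A$ with $A$--$C$ a complete bipartite graph (both forced by the geometry, not fought against), and vertices $b_i$ tucked just below $A$ so that $B$ forms exactly one half-graph against $C$ while each $b_i$ is adjacent only to its two neighbours on $A$. There is no second half-graph and no system of hidden pockets; the permutation is encoded purely in the heights of the $b_i$, and the ``long paths'' of the $H_\sigma^\ell$ construction are replaced by one long path $A$ with the $b_i$ attached at regular spacing. The unbounded twin-width then comes from transducing subdivided \emph{cliques} $K_n^{(\ell+2)}$ (not subdivided bicliques as in the APUS case), choosing $\sigma$ so that suitable segments of $A$ realize the $\binom{n}{2}$ required paths, and the bounded merge-width comes from \cref{lem:mw-long-paths} together with the observation that adding a single clique and a single biclique (flips) preserves bounded merge-width. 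The lesson is that one should exploit, rather than suppress, the clique and biclique structure that terrain visibility forces; this is the idea your proposal is missing.
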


Finally, delineation of circular arc graphs (case~1 of \cref{thm:main}) can be stated as a result on \emph{$H$-graphs} (i.e.\ the intersection graphs of connected subspaces in the topological realisation of~$H$), taking~$H$ to be a cycle.
It is also known that $H$-graphs are delineated when~$H$ is any fixed forest~\cite{bonomo2025noncrossinghgraphs},
and it seems likely that the two proofs could be combined to obtain delineation when~$H$ is a fixed graph with at most one cycle.
We however observe that delineation fails for any more complex~$H$:
\begin{theorem}\label{thm:Hgraphs-non-delin}
    There is a hereditary class~$\Cc$ of graphs with FPT FO model checking, but unbounded twin-width and unbounded merge-width,
    such that if~$H$ is any connected graph with at least~2 cycles, then graphs in~$\Cc$ are $H$-graphs.
\end{theorem}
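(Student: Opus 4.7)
I plan to build a single hereditary class $\mathcal{C}$, obtained by a suitable modification of the ``two half-graphs joined by paths along a permutation'' construction of \cref{thm:seg-non-delineated}, with subdivision length $\ell(n) = \omega(\log n)$ (where $n$ is the size of each half-graph). Unbounded twin-width is inherited by adapting the argument of \cref{thm:seg-non-delineated}, since the essential permutation-encoding structure is preserved under longer subdivisions. Merge-width becomes unbounded: with $\ell(n) = \omega(\log n)$ the class loses bounded expansion because the dense half-graph adjacencies remain accessible as shallow minors of arbitrarily large density. FPT FO model checking is obtained by establishing monadic dependence of $\mathcal{C}$, combining the monadic dependence of the individual half-graph pieces with the fact that long enough subdivisions limit first-order interactions across the permutation.

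The heart of the proof is realizing each graph $G \in \mathcal{C}$ as an $H$-graph for every connected $H$ with cyclomatic number at least two. I would proceed by reduction to the three minimal topological cores of such $H$: the theta graph (two degree-$3$ vertices joined by three internally disjoint paths), the figure-eight (two cycles sharing a single vertex), and the barbell (two cycles joined by a path). Every connected $H$ with cyclomatic number $\geq 2$ contains a subdivision of one of these as a subgraph, so the $H$-graph class includes the corresponding core-graph class, and it suffices to represent each $G \in \mathcal{C}$ as a core-graph in each of the three cases. For each core, I use the two topologically independent loops as ``channels'': one half-graph of $G$ is laid out as an interval subrepresentation along the first loop, the other along the second, and each subdivided path between the halves is routed using the freedom granted by long subdivisions.

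I expect the figure-eight case to be the main obstacle: its two loops meet in a single point $v$, and any two connected subsets of $|H|$ containing $v$ automatically intersect, which forbids parallel independent routings through the shared vertex. Overcoming this likely requires engineering $\mathcal{C}$ so that each subdivided path of $G$ uses $v$ only in a single ``transition'' subdivision vertex, positioned along the path in such a way that the different paths' transition vertices can be chosen to either coincide or be intentionally mutually adjacent in $G$. Making this routing faithful to an arbitrary permutation while simultaneously preserving the unbounded twin-width, unbounded merge-width, and FPT FO model checking is the main technical balancing act; I anticipate that finalizing the definition of $\mathcal{C}$, and the precise subdivision length $\ell(n)$, will be driven by exactly this constraint.
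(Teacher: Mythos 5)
The proposal does not work, and the gaps are structural rather than technical. Theorem~\ref{thm:Hgraphs-non-delin} specifically requires the class~$\Cc$ to have \emph{unbounded merge-width}, but the half-graph-plus-long-paths family~$\Hc_f$ you want to adapt is shown in Fact~\ref{fact:subdiv-mw} to have \emph{bounded} merge-width whenever~$f\to\infty$ (this is exactly why it serves as a non-delineated example: bounded merge-width is what gives FPT model checking there). No choice of subdivision length rescues this. Your heuristic that ``the dense half-graph adjacencies remain accessible as shallow minors of arbitrarily large density'' overlooks that half-graphs themselves have bounded clique-width and hence bounded merge-width, and the long subdivision only makes the rest of the graph sparser, not denser. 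Moreover, with $\ell(n)=\omega(\log n)$ the transduction argument of Fact~\ref{fact:logn-subdiv-tww} breaks down: it relies on $o(\log n)$-subdivisions of bicliques having unbounded twin-width, a statement that fails once the subdivision length reaches $\Theta(\log n)$. So you likely lose unbounded twin-width as well. Finally, the claim that monadic dependence gives FPT FO model checking is precisely the open conjecture of Gajarsk\'y et al.\ cited in the introduction; one cannot invoke it as a tool. One needs a concrete algorithmic certificate such as nowhere density or bounded merge-width, neither of which is established here.

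The paper's proof takes a fundamentally different route. It starts from a \emph{nowhere dense} class~$\Dc$ of bipartite graphs with unbounded expansion (graphs with girth and minimum degree tending to infinity). Taking 1-subdivisions keeps the class nowhere dense (hence FPT model checking and monadically dependent) but already of unbounded merge-width. Applying a bounded number of flips --- complementing within~$U$, within~$V$, within~$E$, and between $U\cup V$ and~$E$ --- preserves all four properties (monadic dependence, FPT model checking, unbounded twin-width, unbounded merge-width) and produces the graph~$G^\bullet$, which admits a clean $H_\infty$-representation (half-circles on the two loops of a figure-eight, with each edge-vertex drawn as the complement of the two corresponding half-circles). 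Crucially, for the coverage over all~$H$ the paper does not case-split into theta, figure-eight, and barbell: it observes that the figure-eight~$H_\infty$ (the one-vertex graph with two loops) is a minor of every connected~$H$ with at least two cycles, and $H'$-graphs are automatically $H$-graphs when $H'$ is a minor of~$H$. So the figure-eight case --- which you flag as the hard one and leave open --- is actually the \emph{only} case needed, and its construction is the heart of the matter.
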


\paragraph*{Overview of the proof of \cref{thm:main}}
Let~$\Dc$ be the graph class we are interested in, e.g.\ the class of all circular-arc graphs.
Our goal is the following:
for any hereditary\footnote{The hereditary assumption is necessary to avoid classes in which model checking is easy for trivial reasons, e.g.\ graphs of the form: an arbitrary graph on~$n$ vertices, plus~$2^n$ isolated vertices as padding.} subclass $\Cc \subseteq \Dc$, there is an FPT model checking algorithm in~$\Cc$ if and only if~$\Cc$ has bounded twin-width.
Inspired by~\cite{bonnet2022delineation,hlineny2022geometric}, we obtain such an equivalence by proving a dichotomy of the following form, for any subclass~$\Cc \subseteq \Dc$:
\begin{itemize}
    \item Either we can compute in polynomial time a contraction sequence for any $G \in \Cc$, witnessing that~$G$ has bounded twin-width.
    Then~\cite{Bonnet2022twinwidth1} gives FPT model checking in~$\Cc$.
    \item Or~$\Cc$ is \emph{monadically independent}, which informally means that graphs in~$\Cc$ can be used to encode arbitrary graphs through FO logic (see \cref{sec:prelim:FO} for the definition).
    In that case, and if~$\Cc$ additionally is hereditary, then FO model checking on the class of all graphs reduces to FO model checking on~$\Cc$~\cite{flip-breakability}, proving that the latter is hard for the parametrised complexity class \aw~\cite{downey1996queries}.
\end{itemize}
Let us point out that if~$\Cc$ has bounded twin-width, then it is monadically dependent (the opposite of monadically independent).
Thus a subclass $\Cc \subseteq \Dc$ has bounded twin-width if and only if it is monadically dependent;
this is the precise definition of \emph{delineation} in~\cite{bonnet2022delineation}.
The above dichotomy is a slight strengthening of delineation, additionally requiring to be able to efficiently find contraction sequences.
When~$\Cc$ is hereditary, and assuming $\fpt \neq \aw$, this strengthening gives the equivalence of $\Cc$ having bounded twin-width, $\Cc$ being monadically dependent, and $\Cc$ admitting an FPT model checking algorithm.
Let us point out that it is a major conjecture in finite model theory that the last two points are always equivalent:
\begin{conjecture}[{\cite[Conjecture~8.2]{gajarsky2020FO}}]
    For any hereditary class~$\Cc$, there is an FPT FO model checking algorithm for graphs in~$\Cc$ if and only if~$\Cc$ is monadically dependent.
\end{conjecture}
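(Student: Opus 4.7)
The conjecture has two directions, and the easier one---that FPT FO model checking implies monadic dependence when~$\Cc$ is hereditary---is essentially already contained in the preceding discussion: if a hereditary~$\Cc$ were monadically independent, then by~\cite{flip-breakability} the FO model checking problem on all graphs would reduce to FO model checking on~$\Cc$, so the latter could not be in \fpt{} under the assumption $\fpt \neq \aw$. Any serious plan is really a plan for the forward direction: every hereditary monadically dependent class has an FPT FO model checking algorithm.

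My attack would mirror the architecture of known special cases, such as nowhere dense classes, ordered classes of bounded twin-width, and the geometric classes handled in this paper. First, I would seek a \emph{computable structural parameter}~$\mu$ such that (i)~$\mu$ is bounded on any monadically dependent class, and (ii)~bounded~$\mu$, together with a polynomially-computable witness, suffices for FPT FO model checking. The current natural candidates are \emph{flip-width} and \emph{merge-width}: both are conjectured to be bounded precisely on monadically dependent classes, and both come equipped with a locality or game-based FO model checking procedure once an explicit decomposition is provided. Second, I would aim to compute, in polynomial time, an approximate decomposition certifying bounded~$\mu$ for any input $G \in \Cc$, either by an iterative flip-refinement inspired by the nowhere dense case, or by a branching procedure that recursively produces the levels of a flip-width/merge-width decomposition. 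Third, I would run the corresponding Splitter-/Flipper-style game algorithm on that decomposition to evaluate the input formula.

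The critical obstacle is step one. Nobody currently knows how to prove that monadic dependence implies bounded flip-width, bounded merge-width, or any other combinatorially usable parameter, even non-constructively and even in the dense setting. A plausible route is contrapositive and Ramsey-theoretic: from a class with unbounded~$\mu$, extract an infinite family of increasingly elaborate obstructions (rich flip-games won by a duplicator strategy of growing depth, say), and then show that such a family FO-transduces the class of all finite graphs, witnessing monadic independence. Converting the many quantifier-alternating rounds of such a game into a single bounded-quantifier-rank transduction is precisely where current techniques stop working; I expect the bulk of any proof to live in this compression step, which would require new structural theorems for dense hereditary classes analogous to the sparsity dichotomies of Nešetřil and Ossona de Mendez.
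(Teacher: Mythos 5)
This statement is an open conjecture (Conjecture~8.2 of Gajarsk\'y et al.), and the paper does not prove it; it only states it to contextualise its own results, which confirm special cases where twin-width explains the equivalence. Your submission is therefore not comparable to any proof in the paper, and, as you yourself acknowledge, it is not a proof at all but a research programme. The decisive gap is the one you name: there is no known argument that monadic dependence implies bounded flip-width, bounded merge-width, or any other algorithmically exploitable parameter, and without that implication the entire forward direction collapses. A plan whose first step is an open problem of comparable difficulty to the conjecture itself cannot be accepted as a proof.

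Two further points of caution. First, even your ``easy'' direction is conditional: the reduction from general FO model checking to model checking on a hereditary monadically independent class~$\Cc$ (via the results of~\cite{flip-breakability}) shows \aw-hardness, so the implication ``FPT model checking $\Rightarrow$ monadic dependence'' holds only under the assumption $\fpt \neq \aw$; the conjecture as usually read is implicitly under this assumption, but you should state it. Second, your proposed contrapositive route for the hard direction --- extracting obstructions from a lost flip-game and compressing an unbounded-depth game strategy into a single bounded-quantifier-rank transduction --- is precisely where the known characterisations (e.g.\ flip-breakability in~\cite{flip-breakability}) stop short of the algorithmic statement: they characterise monadic dependence combinatorially but do not yield polynomial-time computable decompositions supporting model checking. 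If you wish to contribute to this line, the realistic target is the one this paper pursues: prove the equivalence for specific hereditary classes where a concrete parameter (here, twin-width of a suitable ordered auxiliary structure) can be both bounded under monadic dependence and efficiently witnessed.
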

Our results confirm special cases of this, where twin-width explains the equivalence.

Coming back to the proof overview, let us explain how the aforedescribed dichotomy is achieved.
The starting point is the characterisation of twin-width through obstructions in matrices~\cite{Bonnet2022twinwidth1}:
given any 0--1 matrix~$M$, one can efficiently find either a contraction sequence for~$M$ (witness of small twin-width), or a \emph{grid}.
Given the geometric representation of the graph~$G$, we construct some auxiliary matrix~$M_G$.
For instance, for circular arc graphs, we identify the circle with $\{1,\dots,n\}$ modulo~$n$, and the matrix~$M$ has a~1 at position~$(i,j)$ whenever there is an arc clockwise from~$i$ to~$j$.
We then prove two facts:
\begin{itemize}
    \item First, $G$ can be reconstructed from~$M_G$ by a sequence of operations that preserve twin-width.
    Thus if~$M_G$ has bounded twin-width, then so does~$G$. This is efficient: a contraction sequence for~$G$ can quickly be computed from one for~$M_G$.
    \item Second, if~$M_G$ contains a sufficiently large grid, then~$G$ contains a complex structure called \emph{transversal pair}.
    It is known that if graphs in~$\Cc$ contain arbitrarily large transversal pairs, then~$\Cc$ is monadically independent~\cite{bonnet2022delineation}.
    To prove this, we assume the geometric representation of~$G$ we use is \emph{minimised} in an appropriate sense.
\end{itemize}
Thus either one can find contraction sequences for the auxiliary matrices $\{M_G : G \in \Cc\}$, and thus also for~$\Cc$,
or these matrices have arbitrarily large grids, hence~$\Cc$ contains arbitrarily large transversal pairs and is monadically independent.
The case of APUS graphs requires one additional step: we construct \emph{local} auxiliary matrices, corresponding to 1-by-1 squares in the segment representation.
If any such matrix has a grid, then we find a transversal pair; if all local matrices have bounded twin-width, then so does the full graph.

\paragraph*{Open problems}
Continuing on the theme of intersection graphs of objects in the plane, we ask whether unit-disk graphs are delineated.
The difficulty in adapting our technique to this case may be to define and use the right notion of minimisation.
Disks with a fixed, finite set of allowed radii should behave similar to unit disks,
but we expect that allowing two radii with unbounded ratio, or arbitrary radii in $[1,1+\epsilon]$ causes delineation to fail with constructions similar to that of \cref{thm:seg-non-delineated}.
One may also ask about intersection graphs of squares of rectangles.
We expect our techniques to easily apply to axis-aligned unit squares, but wonder whether delineation still holds without the axis-aligned constraint.

One may also consider other parameters than twin-width.
Since our non-delineated constructions in \cref{thm:seg-non-delineated,thm:terrain-not-delin} use merge-width to obtain FPT model checking,
it is sensible to look for geometric graph classes in which merge-width characterises monadic dependence and FPT model checking.
Tackling this question would likely require characterising merge-width through some obstructions, a difficult open problem.
One can observe that this equivalence does \emph{not} hold for axis-parallel segments:
the construction of \cite[Figure~8]{bonnet2022delineation} shows that for any bipartite graph $H = (V,W,E)$,
one can obtain as axis-parallel segment graph the 2-subdivision of~$H$, with a full biclique added between~$V$ and~$W$.
Taking~$H$ to be graphs with girth and minimum degree going to infinity, this class is monadically dependent, but has unbounded merge-width.

\section{Preliminaries}\label{sec:prelim}
For $n \in \N$, we denote by~$[n]$ the interval of integers~$\{1,\dots,n\}$.

\subsection{Graphs and relational structures}
For a graph $G = (V,E)$, we write~$V(G)=V$ and~$E(G)=E$ for its vertex and edge sets.
The set of neighbours of~$v \in V(G)$ is ~$N(v)$. For $X \subseteq V$, $G[X]$ is the subgraph induced by~$X$.
When~$<$ is a linear ordering of~$V(G)$, we denote by $\adj(G,<)$ the adjacency matrix with rows and columns are ordered by~$<$.
Similarly, if~$<_V$ and~$<_E$ are orderings of~$V$ and~$E$,
then $\inc(G,<_V,<_E)$ is the incidence matrix with rows and columns ordered by~$<_V$ and~$<_E$.

More generally, a \emph{binary relational structure} (or simply binary structure) $(V,E_1,\dots,E_\ell)$
consists of a vertex set~$V$, and a number of relations $E_i \subseteq V^2$.
It can be viewed as a directed graph with~$\ell$ kinds of edges,
where there may be several edges of different kinds between two vertices (but not multiple edges of the same kind).
For binary structures, each relation has its own adjacency matrix, denoted by~$\adj(E_i,<)$.
An \emph{ordered binary structure} is a binary structure in which one of the relations is a linear ordering of the vertex set, normally denoted by~$<$. The structure is thus of the form $(V,E_1,\dots,E_\ell,<)$.
A special case is \emph{ordered graphs}~$(V,E,<)$, where~$(V,E)$ is a graph and~$<$ a linear ordering of~$V$.

\subsection{Grids}
\label{subsec:grids}
The following notion of grids comes from~\cite{GuillemotMarx}.
Given a point~$p \in \R^2$, we write $\Xco(p),\Yco(p)$ for its $x$- and $y$-coordinates.
A \emph{$t$-grid} in the plane is a family of points $(p_{i,j})_{i,j \in [t]}$ satisfying the following for any $i,i',j,j' \in [t]$:
\begin{align*}
    & \text{if $i<i'$, then $\Xco(p_{i,j}) < \Xco(p_{i',j'})$}
    \qquad \text{and} \qquad \text{if $j<j'$, then $\Yco(p_{i,j}) < \Yco(p_{i',j'})$.}
\end{align*}
Observe then that there are intervals $X_1 < \dots < X_t$ and $Y_1 < \dots < Y_t$ such that $p_{i,j} \in X_i \times Y_j$.
Indeed, take~$X_i$ to be the smallest interval containing $\Xco(p_{i,j})$ for all~$j$, and similarly with~$Y_i$.

This extends to 0--1 matrices, interpreting a~`1' at position~$(i,j)$ in the matrix~$M$ as a point with coordinates~$(i,j)$.
We say that~$M$ contains a $t$-grid if the point set defined by the~`1's in~$M$ contains a $t$-grid in the previous sense.
Then there are partitions into intervals $\Rc = \{X_1 < \dots < X_t\}$ and $\Cc = \{Y_1 < \dots < Y_t\}$ of the rows and columns as above.
The ordering of rows and columns is crucial in this definition, which is why we make it explicit when defining an adjacency matrix~$\adj(G,<)$.

The Ramsey property for grids easily follows from the Ramsey theorem on bicliques:
\begin{theorem}[Ramsey theorem for grids]\label{thm:grid-ramsey}
    There is a function~$R(t,k)$ such that if~$\Pc$ is a point set containing a $R(t,k)$-grid,
    and $\lambda : \Pc \to [k]$ is any colouring, then~$\Pc$ contains $t$-grid consisting only of points of colour~$c$ for some~$c \in [k]$.
\end{theorem}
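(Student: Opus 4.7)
The plan is to reduce directly to the bipartite Ramsey theorem (for bicliques), which asserts that for every $t,k \in \N$ there is $N = N(t,k)$ such that every $k$-colouring of the edges of $K_{N,N}$ contains a monochromatic $K_{t,t}$. I would simply set $R(t,k) \eqdef N(t,k)$ and show that this value works.

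First, I would fix an $R$-grid $(p_{i,j})_{i,j \in [R]}$ contained in $\Pc$ together with the colouring $\lambda : \Pc \to [k]$. The natural encoding is to view the row indices and column indices of the grid as the two sides of a bipartite graph: consider $K_{R,R}$ with vertex classes $A = \{a_1,\dots,a_R\}$ and $B = \{b_1,\dots,b_R\}$, and define a $k$-edge-colouring $\chi$ by $\chi(a_ib_j) \eqdef \lambda(p_{i,j})$. Applying the bipartite Ramsey theorem to $\chi$ yields index sets $I, J \subseteq [R]$ with $|I| = |J| = t$ and a colour $c \in [k]$ such that $\lambda(p_{i,j}) = c$ for every $i \in I$ and every $j \in J$.

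The last step is to check that the restricted family $(p_{i,j})_{i \in I, j \in J}$ is still a $t$-grid once re-indexed by $[t]$ via the order-preserving bijections $[t] \to I$ and $[t] \to J$. This is immediate from the definition of a grid: the two monotonicity conditions on $\Xco$ and $\Yco$ along rows and columns are inherited from the original $R$-grid, as we have only restricted to subsets of the row and column indices and this preserves their linear order. By construction this $t$-grid is monochromatic of colour $c$.

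There is essentially no obstacle here: the lemma is really just the bipartite Ramsey theorem rephrased, with grid points playing the role of edges of $K_{R,R}$. The only point worth writing down carefully is that taking ordered subsets of rows and columns automatically preserves the grid structure, so no further work is needed after invoking bipartite Ramsey.
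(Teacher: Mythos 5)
Your proof is correct and follows essentially the same route as the paper's: encode grid points as edges of a biclique $K_{R,R}$ coloured by $\lambda$, apply bipartite Ramsey to extract a monochromatic $K_{t,t}$, and observe that the corresponding subfamily of points is again a $t$-grid. The only difference is that you spell out the final verification (that restricting to ordered subsets of rows and columns preserves the grid property) a little more explicitly, which the paper leaves implicit.
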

\begin{proof}
    Choose~$R(t,k)$ to be the bipartite Ramsey bound.
    For $T = R(t,k)$, consider a grid $(p_{i,j})_{i,j \in [T]}$.
    Construct a biclique with vertices $\{u_i\}_{i \in [T]},\{v_i\}_{i \in [T]}$, and assign to the edge~$u_iv_j$ the colour~$\lambda(p_{i,j})$.
    The bipartite Ramsey theorem gives a monochromatic sub-biclique with~$t$ vertices on each side, say of colour~$c$.
    The corresponding points form a $t$-grid of colour~$c$.
\end{proof}

When the $x$- and $y$-coordinates of the grid are indexed by the same set (e.g.\ in an adjacency matrix),
they can be assumed to come from disjoint intervals at a slight cost:
\begin{lemma}\label{lem:grid-disjoint}
    Any $2t$-grid~$\Pc$ contains a $t$-subgrid $\Pc' \subset \Pc$
    such that the sets $X \eqdef \Xco(\Pc')$ and $Y \eqdef \Yco(\Pc')$ of $x$- and $y$-coordinates satisfy either $X < Y$ or $Y < X$.
\end{lemma}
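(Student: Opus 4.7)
The plan is to split the $2t$-grid along its middle, and argue that one of the two ``halves'' along the $x$-axis must be separated from the complementary half of intervals along the $y$-axis. Let $X_1 < \dots < X_{2t}$ and $Y_1 < \dots < Y_{2t}$ be the intervals supplied by the definition of a $2t$-grid, so that $p_{i,j} \in X_i \times Y_j$. The crucial observation I will use (directly from the definition of a grid) is that both families are \emph{strictly} separated on their respective axes: $\max X_t < \min X_{t+1}$ and $\max Y_t < \min Y_{t+1}$.

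The core of the proof is the following dichotomy: either $\max X_t < \min Y_{t+1}$, or $\max Y_t < \min X_{t+1}$. Assuming this, in the first case I select $\Pc' \eqdef (p_{i,j})_{i \in [t],\, j \in \{t+1,\dots,2t\}}$; its $x$-coordinates lie in $X_1 \cup \dots \cup X_t \subseteq (-\infty, \max X_t]$, while its $y$-coordinates lie in $Y_{t+1} \cup \dots \cup Y_{2t} \subseteq [\min Y_{t+1}, +\infty)$, giving $X < Y$. The second case is symmetric, swapping the roles of rows and columns, and yields $Y < X$.

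It remains to establish the dichotomy, which I do by contradiction. Suppose both inequalities fail, so $\max X_t \geq \min Y_{t+1}$ and $\max Y_t \geq \min X_{t+1}$. Chaining through the strict separations $\max X_t < \min X_{t+1}$ and $\max Y_t < \min Y_{t+1}$ from the grid definition, I obtain
\[
    \max Y_t \;\geq\; \min X_{t+1} \;>\; \max X_t \;\geq\; \min Y_{t+1} \;>\; \max Y_t,
\]
an immediate contradiction. This closes the argument.

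The reasoning is essentially forced once one thinks of the $4t$ intervals on the real line, so I do not anticipate a real obstacle. The only point that requires a small amount of care is the use of strict (as opposed to weak) inequalities in the definition of a grid: these propagate through to give strict separation of consecutive $X$-intervals and of consecutive $Y$-intervals, which is exactly what makes the above cyclic chain collapse. Without strictness, one would only obtain $X \leq Y$ or $Y \leq X$, and the subgrids could still share a boundary coordinate.
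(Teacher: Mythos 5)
Your proof is correct and follows essentially the same approach as the paper's: split the $2t$-grid at the midpoint along each axis and take the quadrant whose $x$-range and $y$-range are separated. The only cosmetic difference is how the dichotomy is established: the paper directly compares $x \eqdef \max X_t$ and $y \eqdef \max Y_t$ and branches on $x \le y$ versus $x > y$, whereas you derive the required separation by contradiction from the two inequalities $\max X_t < \min X_{t+1}$ and $\max Y_t < \min Y_{t+1}$; the two case splits are equivalent (e.g.\ $\max X_t \le \max Y_t$ already forces $\max X_t < \min Y_{t+1}$), so this is the same argument presented slightly differently.
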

\begin{proof}
    Let $(p_{i,j})_{i,j \in [2t]}$ be the points of~$\Pc$.
    There exists an~$x$ such that $\Xco(p_{i,j}) \le x$ whenever~$i \le t$, and $\Xco(p_{i,j}) > x$ whenever $i > t$:
    it suffices to take~$x$ to be the maximum of~$\Xco(p_{i,j})$ over all $i \le t$ and $j \in [2t]$.
    Similarly, there is a~$y$ such that $\Yco(p_{i,j}) \le y$ if and only if $j \le t$.
    If $x \le y$, then we choose~$\Pc'$ to consist of~$p_{i,j}$ for~$i \le t$ and $j > t$,
    which clearly satisfies the desired condition.
    Symmetrically, if $x > y$, we restrict the grid to $i > t$ and $j \le t$.
\end{proof}

The following technical lemma relates grids in incidence and adjacency matrices.
\begin{lemma}\label{lem:incidence-matrix-grid}
    Let $E \subseteq V^2$ be a binary relation such that $\adj(E,<_V)$ has no $k$-grid.
    Let~$<_\lex$ be the lexicographic ordering of~$E$, i.e.\ $(x,y) <_\lex (x',y')$ if either~$x <_V x'$, or $x = x'$ and $y <_V y'$.
    Then $\inc(E,<_V,<_\lex)$ has no $4k$-grid.
\end{lemma}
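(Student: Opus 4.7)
The plan is to prove the contrapositive: given a $4k$-grid in $\inc(E,<_V,<_\lex)$, I extract a $k$-grid in $\adj(E,<_V)$. Write the $\inc$-grid with row intervals $V_1<\cdots<V_{4k}$ in $V$ and column intervals $F_1<_\lex\cdots<_\lex F_{4k}$ in $E$, with incidence $(v_{i,j},e_{i,j})$ in every cell, where $e_{i,j}=(x_{i,j},y_{i,j})\in F_j$ and $v_{i,j}\in V_i$ is an endpoint of $e_{i,j}$.

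The crucial structural input from the lex order is that for each $j$ the first coordinates of edges in $F_j$ lie in a $<_V$-interval $[a_j,b_j]$, and consecutive $[a_j,b_j]$ and $[a_{j+1},b_{j+1}]$ share at most one vertex of $V$. I will classify each cell as \emph{type A} (when $v_{i,j}$ is the first coordinate of $e_{i,j}$) or \emph{type B} (when it is the second), observing the asymmetry that type-A cells in column $j$ are confined to the contiguous row range $[i_j^-,i_j^+]:=\{i:V_i\cap[a_j,b_j]\ne\emptyset\}$, whereas type-B cells have no analogous positional constraint. The point of this classification is that a $k\times k$ all-type-B sub-rectangle of $\inc$-grid cells (rows $i_1<\cdots<i_k$, columns $j_1<\cdots<j_k$) directly produces the target $\adj$-grid: each such cell $(i_m,j_l)$ gives a $1$ of $\adj$ at a position $(x,y)\in[a_{j_l},b_{j_l}]\times V_{i_m}$, so one can use the $[a_{j_l},b_{j_l}]$ as the $\adj$-grid's row intervals and the $V_{i_m}$ as its column intervals (after a harmless shrinking to resolve the at-most-one-vertex overlaps between consecutive $[a_{j_l},b_{j_l}]$'s).

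The core step is thus to extract the $k\times k$ all-type-B rectangle, and I plan to do so by combining two pigeon-holes which together account for the factor of $4$. First, at most one column can be ``wide'' in the sense $|[i_j^-,i_j^+]|>2k$: two such columns would force more than $4k$ distinct $V_i$-indices to be covered in total (even allowing the one-vertex overlap of their $[a_j,b_j]$'s), exceeding the $4k$ available row intervals. Hence at least $4k-1$ columns are ``narrow'' with $|[i_j^-,i_j^+]|\le 2k$, and a pigeon-hole on whether the band midpoint $m_j$ is $\le 2k$ or $>2k$ picks out at least $2k$ narrow columns on one side; say $m_j\le 2k$. For these columns one has $i_j^+\le 3k-1$, so each of the $k+1$ rows with index in $\{3k,\ldots,4k\}$ lies outside the type-A band and is forced to contain a type-B cell. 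Selecting any $k$ of the $\ge 2k$ qualifying columns and any $k$ of the $k+1$ tail rows yields the desired all-type-B $k\times k$ rectangle.

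The main obstacle I expect is the ``at most one wide column'' observation, which requires care to quantify how two $V$-intervals, each covering more than $2k$ of the row intervals $V_i$ and only allowed to share a single vertex of $V$, must together exhaust the $4k$ available $V_i$'s. Once this is cleanly set up, the remainder of the argument is essentially bookkeeping, including the minor shrinking handling the single-vertex overlaps between consecutive $[a_{j_l},b_{j_l}]$'s in the final $\adj$-grid.
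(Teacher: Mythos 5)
Your route to the all-target-type block is a genuine alternative to the paper's (which passes to target-type via a $2\times 2$-block argument and then a pivot), and your wide/narrow-column analysis and midpoint pigeonhole are sound: at most one column can have $|[i_j^-,i_j^+]|>2k$, since two such row-ranges overlap in at most one index and would together cover more than $4k$ of the $4k$ row intervals. The gap is in the final step: a $k\times k$ all-target-type block does \emph{not} directly yield a $k$-grid in $\adj(E,<_V)$, and the ``harmless shrinking'' does not repair this. What you need, for $l<l'$ and all $m,m'$, is the strict inequality $x_{i_m,j_l}<_V x_{i_{m'},j_{l'}}$; the lex order only gives the non-strict one, and equality at the shared boundary vertex $b_{j_l}=a_{j_{l'}}$ is perfectly consistent with the remaining grid constraints whenever $m\le m'$, since the lex tiebreak $y_{i_m,j_l}<_V y_{i_{m'},j_{l'}}$ is compatible with $y_{i_m,j_l}\in V_{i_m}$ and $y_{i_{m'},j_{l'}}\in V_{i_{m'}}$. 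Shrinking the intervals $[a_{j_l},b_{j_l}]$ to be disjoint does not move the $1$-entries of the matrix, so a source sitting exactly at the shared vertex would then fall outside its assigned interval and the points no longer form a grid.

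The missing ingredient is precisely the pivot argument the paper uses. Among your $\ge 2k$ qualifying columns keep only the odd-indexed ones ($k$ columns) and reserve the even-indexed ones as pivots; drop one tail row, say $i_0=3k$, and keep the remaining $k$. For $l<l'$ the pivot cell $(i_0,j_{2l})$ is still target-type, and the $\inc$-grid gives $e_{i_m,j_{2l-1}}<_\lex e_{i_0,j_{2l}}$ while $y_{i_m,j_{2l-1}}>_V y_{i_0,j_{2l}}$ (since $i_m>i_0$), so the lex comparison must be decided on the first coordinate: $x_{i_m,j_{2l-1}}<_V x_{i_0,j_{2l}}\le_V x_{i_{m'},j_{2l'-1}}$. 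Your budget of $\ge 2k$ columns and $k+1$ tail rows supports this halving-and-dropping exactly, so the fix costs you nothing, but it must be done explicitly.
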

\begin{proof}
    For a directed edge $(x,y) \in E$, we write $s(e) = x$ and $t(e) = y$ for its source and target.
    Assume that the $Inc(E,<_V,<_\lex)$ contains an $\ell$-grid $(p_{i,j})_{i,j \in [\ell]}$.
    Each~$p_{i,j}$ is a~`1' in the matrix, corresponding to a vertex~$v_{i,j}$ and an incident edge~$e_{i,j}$. The definition of $\ell$-grid gives the following inequalities for any $i,i',j,j'$:
    \begin{align}
        \label{eq:ord-vtx} i<i' \quad & \text{implies} \quad v_{i,j} <_V v_{i',j'}, \text{ and}  \\
        \label{eq:ord-edge} j<j' \quad & \text{implies} \quad e_{i,j} <_\lex e_{i',j'}.
    \end{align}
    By definition of the lexicographic ordering, condition~\eqref{eq:ord-edge} also gives 
    \begin{equation}
        \label{eq:ord-source}
        j<j' \quad \text{implies} \quad s(e_{i,j}) \le_V s(e_{i',j'}).
    \end{equation}
    Say that the pair~$(i,j)$ is of \emph{source type} if $v_{i,j} = s(e_{i,j})$, and otherwise of \emph{target type}, meaning $v_{i,j} = t(e_{i,j})$.
    Observe that~$(i+1,j)$ and~$(i,j+1)$ cannot both be of source type, as this would imply both $v_{i,j+1} <_V v_{i+1,j}$ by~\eqref{eq:ord-vtx}, and $v_{i+1,j} \le_V v_{i,j+1}$ by~\eqref{eq:ord-source}.
    By grouping the pairs~$(i,j)$ into $2$-by-$2$ squares $(2i,2j),(2i,2j-1),(2i-1,2j),(2i-1,2j-1)$
    and keeping only one pair of target type in each, we obtain a new $\floor{\ell/2}$-grid in which all pairs~$(i,j)$ are of target type.
    Thus explicitly, we have edges~$f_{i,j}$ for~$i,j < \ell/2$ satisfying for any~$i,i',j,j'$ that
    \begin{align}
        \label{eq:ord-tgt} i<i' \quad & \text{implies} \quad t(f_{i,j}) <_V t(f_{i',j'}), \text{ and}  \\
        \label{eq:ord-edge2} j<j' \quad & \text{implies} \quad f_{i,j} <_\lex f_{i',j'}.
    \end{align}
    We claim that for~$j+1<j'$ and $i>1$, we also have $s(f_{i,j}) <_V s(f_{i',j'})$.
    Indeed, we have
    \[ f_{i,j} <_\lex f_{1,j+1} <_\lex f_{i',j'}, \]
    which translates into the corresponding non-strict inequalities on the respective source vertices.
    But on the other hand,~\eqref{eq:ord-tgt} gives $t(f_{i,j}) >_V t(f_{1,j+1})$.
    Then, for $f_{i,j} <_\lex f_{1,j+1}$ to hold, it must be that $s(f_{i,j}) <_V s(f_{1,j+1})$.
    Thus, when considering edges~$f_{i,j}$ only for~$i>1$ and odd values of~$j$,
    we have that the order of sources~$s(f_{i,j})$ coincides with the ordering of $j$-indices,
    and the ordering of targets~$t(f_{i,j})$ with that of $i$-indices.
    This yields an $\floor{\ell/4}$-grid in~$\adj(E,<_V)$, a contradiction.
\end{proof}

\subsection{Twin-width}\label{sec:prelim:tww}
We recall the definition of the twin-width~$\tww(G)$ of a graph~$G$ from~\cite{Bonnet2022twinwidth1}.

In a graph~$G$, two sets of vertices~$X,Y$ are \emph{non-homogeneous} if there exist both an edge and a non-edge between them.
Given a partition~$\Pc$ of the vertices of~$G$, the \emph{error graph} of~$\Pc$ is the graph
whose vertices are parts of~$\Pc$, and in which two parts are adjacent if they are non-homogeneous.
The \emph{error degree} of~$\Pc$ is the maximum degree of this error graph.

A \emph{contraction sequence} for~$G$ is a sequence $\Pc_n,\dots,\Pc_1$ of partitions of~$V(G)$,
where~$\Pc_n$ is the partition into singletons, $\Pc_1$ is the partition in a single part,
and~$\Pc_{i-1}$ is obtained by merging two parts of~$\Pc_i$.
Its \emph{width} is the maximum error degree of $\Pc_n,\dots,\Pc_1$.
Finally, the \emph{twin-width}~$\tww(G)$ is the minimum width of a contraction sequence for~$G$.

More generally, in a binary relational structure $S = (V,E_1,\dots,E_\ell)$,
two sets of vertices~$X,Y$ are non-homogeneous if there is some relation~$E_i$,
and some vertices $x_1,x_2 \in X$, $y_1,y_2 \in Y$ satisfying $x_1y_1 \in E_i$ and $x_2y_2 \not\in E_i$,
or symmetrically $y_1x_1 \in E_i$ and $y_2x_2 \not\in E_i$.
For instance, in an ordered graph~$(V,E,<)$, two sets~$X,Y$ are homogeneous if and only if
(1) they are homogeneous for~$E$ (i.e.\ fully adjacent or fully non-adjacent), and (2) either $X < Y$ or $Y < X$.
Then, the error graph of a partition~$\Pc$ is again defined as having edges between all pairs of non-homogeneous parts,
allowing to define twin-width for binary structures.

The fundamental characterisation of twin-width is that it corresponds to ordered graphs whose adjacency matrix excludes some high-rank grid-like structures~\cite[Theorem 7]{twin-width4}.
This characterisation is furthermore efficient.
These high-rank grids contain our notion of grid, thus this characterisation implies the following.
\begin{theorem}[\cite{Bonnet2022twinwidth1,twin-width4}]\label{thm:grid-tww}
    There are functions~$f,g$ and an algorithm which, given~$(V,E,<)$ an ordered graph
    whose matrix~$\adj(E,<)$ has no $k$-grid, computes a contraction sequence witnessing that $\tww(V,E,<) \le f(k)$ in time $g(k) \cdot |V|^{O(1)}$.
\end{theorem}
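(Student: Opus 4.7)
The plan is to invoke the known algorithmic characterisation of bounded twin-width for ordered graphs (equivalently, 0--1 matrices) developed in \cite{Bonnet2022twinwidth1,twin-width4}, and then to verify that the combinatorial obstructions appearing there (often called \emph{$k$-mixed minors} or \emph{rich divisions}) are at least as strong as our notion of $k$-grid, so that forbidding a $k$-grid forbids such a mixed minor too.

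First, I would recall the relevant algorithmic statement from \cite{Bonnet2022twinwidth1,twin-width4}: there exist computable functions $f',g'$ and an algorithm which, given an ordered graph $(V,E,<)$ and an integer $k$, runs in time $g'(k) \cdot |V|^{O(1)}$ and either outputs a contraction sequence of width at most $f'(k)$ for $(V,E,<)$, or returns a $k$-mixed minor of $\adj(E,<)$. Here a $k$-mixed minor consists of a partition of the rows of the matrix into $k$ consecutive intervals $R_1 < \dots < R_k$ and of the columns into $k$ consecutive intervals $C_1 < \dots < C_k$ such that each block $R_i \times C_j$ is horizontally and vertically non-constant; in particular, each block contains at least one~$1$ entry.

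Second, I would verify that the existence of a $k$-mixed minor implies the existence of a $k$-grid in the sense of \cref{subsec:grids}. For each block $R_i \times C_j$, pick a $1$-entry and record its position as $p_{i,j}$, taking the column index as $\Xco(p_{i,j})$ and the row index as $\Yco(p_{i,j})$. Since the intervals $R_i$ and $C_j$ are pairwise disjoint and linearly ordered, the family $(p_{i,j})_{i,j\in[k]}$ automatically satisfies the strict monotonicity required in the definition of a $k$-grid. Consequently, if $\adj(E,<)$ has no $k$-grid, then it has no $k$-mixed minor either, and the algorithm from the previous step must return a contraction sequence witnessing $\tww(V,E,<) \leq f(k) \eqdef f'(k)$ in time $g(k) \cdot |V|^{O(1)}$ for $g \eqdef g'$.

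The main task here is bookkeeping rather than new mathematics: one needs to locate the precise algorithmic statement in the twin-width literature (the exact forbidden substructure and the exact dependence on $k$ vary between \cite{Bonnet2022twinwidth1} and \cite{twin-width4}) and to check that the chosen forbidden substructure subsumes the ``geometric'' grid used in this paper. Since analogous reductions are already carried out in \cite{bonnet2022delineation,hlineny2022geometric}, I expect no conceptual difficulty beyond this comparison of definitions.
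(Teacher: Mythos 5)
Your proposal is correct and follows essentially the same route the paper takes: it invokes the algorithmic ``compute a contraction sequence or find a forbidden substructure'' dichotomy from the twin-width literature and then observes that the forbidden substructure (you use $k$-mixed minors from~\cite{Bonnet2022twinwidth1}; the paper's remark points to the high-rank divisions of~\cite[Theorem~7]{twin-width4}) necessarily contains a $k$-grid in the sense of \cref{subsec:grids}, since every cell of such a division is non-constant and hence contains a $1$-entry, and picking one $1$ per cell yields the required monotone family of points. The paper itself gives only a brief remark to this effect rather than a detailed proof, so your elaboration is a faithful unpacking of it.
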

This characterisation also implies that ordered graphs of high twin-width have a Ramsey-like property:
\begin{theorem}[\cite{twin-width4}]\label{thm:ordered-tww-ramsey}
    There is a function~$R(k,\ell)$ such that if $(V,E,<)$ is an ordered graph with $E = E_1 \cup \dots \cup E_m$, and $\tww(V,E_i,<) \le k$ for all~$i$, then $\tww(V,E,<) \le R(k,\ell)$.
\end{theorem}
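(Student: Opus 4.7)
The plan is to deduce this result from the grid characterisation of twin-width (\cref{thm:grid-tww}) combined with the Ramsey theorem for grids (\cref{thm:grid-ramsey}), viewing each relation~$E_i$ as contributing a colour class to a large grid in the union matrix.

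First, I would use the equivalence underlying \cref{thm:grid-tww}: bounded twin-width for ordered graphs is equivalent to the absence of large grids in the adjacency matrix. Since $\tww(V,E_i,<) \le k$ for every~$i$, there is some $t = t(k)$ such that for each~$i$, the matrix $\adj(E_i,<)$ does \emph{not} contain a $t$-grid (this backward direction is the well-known lower bound witnessing that grids themselves are hard ordered structures; it is part of the characterisation proved in \cite{twin-width4}).

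Second, suppose that the union matrix $\adj(E,<)$ contains a $T$-grid $(p_{i,j})_{i,j \in [T]}$ for $T \eqdef R(t,\ell)$, where~$R$ is the function of \cref{thm:grid-ramsey}. Since $E = E_1 \cup \dots \cup E_\ell$, each point~$p_{i,j}$ of this grid lies in some relation~$E_{c(i,j)}$, which yields a colouring $c : \{p_{i,j}\} \to [\ell]$. Applying \cref{thm:grid-ramsey}, we obtain a monochromatic $t$-sub-grid of colour, say,~$i_0 \in [\ell]$. By definition, this is a $t$-grid in the matrix $\adj(E_{i_0},<)$, contradicting the first step. Hence $\adj(E,<)$ contains no $R(t,\ell)$-grid.

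Finally, applying the forward direction of \cref{thm:grid-tww} to the ordered graph $(V,E,<)$ yields $\tww(V,E,<) \le f(R(t,\ell))$, which provides the desired Ramsey bound $R(k,\ell) \eqdef f(R(t(k),\ell))$. The only non-routine ingredient is the backward direction of the grid characterisation invoked in the first step; this is not stated as a standalone implication in the excerpt but follows from the full version of the characterisation theorem of~\cite{twin-width4}, so the main care needed is to justify (or cite) this direction precisely rather than merely the one-sided version in \cref{thm:grid-tww}.
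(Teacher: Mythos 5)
Your first step contains a genuine error. You claim that $\tww(V,E_i,<) \le k$ implies $\adj(E_i,<)$ has no $t$-grid for some $t=t(k)$, invoking a ``backward direction'' of the grid characterisation. But with the paper's notion of grid (a $t\times t$ family of $1$-entries whose coordinates are monotone in the indices, \cref{subsec:grids}), this implication is \emph{false}. The half-graph with its natural vertex order has twin-width~$1$, yet its adjacency matrix is the full upper-triangular matrix, which trivially contains $t$-grids for every~$t$. \Cref{thm:grid-tww} is deliberately stated in one direction only: no $k$-grid implies bounded twin-width, and the surrounding prose makes clear that the true obstruction in \cite{twin-width4} is a \emph{high-rank} grid-like structure (a rank/rich division), of which simple monotone grids are only a degenerate special case.

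Consequently, both ingredients of your argument point to the wrong objects. To run your plan correctly, ``no $t$-grid'' in both steps must be replaced by ``no large rank division.'' The backward implication (bounded ordered twin-width $\Rightarrow$ no large rank division) is indeed part of the full characterisation in \cite[Theorems~7 and~23]{twin-width4}, and the Ramsey step must use the corresponding Ramsey property for rank divisions, which is \cite[Lemma~24]{twin-width4}, not the simple \cref{thm:grid-ramsey}: if a point of the division has its $1$'s split among $\ell$ colour classes, a submatrix of a mixed or high-rank zone need not be mixed or high-rank, so the elementary bipartite Ramsey argument of \cref{thm:grid-ramsey} does not carry over. This is exactly the combination the paper cites in its one-line proof. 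Your overall strategy (obstruction-freeness for each $E_i$, Ramsey on colours, conclude for the union) is the right shape, but it needs the rank-division machinery, not the monotone-grid one.
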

\begin{proof}
    This follows from the characterisation of bounded twin-width for ordered graphs by excluding so-called high-rank divisions \cite[Theorems~7 and~23]{twin-width4}, and the Ramsey property for these high-rank divisions \cite[Lemma~24]{twin-width4}.
\end{proof}

It is easy to observe that the twin-width of a graph is the maximum twin-width of its connected components.
We will use the following variant of this result for ordered graphs.
\begin{lemma}\label{lem:ordered-components}
    Let~$G = (V,E,<)$ be an ordered graph, with~$V$ partitioned into intervals $A_1 < \dots < A_m < B_1 < \dots < B_m$, such that
    all edges are between~$A_i$ and~$B_i$ for some~$i$, and the subgraphs $(A_i \cup B_i, E, <)$ have twin-width at most~$k$ for all~$i$.
    Then $\tww(G) \le 2k+2$.
\end{lemma}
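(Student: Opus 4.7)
The plan is to construct a contraction sequence for~$G$ in two phases. In Phase~1 we reduce each component $A_i \cup B_i$ to the two-part partition $\{A_i, B_i\}$ using the given width-$k$ sequences $\sigma_i$ for the induced subgraphs; in Phase~2 we collapse the resulting $2m$ intervals to a single part by pairwise merging in a balanced binary-tree fashion.

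For Phase~1, from each $\sigma_i = (\Pc^i_{n_i}, \ldots, \Pc^i_1)$ I would derive a modified sequence $\widetilde\sigma_i$ whose partitions never contain a \emph{mixing} part, i.e.\ a part with vertices in both $A_i$ and $B_i$. Concretely, define $\widetilde{\Pc}^i_t$ from $\Pc^i_t$ by splitting every mixing $P$ into its halves $P \cap A_i$ and $P \cap B_i$: a non-mixing merge of $\sigma_i$ induces a single merge in $\widetilde\sigma_i$, and a mixing merge induces two consecutive merges, one on each side. The sequence $\widetilde\sigma_i$ terminates in $\{A_i, B_i\}$. The key calculation is to show that its width in $(A_i \cup B_i, E, <)$ is at most $2k+1$: given $P' \in \widetilde{\Pc}^i_t$ and $Q'$ non-homogeneous with $P'$, let $P, Q \in \Pc^i_t$ be the unique parts containing $P'$ and $Q'$. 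A case analysis on order- vs.\ $E$-non-homogeneity and on the location of $P', Q'$ in $A_i$ or $B_i$ shows that either (a) $Q \ne P$ and $P, Q$ are non-homogeneous in $\sigma_i$, or (b) $Q = P$ is mixing and $Q'$ is the ``partner'' half of $P'$. Since each non-homogeneous $Q$ in $\sigma_i$ produces at most two halves, case~(a) contributes at most $2k$ to the error count and case~(b) contributes at most $+1$.

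Running the $\widetilde\sigma_i$'s in parallel inside~$G$ preserves this width. At every step, each current part is contained in a single $A_i$ or a single $B_i$, hence lies in one of the intervals of~$V$ from the hypothesis. Parts from distinct components $i \ne j$ therefore sit in disjoint intervals of~$V$ and are order-homogeneous, and have no $E$-edges between them, so the cross-component contribution to every error degree vanishes. Phase~1 thus has width $\le 2k+1$ in~$G$, ending at the partition $\{A_1, B_1, \ldots, A_m, B_m\}$.

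For Phase~2, every part is now an interval of~$V$ and the only possible non-homogeneities are $A_i$--$B_i$ pairs. I iterate the following round: pair consecutive $A$-blocks $A_{S_1}, A_{S_2}$ and merge each pair into $A_{S_1 \cup S_2}$, then do the same on the $B$-side. The merged $A_{S_1 \cup S_2}$ is still an interval of~$V$, so it remains order-homogeneous with everything, and in~$E$ it is non-homogeneous only with those $B$-blocks sharing a component index, namely $B_{S_1}$ and $B_{S_2}$. Thus the error degree of the new $A$-part rises to at most~$2$ after the $A$-merge and drops back to~$1$ after the matching $B$-merge, so Phase~2 has width $\le 2$. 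Combining, $\tww(G) \le \max(2k+1, 2) \le 2k + 2$. The main obstacle is the $2k+1$ bound for $\widetilde\sigma_i$ in Phase~1: one must verify with care that splitting a mixing $Q$ in $\sigma_i$ can at most double the non-homogeneity count contributed against a fixed $P'$, and that the only \emph{new} non-homogeneity created by splitting within a single mixing part~$P$ is the one between its two halves.
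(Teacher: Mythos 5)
Your plan matches the paper's proof in structure: Phase~1 desynchronises each local contraction sequence so that $A_i$ and $B_i$ are never merged (ending at the bipartition $\{A_i, B_i\}$), replicates these in~$G$ using the disjointness of intervals and absence of cross-component edges, and Phase~2 pairs up the resulting $A$- and $B$-intervals with error degree at most~2. The only methodological difference is that the paper invokes \cite[Theorem~4.1]{Bonnet2022twinwidth1} as a black box to obtain a width-$(2k+2)$ contraction sequence respecting the bipartition $\{A_i,B_i\}$, whereas you re-derive it by splitting each mixing part into its two halves.

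There is, however, a quantitative gap in your Phase~1 estimate. Your case analysis correctly bounds the error degree of each $\widetilde{\Pc}^i_t$ by $2k+1$, but a contraction sequence is a chain of \emph{single} merges, and when a step of $\sigma_i$ merges two mixing parts $P_1, P_2$ you must perform two consecutive merges and also control the intermediate partition. After merging the $A$-halves (say), the new part $(P_1\cup P_2)\cap A_i$ can be non-homogeneous with up to $2k$ halves of parts $Q\neq P_1\cup P_2$ that are non-homogeneous with $P_1\cup P_2$ in~$\Pc^i_{t-1}$, \emph{and} simultaneously with both partners $P_1\cap B_i$ and $P_2\cap B_i$, giving error degree up to $2k+2$ rather than $2k+1$. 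This is precisely the constant the paper's cited theorem produces. Since Phase~2 contributes only~2, the final bound $\max(2k+2,2)=2k+2$ still matches the lemma, so the fix is simply to replace $2k+1$ by $2k+2$ in the analysis of~$\widetilde\sigma_i$ and to note explicitly that the intermediate single-merge steps also need to be checked.
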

\begin{proof}
    Call~$G_i$ the ordered subgraph induced by $A_i \cup B_i$, satisfying $\tww(G_i) \le k$ by assumption.
    By \cite[Theorem~4.1]{Bonnet2022twinwidth1}, there is a contraction sequence for~$G_i$ of width at most~$2k+2$
    in which vertices of~$A_i$ are never contracted with those of~$B_i$
    (such a contraction sequence ends with the bipartition $\{A_i,B_i\}$, instead of the trivial partition $\{A_i \cup B_i\}$ as it normally should).
    This contraction sequence can be replicated in~$G$,
    and since~$A_i,B_i$ are intervals in~$<$ and have no edge to the rest of the graph, this does not result in any additional edge in the error graph.

    Applying this argument for each~$i$, we obtain a partial contraction sequence of width at most~$2k+2$, ending with the partition $\{A_1,\dots,A_m,B_1,\dots,B_m\}$.
    One can complete this contraction sequence by merging the parts~$A_1,A_2$ immediately followed by~$B_1,B_2$, and repeating similarly until only two parts $\bigcup_i A_i$ and $\bigcup_i B_i$ remain, which can be merged together.
    Throughout this second phase of the merge sequence, the error degree never exceeds~2.
\end{proof}

\subsection{First-order logic}\label{sec:prelim:FO}
We present first-order (FO) logic and transductions restricted to \emph{binary} relational structures, as this is sufficient for our purposes.
A \emph{relational signature} is a set $\Sigma = \{E_1,\dots,E_\ell\}$ of \emph{relation symbols}.
A \emph{$\Sigma$-structure}~$S$ is a relational structure over this signature,
i.e.\ a structure with vertex set~$V(S)$ and an edge set~$E_i(S)$ for each symbol~$E_i \in \Sigma$.

For a signature~$\Sigma$, one defines \emph{FO formulas}~$\phi$ over~$\Sigma$.
They consist of quantifiers over vertices $\forall x,\phi$, $\exists x,\phi$, boolean operators~$\lor,\land,\lnot$,
predicates~$E_i(x,y)$ testing whether~$xy$ is an edge in~$E_i$, as well as the equality predicate~$x=y$.
For example, the formula
\[ \phi(x,y) = E(x,y) \lor \exists z.\ (E(x,z) \land E(z,y)) \]
expresses that~$x,y$ are at distance at most~two.
When the formula~$\phi$ has free variables, they are written as arguments of~$\phi$ as above.
A formula without free variables is called a \emph{sentence}.
If $\phi(x_1,\dots,x_r)$ is a formula over~$\Sigma$, $S$ is a $\Sigma$-structure, and $v_1,\dots,v_r \in V(S)$ are vertices,
then $S \models \phi(v_1,\dots,v_r)$ denotes that~$\phi$ is \emph{satisfied} by~$S$, with the variables $x_1,\dots,x_r$ interpreted as $v_1,\dots,v_r$ respectively.

The FPT model checking algorithm for twin-width shows the following.
\begin{theorem}[{\cite[Theorem~1]{Bonnet2022twinwidth1}}]\label{thm:model-checking}
    Given a graph~$G$, a contraction sequence of width~$k$ for~$G$, and a sentence~$\phi$,
    one can test whether $G \models \phi$ in time $f(k,|\phi|) \cdot |V(G)|$.
\end{theorem}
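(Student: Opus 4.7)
The plan is to use a dynamic programming along the contraction sequence $\Pc_n, \dots, \Pc_1$. Let $q = |\phi|$ be the quantifier rank of $\phi$. For each partition $\Pc_i$, associate the \emph{trigraph} on $\Pc_i$: between two parts we put a \emph{red} edge if they are non-homogeneous, a \emph{black} edge if fully adjacent, and no edge otherwise. By the hypothesis on the contraction sequence, every part has at most $k$ red neighbours.

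To each part $P \in \Pc_i$ I would attach a \emph{type} $\tau_q(P)$ which encodes the rank-$q$ FO information seen from a vertex of $P$, relative to the current trigraph. The key observation is that the adjacency of any vertex $v \in P$ with an arbitrary other part $Q$ is entirely determined by the colour of the $PQ$ edge in the trigraph, \emph{except} for the at most $k$ red neighbours of $P$, where vertex-level information is genuinely needed. Using an Ehrenfeucht--Fraïssé style argument by induction on $q$, this yields a bound $h(k,q)$ on the number of distinct types arising in any $\Pc_i$: each quantifier $\exists x$ is eliminated by a joint choice of a part $Q$ containing $x$ and of a type for $x$ inside $Q$, and the bounded red degree ensures that only boundedly many configurations of the red neighbourhood need to be considered (up to symmetry).

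The algorithm maintains the assignment $P \mapsto \tau_q(P)$ along the sequence $\Pc_n, \dots, \Pc_1$. At $\Pc_n$ (singletons), all trigraph edges are black or absent according to the true adjacency, and the types can be initialised from purely local information in time $O(k)$ per vertex. When $\Pc_{i+1}$ is contracted into $\Pc_i$ by merging parts $P', P''$ into a new part $P$, only the types of $P$ itself and of the at most $2k$ former red neighbours of $P'$ or $P''$ can change; each updated type is recomputed in time depending only on $k$ and $q$ from the types of its red neighbours and the inductive composition rule for FO types. Over the $n-1$ merge operations this gives the claimed $f(k,|\phi|) \cdot |V(G)|$ running time. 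At the final partition $\Pc_1$ (single part), the type of the unique remaining part directly encodes the truth value of $\phi$ in $G$.

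The main obstacle will be proving that this notion of type is \emph{compositional} under merging, i.e.\ that $\tau_q(P)$ for the merged part can be computed from the types of $P', P''$ and of their red neighbours in $\Pc_{i+1}$, while simultaneously establishing the uniform bound $h(k,q)$ on the number of types. The central combinatorial ingredient is the bounded red-degree hypothesis, which prevents types from proliferating when quantifiers are eliminated: without it, the behaviour of a vertex $v \in P$ relative to arbitrarily many non-red-homogeneous parts could no longer be summarised by any finite type. Once this compositionality lemma is in place, the remainder reduces to routine dynamic programming bookkeeping along the contraction sequence.
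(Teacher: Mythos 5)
This statement is not proved in the paper you are reading; it is imported wholesale as Theorem~1 of~\cite{Bonnet2022twinwidth1}, and the paper uses it as a black box. So the comparison to make is with the proof in that reference.

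Your reconstruction captures the correct high-level strategy of that proof: dynamic programming along the contraction sequence, a bounded amount of ``type'' information attached to each part of the current partition, local updates exploiting bounded red degree, and reading off the truth value at the final single part. You also correctly identify the crux, namely a compositionality-plus-boundedness lemma for types under merges. That is indeed where essentially all the work lies in~\cite{Bonnet2022twinwidth1}, where the relevant objects are the \emph{reduced $\ell$-morphism trees} (equivalently, \emph{$\ell$-shuffles}) attached to red-connected groups of parts; the combinatorial heart is that the number of reduced $\ell$-morphism trees over a red-connected set of parts of red degree at most $d$ is bounded by a function of $d$ and $\ell$, and that such trees compose under contraction. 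Two technical points where your sketch is looser than the actual argument: first, the type information must really be about \emph{tuples} of vertices drawn from a red-connected neighbourhood of a part, not just ``rank-$q$ information seen from a vertex of $P$''; tracking only per-part, per-vertex information is not enough to handle the interplay across red edges when several quantifiers land in nearby parts. Second, the set of parts whose types must be recomputed after a merge is not just the immediate red neighbours of the two merged parts, but the parts within a red ball of radius depending on the quantifier rank; since the red graph has degree at most $k$, this ball still has size bounded by a function of $k$ and $q$, so the $f(k,|\phi|)\cdot n$ bound survives, but the radius cannot be taken to be $1$. With those refinements your outline matches the structure of the cited proof; as written it is a correct sketch that leaves the key lemma (which you flag explicitly) unproved.
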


Given two signatures~$\Sigma,\Gamma$, an \emph{FO interpretation}~$\Phi$ from~$\Sigma$ to~$\Gamma$
is a map, defined in FO logic, from $\Sigma$-structures to $\Gamma$-structures.
Precisely, $\Phi$ is described by giving
\begin{enumerate}
    \item for each symbol~$E_i \in \Gamma$, an FO formula~$\phi_{E_i}(x,y)$ over the language of~$\Sigma$, and
    \item one last FO formula~$\phi_V(x)$ again over~$\Sigma$.
\end{enumerate} 
Given a $\Sigma$-structure~$S$, its image~$\Phi(S)$ is the relational structure where the vertex set is restricted to vertices satisfying~$\phi_V$, i.e.\
\[ V(\Phi(S)) = \{x \in V(S) \ : \ S \models \phi_V(x) \}, \]
and for each symbol~$E_i \in \Gamma$, the edge set~$E_i(\Phi(S))$ is described by~$\phi_{E_i}$, i.e.\
\[ E_i(\Phi(S)) = \{xy \ : \ x,y \in V(\Phi(S)) \text{ and } S \models \phi_{E_i}(x,y) \}. \]

\emph{Transductions} are a non-deterministic generalisation of interpretations.
That is, a~$\Sigma$-to-$\Gamma$ transduction outputs a set of $\Gamma$-structures, rather than a single one.
Let~$\Sigma$ be a signature, and $U_1,\dots,U_k$ be~$k$ unary relation symbols (i.e.\ of arity~1), disjoint from~$\Sigma$.
The $k$-colouring is the one-to-many operation which maps a $\Sigma$-structure $S$
to all possible extensions of~$S$ as $(\Sigma \uplus \{U_1,\dots,U_k\})$-structures~$S^+$,
meaning that~$V(S) = V(S^+)$ and~$R(S) = R(S^+)$ for any~$R \in \Sigma$,
while the~$U_i(S^+)$ are chosen as arbitrary subsets of~$V(S^+)$.
An \emph{FO transduction} is the composition of a $k$-colouring (for fixed~$k$), followed by an FO interpretation.

It is folklore that interpretations and transductions can be composed.
\begin{lemma}
    \label{lem:transduction-comp}
    If~$\Phi,\Psi$ are FO transductions (resp.\ interpretations) from~$\Sigma$ to~$\Gamma$ and~$\Gamma$ to~$\Delta$ respectively,
    then the composition~$\Psi \circ \Phi$ is an FO transduction (interpretation) from~$\Sigma$ to~$\Delta$.
\end{lemma}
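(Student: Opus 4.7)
The plan is to handle the interpretation case first by a syntactic substitution argument, then reduce the transduction case to the interpretation case by ``pushing colourings to the front''.

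Suppose $\Phi$ is an interpretation from $\Sigma$ to $\Gamma$ with domain formula $\phi_V(x)$ and edge formulas $\phi_{E_i}(x,y)$ for each $E_i \in \Gamma$, and $\Psi$ is an interpretation from $\Gamma$ to $\Delta$ with domain formula $\psi_V(x)$ and edge formulas $\psi_{F_j}(x,y)$ for each $F_j \in \Delta$. I would define an interpretation $\Psi \circ \Phi$ from $\Sigma$ to $\Delta$ as follows. Given any FO formula $\theta$ over $\Gamma$, define its \emph{relativisation} $\theta^*$ to be the formula over $\Sigma$ obtained by replacing every atomic predicate $E_i(x,y)$ with $\phi_{E_i}(x,y)$, and replacing every quantifier $\exists x\,\theta'$ by $\exists x\,(\phi_V(x) \land {\theta'}^*)$ and $\forall x\,\theta'$ by $\forall x\,(\phi_V(x) \to {\theta'}^*)$. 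Then set the domain formula of $\Psi \circ \Phi$ to be $\phi_V(x) \land \psi_V^*(x)$ and its edge formulas to be $\phi_V(x) \land \phi_V(y) \land \psi_{F_j}^*(x,y)$. A routine induction on the formula $\theta$ shows that for any $\Sigma$-structure $S$ and any tuple $\bar v$ of vertices of $\Phi(S)$, one has $\Phi(S) \models \theta(\bar v)$ iff $S \models \theta^*(\bar v)$; applying this to $\psi_V$ and to each $\psi_{F_j}$ yields $(\Psi \circ \Phi)(S) = \Psi(\Phi(S))$.

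For transductions, write $\Phi = \Phi_I \circ C_k$ and $\Psi = \Psi_I \circ C_\ell$, where $C_k, C_\ell$ are $k$- and $\ell$-colourings and $\Phi_I, \Psi_I$ are interpretations. The key observation is that a colouring can be pushed through an interpretation. Namely, let $\widetilde{\Phi}_I$ denote the interpretation from $\Sigma \uplus \{U_1,\dots,U_\ell\}$ to $\Gamma \uplus \{U_1,\dots,U_\ell\}$ which extends $\Phi_I$ by defining each new unary predicate by the formula $U_i(x)$ itself (and using the same domain formula $\phi_V$). Then $C_\ell \circ \Phi_I = \widetilde{\Phi}_I \circ C_\ell$: indeed, starting from a $\Sigma$-structure $S$, adding arbitrary subsets $U_1,\dots,U_\ell$ of $V(S)$ and then applying $\widetilde{\Phi}_I$ produces exactly the structures obtained by first applying $\Phi_I$ and then colouring the resulting vertex set (which is a subset of $V(S)$) by arbitrary subsets.

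Combining these observations gives
\[
    \Psi \circ \Phi = \Psi_I \circ C_\ell \circ \Phi_I \circ C_k = \Psi_I \circ \widetilde{\Phi}_I \circ C_\ell \circ C_k.
\]
Here $C_\ell \circ C_k$ is a $(k+\ell)$-colouring, and $\Psi_I \circ \widetilde{\Phi}_I$ is an interpretation by the first part of the proof. Hence $\Psi \circ \Phi$ is an FO transduction, as required. The only mildly delicate step is checking that the syntactic relativisation is correctly defined on nested quantifiers and free variables; everything else is a direct unfolding of the definitions.
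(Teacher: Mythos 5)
The paper does not give a proof of this lemma; it is stated as folklore. Your proof is correct and follows the standard argument: relativising the formulas of $\Psi$ through those of $\Phi$ for the interpretation case, and commuting the colouring past the interpretation ($C_\ell \circ \Phi_I = \widetilde{\Phi}_I \circ C_\ell$) to collapse two colour steps into one for the transduction case. The one minor notational slip is in describing the signature of $\widetilde{\Phi}_I$: since $\Phi_I$ already reads from $\Sigma \uplus \{U_1,\dots,U_k\}$, the extended interpretation $\widetilde{\Phi}_I$ should map $\Sigma \uplus \{U_1,\dots,U_k\} \uplus \{U'_1,\dots,U'_\ell\}$ to $\Gamma \uplus \{U'_1,\dots,U'_\ell\}$ (with fresh names for the second batch of colours), but this does not affect the argument.
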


Transductions preserve bounded twin-width, and this can be made efficient.
\begin{theorem}[{\cite[Theorem~8.1]{Bonnet2022twinwidth1}}]\label{thm:tww-transduction}
    For any FO transduction~$\Phi$, there is a function~$f$ such that
    for any binary structure~$S$ and $T \in \Phi(S)$, $\tww(T) \le f(\tww(S))$.

    Furthermore, given a contraction sequence of width~$k$ for~$S$, and the colouring of~$S$ used to obtain~$T$ through~$\Phi$, one can compute in time $g(k) \cdot |S|^{O(1)}$ a contraction sequence of width~$f(k)$ for~$T$ for some function~$g$ depending on~$\Phi$.
\end{theorem}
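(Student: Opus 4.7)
The plan is to decompose $\Phi$ into elementary operations and handle each separately, composing the results via Lemma~\ref{lem:transduction-comp}. By definition, $\Phi$ is a colouring by unary predicates $U_1,\dots,U_c$ followed by an FO interpretation, and the interpretation further decomposes into restriction to the definable subset $V' = \{x : S \models \phi_V(x)\}$ and the redefinition of each binary relation $E_i$ by a formula $\phi_{E_i}$. For the colouring step, given a width-$k$ contraction sequence $(\Pc_n, \dots, \Pc_1)$ for~$S$, I would refine each partition $\Pc_i$ so that every part is monochromatic for all~$U_j$; the number of parts and the error degree grow by at most a factor of~$2^c$. For the restriction step, simply intersect each part with~$V'$, which can only decrease non-homogeneity. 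Both operations are straightforward to implement in polynomial time given the contraction sequence and an evaluation oracle for the relevant formulas.

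The core step is the binary redefinition: given a formula $\phi(x,y)$ of quantifier rank~$q$, I would build a contraction sequence for the binary structure $(V(S), \phi^S)$, where $\phi^S \eqdef \{(x,y) : S \models \phi(x,y)\}$, by induction on~$q$. For $q = 0$, $\phi$ is a boolean combination of atomic predicates of~$S$, so the original contraction sequence directly witnesses bounded twin-width of $(V(S), \phi^S)$ up to a constant factor in the width. For $q > 0$ it suffices to treat a formula of the form $\phi(x,y) = \exists z.\,\psi(x,y,z)$ with $\psi$ of quantifier rank~$q-1$. By an Ehrenfeucht--Fra\"iss\'e argument, the number of possible $(q{-}1)$-types of triples is bounded by some $T(q)$ independent of $S$. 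I would refine each part~$X$ of the contraction sequence so that two vertices of $X$ lie in the same sub-part only when they have the same $\psi$-type against every other part of $\Pc_i$. Since parts in $\Pc_i$ have bounded error degree~$k$ and there are at most $T(q)$ types, this refinement multiplies the number of parts by a bounded function of~$k$ and~$q$, and the refined partitions remain a valid contraction sequence for $(V(S), \phi^S)$.

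The hard part will be keeping the type-refinement bookkeeping tight enough to extract an explicit width bound $f(k) = f(k, |\Phi|)$, and carrying out the refinement inside the required running time. Algorithmically, I would propagate types along the contraction sequence in a single dynamic-programming pass per level: at each level the $\psi$-type of a vertex against a part depends only on atomic data together with the $\psi$-types of the boundedly many ``neighbour'' parts in the error graph of~$\Pc_i$ (whose degree is at most~$k$), so all types can be updated in $h(k, q) \cdot |V(S)|^{O(1)}$ time per level. Iterating this procedure over the $O(|V(S)|)$ levels of the sequence and then composing across the finitely many elementary steps that make up $\Phi$ yields the final contraction sequence of width $f(k)$ within the desired runtime $g(k) \cdot |S|^{O(1)}$.
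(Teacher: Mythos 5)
The paper does not prove \cref{thm:tww-transduction}; it is cited verbatim from \cite[Theorem~8.1]{Bonnet2022twinwidth1}, so there is no in-paper argument to compare your sketch against. That said, your high-level plan (decompose the transduction into colouring, restriction, and binary redefinition; handle the interpretation step by propagating types along the contraction sequence) is broadly aligned with the strategy used in the cited reference.

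However, you are glossing over the technical core, and at two points your sketch, as written, would not go through without substantial additional work. First, for the interpretation step you claim that the $\psi$-type of a vertex $u$ against a part $P\in\Pc_i$ ``depends only on atomic data together with the $\psi$-types of the boundedly many neighbour parts in the error graph.'' This is not automatic: the existential quantifier $\exists z$ ranges over the entire structure, and a relevant witness $z$ can lie in a part that is fully homogeneous with both $u$'s part and $P$. You need a carefully formulated invariant---a notion of local type relative to the current partition that (i) is closed under the merge operation, (ii) subsumes the contribution of all homogeneous parts into a bounded ``aggregate'' summary, and (iii) is strong enough to determine truth of $\phi(u,v)$. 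Formulating and maintaining such an invariant under contractions is essentially the whole content of the proof in \cite{Bonnet2022twinwidth1}, not a side detail. Second, in the colouring step, refining each $\Pc_i$ to be monochromatic does not directly yield a contraction sequence for the coloured structure: consecutive refined partitions $\Pc_i'$ and $\Pc_{i-1}'$ may differ by up to $2^c$ merges rather than one, so you must interpolate intermediate partitions and verify that the error degree stays bounded throughout the interpolation. Both gaps are fillable---and you do flag the bookkeeping as ``the hard part''---but as written the proposal is a plausible outline rather than a proof, and the hardest step (the local-type invariant) is precisely the one asserted without justification.
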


A graph class~$\Cc$ is called \emph{monadically independent} if it FO transduces the class of all graphs, i.e.\ if there exists a transduction~$\Phi$ such that~$\Phi(\Cc)$ contains all graphs.
If there is no such transduction, then~$\Cc$ is called \emph{monadically dependent}.
\Cref{thm:tww-transduction} implies that bounded twin-width classes are monadically dependent.

\begin{theorem}[\cite{flip-breakability}]\label{thm:independent-hardness}
    If~$\Cc$ is a hereditary monadically independent class of graphs,
    then FO model checking in~$\Cc$ is \aw-hard.
\end{theorem}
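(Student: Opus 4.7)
The plan is to reduce FO model checking on the class of \emph{all graphs}, which is \aw-hard by~\cite{downey1996queries}, to FO model checking on~$\Cc$. Monadic independence of~$\Cc$ yields an FO transduction~$\Phi = \Phi_0 \circ \gamma_k$, where~$\gamma_k$ is the $k$-coloring operation for some fixed~$k$ and~$\Phi_0$ is an FO interpretation, such that every graph is isomorphic to~$\Phi_0(H, U_1, \dots, U_k)$ for some $H \in \Cc$ and coloring $U_1, \dots, U_k \subseteq V(H)$. The core formal gadget is the \emph{pullback} of formulas through~$\Phi_0$: by induction on the structure of a sentence~$\phi$ about graphs, one constructs in polynomial time a sentence~$\phi^\ast$ over the signature of colored graphs (graphs together with $k$~unary predicates) such that $(H, U_1, \dots, U_k) \models \phi^\ast$ iff $\Phi_0(H, U_1, \dots, U_k) \models \phi$. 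This is routine: atomic formulas are replaced by the defining formulas of~$\Phi_0$, quantifiers are relativised to the formula~$\phi_V$ picking out the vertex set of the interpretation, and boolean connectives are preserved.

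The obstacle is that~$\phi^\ast$ references the unary predicates~$U_i$, which are not part of the signature of graphs in~$\Cc$, and FO cannot quantify over sets. The remedy is to encode the chosen coloring into the graph itself: given an input~$(G, \phi)$, one produces a witness $H \in \Cc$ realising~$G$ under~$\Phi$ together with a polynomial-size augmentation~$H'$ that marks each colour class~$U_i$ by an FO-definable gadget (for instance, distinguishable apex vertices attached in specific patterns to the vertices of~$U_i$). The hereditary assumption on~$\Cc$ is crucial for keeping~$H'$ inside~$\Cc$: one first applies monadic independence to a larger graph~$G'$ that already encodes~$G$ together with the intended colouring data as part of its combinatorial structure, obtains a witness~$H''$ for~$G'$, and then takes the induced subgraph of~$H''$ that serves as~$H'$, exploiting that~$\Cc$ is closed under induced subgraphs.

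The final and most delicate step is uniformity: producing a single encoding scheme and a polynomial-time procedure that, given~$(G, \phi)$, outputs $(H', \psi)$ with $H' \in \Cc$ and $H' \models \psi$ iff $G \models \phi$, where $|\psi|$ depends only on~$|\phi|$ and~$\Phi$. This is the step carried out in~\cite{flip-breakability} via the flip-breakability dichotomy: monadic independence of~$\Cc$ is leveraged through a Ramsey / compactness argument to extract a structured sequence of witness graphs admitting a uniform FO encoding of colourings. I expect this uniformity argument to be the principal obstacle; the pullback construction and the reduction of FO model checking on all graphs to colored model checking on~$\Cc$ are essentially bookkeeping once the right encoding is available, while establishing that hereditary monadic independence is strong enough to support such an encoding is where the substantive combinatorial content resides.
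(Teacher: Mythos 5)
The paper does not prove this statement; it cites it directly from~\cite{flip-breakability}, so there is no in-paper proof to compare against. On the merits of your sketch: you correctly identify the central difficulty, namely effectivity. Monadic independence only guarantees, non-constructively, that for each graph~$G$ there \emph{exists} some $H \in \Cc$ and a colouring of~$H$ with $\Phi_0(H,U_1,\dots,U_k) \cong G$; it gives no procedure to compute such an~$H$. Your proposed remedy --- encoding the colouring into an augmented graph~$H'$ and invoking hereditariness to keep~$H'$ inside~$\Cc$ --- does not actually address this, because before one can mark the colour classes by gadgets one must already possess the witness~$H$ and the colouring, which is precisely what cannot be computed from the transduction alone. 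So the sketch as written is circular at the point where it says ``one produces a witness $H \in \Cc$ realising~$G$ under~$\Phi$.''

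The substantive content of~\cite{flip-breakability} that your sketch does not capture is a \emph{combinatorial characterisation} of monadic independence for hereditary classes: a hereditary class is monadically independent if and only if it contains, for every radius~$r$ and size~$n$, certain explicit ``flipped'' pattern structures (flipped star/clique/half-graph/comparability-grid configurations) of size~$n$. These patterns are what make the reduction uniform and computable: given~$(G,\phi)$, one can search for a suitably large pattern inside a graph of~$\Cc$ (or, in the decision version, hardness follows because any algorithm for~$\Cc$ must in particular work on these explicitly constructible pattern graphs, onto which the class of all graphs interprets by a \emph{fixed} formula). Your appeal to ``a Ramsey / compactness argument'' gestures in the right direction but misses that the characterisation yields concrete, FO-interpretable hard instances rather than merely an abstract existence statement. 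Without that combinatorial dichotomy, the reduction cannot be made effective, and the \aw-hardness does not follow from the transduction and hereditariness alone.
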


To prove that a class~$\Cc$ is monadically independent, we will use the next construction, proposed in~\cite{bonnet2022delineation}.
Our definition differs slightly, but the idea and purpose are the same.
\begin{definition}\label{def:transversal-pair}
A \emph{transversal pair} $T_k$ consists of three sets $A = \{a_i : i \in [k]\}$, $B = \{b_{i, j} : i, j \in [k]\}$ and $C = \{c_j: j \in [k]\}$ 
with edges $a_i b_{i',j'}$ if and only if $i \le i'$, and $b_{i,j} c_{j'}$ if and only if $j \le j'$ (see \cref{fig:transversal-pair}).
Edges within~$A,B,C$ or between~$A$ and~$C$ are unconstrained.
\end{definition}

\begin{figure}[htb]
    \centering
    \begin{tikzpicture}
        \tikzstyle{vertex}=[black,fill,draw,circle,inner sep=0pt, minimum width=1.2mm]
        \def\n{3}

        \foreach \i in {1,...,\n}{
            \node[vertex, label=left:$a_\i$] (a\i) at (0,-\i) {};
            \node[vertex, label=above:$c_\i$] (c\i) at (\i,0) {};
            \foreach \j in {1,...,\n}{
                \node[vertex] (b\i\j) at (\j,-\i) {};
            }
        }
        \node[label=below:$b_{3,1}$] at (b31) {};
        \node[label=below:$b_{3,2}$] at (b32) {};
        \node[label=below right:$b_{3,3}$] at (b33) {};
        \node[label=right:$b_{1,3}$] at (b13) {};
        \node[label=right:$b_{2,3}$] at (b23) {};

        \foreach \i in {1,...,\n}{
            \foreach \j in {1,...,\n}{
                \foreach \k in {1,...,\n}{
                    \pgfmathsetmacro{\bend}{\k*3}
                    \ifnum \i>\j \else
                        \draw (a\i) edge [bend left=\bend] (b\j\k);
                        \draw (c\j) edge [bend left=\bend] (b\k\i);
                    \fi
                }
            }
        }

    \end{tikzpicture}
    \caption{A transversal pair $T_3$.}
    \label{fig:transversal-pair}
\end{figure}
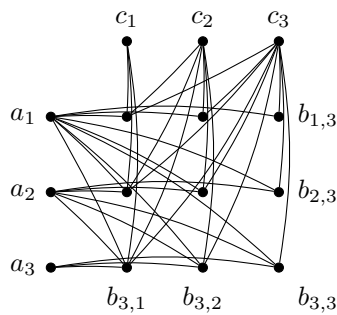

\begin{lemma}\label{lem:transversal-pair-NIP}
    Let~$\Cc$ be a graph class such that for any~$k$, some $G \in \Cc$ contains a transversal pair~$T_k$. Then $\Cc$ is monadically independent.
\end{lemma}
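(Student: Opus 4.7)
The plan is to show that $\Cc$ FO transduces the class of all bipartite graphs; combined with the standard observation that every graph is FO interpretable from its vertex--edge incidence bipartite graph, \cref{lem:transduction-comp} will then yield that $\Cc$ transduces the class of all graphs, i.e.\ is monadically independent.

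Fix an arbitrary bipartite graph $H$ with sides $L = \{l_1, \dots, l_n\}$ and $R = \{r_1, \dots, r_n\}$, and pick $G \in \Cc$ containing a transversal pair $T_n$ with parts $A, B, C$. The transduction first colours $V(G)$ with four unary predicates: $P_A, P_B, P_C$ marking these three parts, and $P_E$ marking $b_{i,j}$ precisely when $l_i r_j \in E(H)$. Since $n$, $H$, and the choice of colouring are arbitrary, the entire construction takes place at this single colouring stage.

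The core step is to recover the hidden index structure of $T_n$ inside FO. By \cref{def:transversal-pair}, the sets $N(a_i) \cap B = \{b_{i',j'} : i' \ge i\}$ form a strictly decreasing chain under inclusion, so the natural order on $A$ is FO definable by $a \prec_A a' \iff N(a) \cap P_B \supsetneq N(a') \cap P_B$; an analogous (appropriately oriented) formula defines the natural order $\prec_C$ on $C$. Each vertex $b = b_{i,j} \in B$ is then located by the FO definable operations $\alpha(b) \eqdef \max_{\prec_A}(N(b) \cap P_A)$ and $\gamma(b) \eqdef \min_{\prec_C}(N(b) \cap P_C)$, which return $a_i$ and $c_j$ respectively. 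Finally, the interpretation keeps the vertices of $P_A \cup P_C$ and puts an edge between $x$ and $y$ whenever some $b \in P_B \cap P_E$ satisfies $\{\alpha(b), \gamma(b)\} = \{x, y\}$; identifying $a_i \leftrightarrow l_i$ and $c_j \leftrightarrow r_j$, the output is isomorphic to $H$.

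The main point to verify is that the FO definitions of $\prec_A$, $\prec_C$, $\alpha$, and $\gamma$ are correct inside $G$, not merely inside an isolated $T_n$. This hinges on \cref{def:transversal-pair} fixing the $A$-$B$ and $B$-$C$ edges exactly (rather than as a subset), so that no stray edges can perturb the inclusion chain of $B$-neighbourhoods. The unconstrained edges inside each of $A, B, C$ or between $A$ and $C$ cause no trouble since the colour predicates let all relevant neighbourhoods be restricted to the bipartite pieces actually determined by $T_n$.
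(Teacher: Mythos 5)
Your proof is correct and follows essentially the same approach as the paper: you recover the index structure of the transversal pair via FO-definable neighbourhood-inclusion comparisons restricted to the colour-marked parts, select a colour-marked subset of the $B$-layer to encode the target bipartite graph, and then pass from bipartite graphs to all graphs via (incidence) subdivision. The only stylistic difference is that the paper factors the construction through the once-subdivided biclique $K_{n,n}$ as a visible intermediate, whereas you collapse those two transductions into a single one — the underlying ideas are identical.
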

This is folklore, and a very special case of results of~\cite{flip-breakability}.
We give a proof for completeness.%
\begin{proof}
    We show that the class of all bipartite graphs can be transduced from the class of all transversal pairs.
    This implies that all graphs can be transduced from transversal pairs too,
    since any graph can be transduced from its 1-subdivision, which is bipartite.

    Recall that~$T_n$ consists of three vertex sets $A = \{a_i\}_{i \in [n]}$, $B = \{b_{i,j}\}_{i,j \in [n]}$ and $C = \{c_j\}_{j \in [n]}$,
    with edges $a_ib_{i',j'}$ whenever $i \le i'$, and $b_{i,j}c_{j'}$ whenever $j \le j'$.
    We first want to modify the edges, so that~$b_{i,j}$ is adjacent only to~$a_i$ and~$c_j$.
    The resulting graph will exactly be the biclique~$K_{n,n}$ with each edge subdivided once.

    To this end, observe that for any~$i,i',j'$, we have $i=i'$ if and only if (1) $a_i$ is adjacent to~$b_{i',j'}$,
    and (2) $a_i$ is minimal with this property, in the sense that for any $a' \in A$ adjacent to $b_{i',j'}$, $N(a_i) \subseteq N(a')$.
    This can be expressed as an FO transduction:
    after marking the sets~$A$, $B$, and~$C$ using non-deterministic colouring, the two conditions are easily expressed by an FO formula.
    The same technique can be applied between~$B$ and~$C$,
    and we thus obtain an FO transduction~$\Phi_1$ such that $\Phi_1(T_n)$ contains the biclique~$K_{n,n}$ subdivided once.

    Now consider~$H$ a bipartite graph with partition $(X,Y)$ with $|X| = x$ and $|Y| = y$.
    We take the subdivided~$K_{n,n}$ obtained from the previous step $n = \max\{x, y\}$.
    Identify~$X$ with vertices $\{a_1,\dots,a_x\}$ from~$A$, and~$Y$ with $\{c_1,\dots,c_y\}$ from~$B$.
    Also define the set $F = \{b_{i,j} : a_ic_j \in E(H)\}$ of vertices in~$B$ corresponding to edges of~$H$.
    The following transduction then produces~$H$ from the subdivision of~$K_{n,n}$:
    mark the sets~$X$, $Y$, and~$F$ with non-deterministic colouring,
    and define the edge relation of~$H$ by the formula
    \[ \varphi(u, v) \equiv X(u) \land Y(v) \land \exists w (F(w) \land E(u, w) \land E(w, v)). \]
    Finally delete vertices outside~$X$ and~$Y$ to obtain~$H$.

    Thus from the class of all transversal pair, we can by three consecutive transduction obtain all subdivided bicliques, then all bipartite graphs, and finally all graphs.
    As the composition of two transductions is itself a transduction (\cref{lem:transduction-comp}), this concludes the proof.
\end{proof}

\subsection{Merge-width}
Merge-width~\cite{merge-width} is a major generalisation of twin-width, designed to also capture graphs with bounded degree, or more generally bounded expansion.
The definition also involves sequences of vertex partitions, but the notion of error is finer:
one counts errors for each vertex individually, rather than for each part.
Merge-width really is a family of parameters: one defines \emph{merge-width at radius~$r$}, denoted~$\mw_r(G)$, for any $r \in \N$.

A \emph{construction sequence} is a sequence $(\Pc_1,E_1,N_1) , \ldots, (\Pc_m,E_m,N_m)$,
where~$\Pc_i$ are partitions of the vertex set~$V$, and~$E_i,N_i$ are disjoint sets of \emph{resolved} edges and non-edges, obtained as follows.
\begin{itemize}
    \item The sequence starts with the partition into singletons $\Pc_1 = \{\{v\} : v \in V\}$, and with no resolved edge or non-edge, i.e.\ $E_1 = N_1 = \emptyset$.
    \item One advances from step~$i$ to~$i+1$ with one of the following three operations:
    \begin{enumerate}
        \item merge two parts of~$\Pc_i$ to obtain~$\Pc_{i+1}$,
        \item \emph{positively resolve} two parts $X,Y \in \Pc_i$ (possibly $X=Y$), i.e.\ define~$E_{i+1}$ by adding to~$E_i$ all pairs $x \in X$, $y \in Y$ that are not yet in~$E_i$ or~$N_i$,
        \item or \emph{negatively resolve}~$X,Y$, adding these pairs to~$N_{i+1}$ instead.
    \end{enumerate}
    \item The sequence ends when all vertices are merged into $\Pc_n = \{V\}$, and all pairs are resolved in either~$E_m$ or~$N_m$. The resulting graph is~$(V,E_m)$.
\end{itemize}
Define $R_i \eqdef E_i \cup N_i$ the set of resolved pairs at time~$i$.
Then for $r \in \N \cup \{\infty\}$, the \emph{radius-$r$ width of $\Pc$} is
\[
    \max\limits_{t \in [m]} \, \max\limits_{v \in V} \Big| \bigl\{A \in \Pc_t : A \text{ is at distance at most~$r$ from } v \text{ in } (V, R_t) \bigr\} \Big|.
\]
The \emph{radius-$r$ merge-width of $G$}, denoted by $\mw_r(G)$, is the least radius-$r$ width of a construction sequence creating~$G$.
A graph class $\Cc$ is said to have \emph{bounded merge-width} if there is a function~$f$ such that
for all $r \in \N$ and $G \in \Cc$, it holds that $mw_r(G) \le f(r)$.

The main motivation for merge-width, and the reason we use it, is the following generalisation of \cref{thm:model-checking}.
\begin{theorem}[\cite{merge-width}]\label{thm:mw-model-checking}
    If~$\Cc$ is a class with bounded merge-width, then~$\Cc$ is monadically dependent, and FO model checking in~$\Cc$ is FPT.\footnote{%
        To be accurate, if radius-$r$ merge-width of graphs in~$\Cc$ is bounded by some \emph{constructible} function of~$r$,
        then model checking is truely FPT.
        Otherwise, one only obtains a \emph{non-uniform} FPT algorithm.
        In our uses of \cref{thm:mw-model-checking}, the merge-width bounding function will simply be~$2^{2^{O(r)}}$, so this is not a concern.
    }
\end{theorem}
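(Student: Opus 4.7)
The statement combines two conclusions — monadic dependence and FPT model checking for $\Cc$ — and my plan is to obtain both from the construction sequence, mirroring how \cref{thm:model-checking} is obtained from a contraction sequence. For the algorithmic half, the strategy is to lift the twin-width dynamic-programming approach to merge-width by combining it with Gaifman's locality theorem. I would first apply Gaifman's theorem to reduce any input sentence $\phi$ to a Boolean combination of basic local sentences, each asserting the existence of vertices with prescribed $r$-local types and pairwise distance more than $2r$, where $r$ depends only on $|\phi|$. Then process the construction sequence $(\Pc_1,E_1,N_1), \dots, (\Pc_m,E_m,N_m)$ forward, maintaining for each part $P \in \Pc_t$ and each vertex $v \in P$ a summary of the $r$-neighborhood of $v$ in the resolved graph $(V,R_t)$. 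The crucial invariant, built into the definition of radius-$r$ merge-width, is that at most $k = \mw_r(G)$ distinct parts lie within resolved-distance $r$ of $v$. Hence each summary has size bounded by a function of $k$ and $|\phi|$, and the three basic operations (merge, positive resolve, negative resolve) update these summaries in time bounded by that same function, yielding the desired FPT bound $f(k,|\phi|) \cdot |V|^{O(1)}$.

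For monadic dependence, the plan is to argue that bounded merge-width constrains FO-definable relations enough that the class of all graphs cannot be transduced from~$\Cc$. I would show, by induction on quantifier rank, that any FO formula $\phi(x_1,\dots,x_s)$ over a bounded-merge-width graph (optionally enriched with a finite vertex colouring) is determined by boundedly many local type-sketches associated with the parts of $\Pc_t$ at each step of the construction sequence. Since transducing all graphs would require expressing asymptotically $2^{\Omega(n^2)}$ distinct behaviours on $n$-vertex structures, a bounded state-space of type sketches rules this out. This mirrors the route by which bounded twin-width yields monadic dependence via \cref{thm:tww-transduction}, but must be executed simultaneously for all radii~$r$, since monadic dependence is a ``global'' property whereas merge-width is parameterised by the radius.

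The main obstacle in both halves is the mismatch between Gaifman distances in the final graph $G = (V,E_m)$ and distances in the intermediate resolved graphs $(V,R_t)$: Gaifman locality reasons about the former, whereas the radius-$r$ width bound controls the latter. Bridging the two requires showing that the $r$-neighborhoods one truly cares about are faithfully captured by bounded-size summaries along the sequence, which is precisely what the ``radius-$r$'' qualifier in the definition is designed to enable, but whose technical verification is delicate — one must argue that resolving a pair either commits an edge that already lay in a local summary or cannot shorten relevant distances. A secondary issue, noted in the footnote, is making the dependence on $r$ and $k$ explicit enough for a genuinely uniform FPT algorithm, though this is harmless for the applications in this paper where the merge-width bounds turn out to be constructible (doubly exponential) functions of~$r$.
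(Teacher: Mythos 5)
This theorem is not proved in the paper at all; it is cited as a black-box result from \cite{merge-width}, so there is no in-paper proof for your sketch to be compared against. Evaluating the sketch on its own merits against the actual proof in \cite{merge-width}, there are two genuine gaps.

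For the algorithmic half, you correctly identify the central obstacle — Gaifman locality speaks about distances in the final graph $(V,E_m)$, while the radius-$r$ width bound controls distances in the auxiliary resolved graphs $(V,R_t)$, and these are incomparable (early on $R_t$ is nearly empty, so resolved distances are huge; moreover $R_t$ contains \emph{non-edges} as well as edges, so resolved distance can also be much shorter than $G$-distance). You state that bridging them is ``delicate'' but do not actually close the gap, and this is exactly where the real technical work of \cite{merge-width} lies; their argument does not simply run Gaifman and then do bookkeeping along the sequence, and a naive summary-per-vertex DP as you describe would fail to correctly recover $r$-neighbourhoods of $G$ from the $\Pc_t$-local information.

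For monadic dependence, the counting argument is unsound. A transduction first applies a non-deterministic $k$-colouring and may also use \emph{arbitrarily larger} source structures $G\in\Cc$ (with $|V(G)|$ unbounded in terms of the target size $n$). So ``bounded state-space of type sketches per structure'' does not bound the number of $n$-vertex outputs: one must count over all $N\ge n$ and all $k^N$ colourings, and this counting does not converge. The actual route in \cite{merge-width} is structural, not cardinality-based: they show that FO transductions preserve bounded merge-width (the very property invoked in this paper's proof of \cref{cor:Hgraphs-non-delin}), and that the class of all graphs has unbounded merge-width; monadic dependence follows. Replacing the counting argument with this transduction-closure argument would be the correct fix, but it is a different and substantially harder statement than what you propose to prove by induction on quantifier rank.
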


\subsection{Geometric graphs}\label{sec:prelim:geom}
For a set family~$\Fc$, the \emph{intersection graph}~$G_\Fc$ has~$\Fc$ for vertex set, with an edge between $A,B \in \Fc$ if and only if~$A$ and~$B$ intersect. The family~$\Fc$ is called a \emph{representation} of~$G_\Fc$.

\paragraph*{Circular arc graph}
A \emph{circular arc graph} is an intersection graph of arcs around a circle.
We work with a discrete representation in $[n] \eqdef \{1,...,n\}$ modulo~$n$:
a (discrete) \emph{circular arc} is an interval~$[i,j]$ modulo~$n$;
explicitly, $[i,j] \eqdef \{k: i \le k \le j\}$ when $i \le j$, and $[i,j] \eqdef \{k : k \ge i \text{ or } k \le j\}$ when $i > j$.
A graph~$G$ is a circular arc graph if and only if it is the intersection graph of some family~$\Fc$ of (discrete) circular arcs as above for some~$n$.
We frequently identify the vertices of~$G$ with the arcs of~$\Fc$.
When $I = [i,j]$ is a circular arc, we denote by $\Lend(I) \eqdef i$ and $\Rend(I) \eqdef j$ its endpoints.

\begin{theorem}[\cite{tucker1980circular}]\label{thm:cag-detection}
    Given a graph~$G$, one can compute a circular arc representation of~$G$ or detect that it is not a circular arc graph in polynomial time.
\end{theorem}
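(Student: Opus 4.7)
The plan is to reduce circular arc graph recognition to interval graph recognition, for which the Booth--Lueker linear-time algorithm is available. The starting geometric observation is that if $G$ admits a circular arc representation and a point $p$ of the circle is covered by a set $K$ of arcs, then $K$ is a clique in $G$, the induced subgraph $G \setminus K$ (obtained by cutting at $p$) is an interval graph, and each arc of $K$ corresponds to a ``wrap-around'' pair consisting of a prefix and a suffix of the cut line.

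First I would test whether $G$ itself is already an interval graph; if so, place the intervals on a proper sub-arc of the circle and the problem is trivially solved. Otherwise the task reduces to searching over candidate ``cut cliques'' $K \subseteq V(G)$. To keep this search polynomial, I would iterate over a vertex $v$ and generate a polynomial-size family of candidate cliques from the structure of $N[v]$, exploiting the fact that the set of arcs covering a specific point forms a nested and linearly overlapping pattern strongly constrained by $v$'s neighbourhood.

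For each candidate $K$ I would run Booth--Lueker on $G \setminus K$ to obtain an interval representation (or reject), and then attempt to extend this representation to a circular arc representation of $G$ by reinserting each $u \in K$ as a wrap-around arc. This amounts to choosing, for each $u$, a prefix $[0, i_u]$ and a suffix $[j_u, 1]$ of the line on which it lies; the adjacencies between $u$ and vertices of $G \setminus K$ translate into consecutive-ones constraints on the positions of $i_u$ and $j_u$ relative to the intervals of $G \setminus K$. These constraints can be checked, and a satisfying placement computed, by a careful traversal of the PQ-tree produced by the interval recognition algorithm.

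The main obstacle is proving that the pool of candidate cut cliques can be kept polynomial while remaining exhaustive, i.e.\ that \emph{some} canonical cut among the polynomially many guesses succeeds whenever $G$ is a circular arc graph. This structural lemma is the heart of Tucker's argument: one must analyse the local overlap structure around $v$ carefully enough to derive a short list of possibilities for the clique crossing a cut placed just past $v$'s arc, ruling out the need to enumerate all subsets of $N[v]$. A secondary subtlety is that $G \setminus K$ may admit many interval representations, so the extension test must be performed against the PQ-tree uniformly rather than against a single fixed representation; this is standard but requires attention to the internal P- and Q-node operations.
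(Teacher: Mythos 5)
This theorem is not proved in the paper at all --- it is cited from Tucker's 1980 work --- so there is no in-paper argument to compare your attempt against; the paper simply invokes it as a known black box. Your sketch proposes a plausible reduction strategy (cut the circle at a point just past some vertex $v$'s arc, guess the clique $K$ of arcs crossing the cut, recognise $G \setminus K$ as an interval graph via Booth--Lueker, and reinsert $K$ as wrap-around arcs), but it contains a genuine gap that you yourself flag as the ``main obstacle'': the claim that a polynomial-size, exhaustive family of candidate cut cliques can be generated from $N[v]$. This is the nontrivial structural heart of any such recognition algorithm, and your proposal gives no argument for it beyond asserting that the overlap structure around $v$ constrains the possibilities. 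Without that lemma the search over subsets of $N[v]$ is exponential and the reduction does not yield a polynomial-time algorithm, so the proposal is a plan rather than a proof.

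A secondary unaddressed difficulty is the extension step. Reinserting $u \in K$ as a prefix $[0,i_u]$ together with a suffix $[j_u,1]$ imposes, on the interval model of $G \setminus K$, the constraint that the non-neighbours of $u$ all lie strictly inside the gap $(i_u, j_u)$ and the neighbours all touch its complement; this is a circular-ones-type condition on the complement of $N(u)$, not a plain consecutive-ones condition, and there is one such condition for every $u \in K$. These must be satisfied simultaneously by a single left-to-right ordering compatible with the PQ-tree, together with mutual consistency among the values $i_u, j_u$ (the arcs of $K$ all cover the cut, which is automatic, but their pairwise non-neighbours in $K$ must also be respected if the definition allows any). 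You flag this as ``standard but requires attention'', but it is not automatic that a naive PQ-tree traversal resolves several simultaneous wrap-around insertions at once, and this needs a real argument. Finally, be aware that even if you filled both gaps, the resulting algorithm would be your own reconstruction of a polynomial-time test and not necessarily Tucker's actual argument, which to the best of my recollection is organised around a different case split; if fidelity to the cited source matters you would need to consult Tucker's paper directly.
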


\paragraph*{Axis-parallel unit segment graphs}
An axis-parallel unit segment (APUS) is a segment of length~1 in the plane that is horizontal or vertical.
We denote by~$\Hseg(x,y)$ the horizontal segment from~$(x,y)$ to~$(x+1,y)$,
and by~$\Vseg(x,y)$ the vertical one from~$(x,y)$ to~$(x,y+1)$.
An APUS graph is the intersection graph of a family of APUS.
We again work with discrete representations:
Any APUS graph~$G$ has a representation~$\Fc$ such that there is some~$\eta(\Fc)$ with $1/\eta(\Fc) \in \N$ such that coordinates of segments in~$\Fc$ are non-negative multiples of~$\eta(\Fc)$.

A family~$\Fc$ of APUSs is \emph{non-degenerate} if no two horizontal (resp.\ two vertical) segments of~$\Fc$ intersect.
We say~$G$ is a non-degenerate APUS graph if it is the intersection graph of a non-degenerate APUS family.
This is equivalent to~$G$ being the intersection graph of an arbitrary family~$\Fc$ of APUS, ignoring intersections between horizontal (vertical) segments (i.e.~$G$ has no edge for such intersections).
Non-degenerate APUS graphs are bipartite, with horizontal segments on one side and vertical ones on the other.

We will use APUS representations with minimized coordinates, which is harder for non-degenerate APUS families.
Thus we prefer the second point of view on non-degenerate APUS graphs:
degenerate intersections can exist, but are ignored in intersection graphs.

\paragraph*{Terrain Visibility Graphs}
A \emph{1.5D terrain} is a polygonal curve $P = p_1,\dots,p_n$ with increasing $x$-coordinates $\Xco(p_1) < \dots < \Xco(p_n)$.
The \emph{visibility graph} of~$P$ has vertex set~$P$, with an edge between~$p_i,p_j$ whenever they see each other,
if we think of the area below~$P$ as obstructed.
Formally, $p_ip_j$ is \emph{not} an edge if and only if~$p_k$ is above the line from~$p_i$ to~$p_j$ for some $i<k<j$.
As such, two adjacent vertices in the terrain will always be visible to each other.

\section{Delineation of Circular Arc Graphs}
\label{sec:cag}
In this section, we prove \cref{thm:main} for circular arc graphs.

Given a family of circular arcs~$\Fc$, we define its \emph{endpoints matrix} $M_\Fc$, where rows correspond to the right endpoints and columns to the left endpoints of the arcs. For each entry at $(i, j)$ in $M_\Fc$, there is a $1$ if and only if $[i, j] \in \Fc$.

\begin{lemma}\label{lem:cag-gridthm}
    There are functions~$f,g$ such that if~$M_\Fc$ has no $k$-grid,
    then one can compute a contraction sequence witnessing that $\tww(G_\Fc) \le f(k)$ in FPT time $g(k) \cdot |\Fc|^{O(1)}$.
\end{lemma}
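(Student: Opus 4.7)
The plan is to build an ordered binary structure $S_\Fc$ whose vertex set contains the arcs of $\Fc$, bound $\tww(S_\Fc)$ using \cref{lem:incidence-matrix-grid} and \cref{thm:grid-tww}, and then FO-interpret $G_\Fc$ from $S_\Fc$ so that \cref{thm:tww-transduction} transfers the bound together with an efficiently computable contraction sequence. Concretely, I take $V(S_\Fc) \eqdef [n] \uplus \Fc$ ordered with $[n]$ (in its natural order) placed entirely below $\Fc$ (ordered lexicographically by $(\Lend(a), \Rend(a))$), and equip $S_\Fc$ with two binary relations $L, R$, where $L(p, a)$ (resp.\ $R(p, a)$) holds iff $p = \Lend(a)$ (resp.\ $p = \Rend(a)$).

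To bound the twin-width of $S_\Fc$, set $E \eqdef \{(\Lend(a), \Rend(a)) : a \in \Fc\} \subseteq [n]^2$, so that, up to transposition, $\adj(E, <_{[n]}) = M_\Fc$, and the hypothesis translates into the absence of a $k$-grid in this matrix. By \cref{lem:incidence-matrix-grid}, $\inc(E, <_{[n]}, <_\lex)$ then has no $4k$-grid; with arcs in $\Fc$ ordered by $<_\lex$ (matching the chosen order on $\Fc$), the adjacency matrices of $L$ and $R$ in $S_\Fc$ are each obtained from $\inc(E, <_{[n]}, <_\lex)$ by keeping only one of the two $1$'s in each column (namely the one at the left, respectively right, endpoint), and hence also avoid $4k$-grids. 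Invoking \cref{thm:grid-tww} in its form for ordered binary structures then yields, in FPT time, a contraction sequence of width $f_0(k)$ for $S_\Fc$.

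It remains to FO-interpret $G_\Fc$ from $S_\Fc$. The output vertex set $\Fc$ is defined by $\phi_V(v) \eqdef \exists p,\, L(p, v)$. For the edge relation, a finite case analysis on whether each arc is \emph{normal} ($\Lend(a) \le \Rend(a)$) or \emph{wrapping} ($\Lend(a) > \Rend(a)$) suffices: two wrapping arcs always meet (both contain $1$ and $n$); two normal arcs $[\ell_1, r_1], [\ell_2, r_2]$ meet iff $\ell_1 \le r_2$ and $\ell_2 \le r_1$; a normal arc and a wrapping arc meet iff one of two analogous disjunctive conditions on endpoints holds. Each case is first-order in the signature $(L, R, <)$, so arc intersection assembles into a single formula $\phi_E(a_1, a_2)$, and applying \cref{thm:tww-transduction} to this interpretation transfers the width bound and the contraction sequence from $S_\Fc$ to $G_\Fc$.

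The step requiring the most care is the application of \cref{thm:grid-tww} to the two-relation ordered binary structure $S_\Fc$, rather than to a single ordered graph: the theorem as stated in the excerpt is for ordered graphs, but the underlying matrix characterisation of~\cite{twin-width4} applies to matrices over any finite alphabet, which is equivalent to ordered binary structures with finitely many binary relations, and grid-freeness is inherited by each individual relation. As a fallback, one could handle $L$ and $R$ separately (each encoded as an ordered bipartite graph, both with no large grid) and combine the resulting bounded twin-width structures via \cref{thm:ordered-tww-ramsey}, obtaining the same conclusion.
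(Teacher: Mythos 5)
Your proposal is correct and follows essentially the same route as the paper's proof: build the auxiliary endpoint-incidence structure on $[n]\uplus\Fc$ ordered as in the paper, control its twin-width by combining \cref{lem:incidence-matrix-grid} with the grid theorem (\cref{thm:grid-tww}), and recover $G_\Fc$ by an FO interpretation/transduction so that \cref{thm:tww-transduction} transfers the bound and the contraction sequence. The only real difference is cosmetic: the paper uses a single edge relation and recovers arc orientations via a non-deterministic colour in the transduction, whereas you split it into $L$ and $R$ so a deterministic interpretation suffices, at the mild cost of needing \cref{thm:grid-tww} for a two-relation (finite-alphabet) ordered structure rather than an ordered graph — which, as you correctly point out, is what \cite{twin-width4} actually proves.
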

\begin{proof}
    Assume that~$\Fc$ is represented within~$[n]$.
    We identify intervals $[x,y] \in \Fc$ with pairs $(x,y)$, so that~$\Fc$ describes a binary relation on~$[n]$.
    Its adjacency matrix $\adj(\Fc,<)$ with the natural ordering of~$[n]$ is exactly~$M_\Fc$.
    It has no $k$-grid, hence by \cref{lem:incidence-matrix-grid}, the incidence matrix $\inc(\Fc,<,<_\lex)$ has no $4k$-grid, where~$<_\lex$ is the lexicographic ordering of~$[n]^2$.

    Construct now the following ordered graph $H = (V,E,<_V)$.
    The vertex set is $V \eqdef [n] \uplus \Fc$, ordered with~$[n]$ before~$\Fc$, with the natural ordering inside~$[n]$, and the lexicographic ordering inside~$\Fc$.
    In~$E$, each interval $[x,y] \in \Fc$ is connected to its two endpoints.
    Thus, the adjacency matrix of~$H$ consists of $\inc(\Fc,<,<_\lex)$ and its transpose, arranged as diagonal blocks.
    This matrix has no $(4k+1)$-grid.
    By \cref{thm:grid-tww}, one may compute in FPT time a contraction sequence of bounded width for~$(V,E,<_V)$.

    To conclude using \cref{thm:tww-transduction}, it suffices to show that there is an FO transduction~$\Phi$ such that $G_\Fc \in \Phi(H)$. Given~$H$, the transduction~$\Phi$ proceeds as follows.
    \begin{enumerate}
        \item Firstly, non-deterministic colouring of~$V$ is used to distinguish~$[n]$ and~$\Fc$,
        and also to indicate for each interval $[x,y] \in \Fc$ whether or not $x \le y$.

        \item Using these colours, we can write a formula $\phi(v,e)$ testing if $v \in [n]$ belongs to the interval $e\in \Fc$.
        Indeed, let $x,y \in [n]$ be the neighbours of~$e$, with~$x<y$.
        Then the interval~$e$ is either~$[x,y]$ or~$[y,x]$, indicated by the colour given to~$e$ at step~1.
        It suffices for~$\phi(v,e)$ to check that $x \le v \le y$ in the first case, and that $v \le y \lor x \le v$ in the second.
        
        As an extreme case, one could have $e = [x,x]$, which is adjacent to~$x$ only.
        Then~$x$ is the only point contained in~$e$.
        All of the above is easily expressed by a first-order formula.
        
        \item Now, one can test if~$e,f \in \Fc$ intersect with the formula $\psi(e,f) = \exists v.\ \phi(e,v) \land \phi(f,v)$.

        \item Finally, vertices of~$[n]$ are deleted, keeping only~$\Fc$ and the edges defined by~$\psi$. \qedhere
    \end{enumerate}
\end{proof}

We say a circular arc representation $\Fc$ is \emph{minimized} if its circular arcs are inclusion-wise minimal while representing~$G_\Fc$, i.e.\ for any~$I \in \Fc$, removing either endpoint of~$I$ would cause it to no longer intersect some other arc $I' \in \Fc$.
Clearly, a circular arc representation can be minimized in polynomial time.
We will use the following property.
\begin{lemma}\label{lem:min-arc-rep}
    If~$\Fc$ is a minimized circular arc representation and $I = [i,j] \in \Fc$ with~$i \neq j$, then there exist~$I_L, I_R \in \Fc$ such that $I \cap I_L = \{i\}$ and $I \cap I_R = \{j\}$.
\end{lemma}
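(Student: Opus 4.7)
The statement is essentially a direct unpacking of the minimization hypothesis, so the plan is short.

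The plan is to apply the definition of minimization to each endpoint of $I = [i,j]$ separately. Since $\Fc$ is minimized, shortening $I$ at its left endpoint—i.e., replacing $I$ by the smaller arc $I' \eqdef [i+1, j]$ (well-defined modulo $n$ since $i \neq j$)—produces a representation that no longer realises $G_\Fc$. Because $I' \subsetneq I$, intersections can only be lost, never gained, when passing from $I$ to $I'$. Hence there must exist some arc $I_L \in \Fc$ with $I \cap I_L \neq \emptyset$ and $I' \cap I_L = \emptyset$.

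The key observation is then that $I \cap I_L \subseteq I \setminus I' = \{i\}$. Combined with $I \cap I_L \neq \emptyset$, this forces $I \cap I_L = \{i\}$, which is exactly the desired witness on the left side.

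The argument for $I_R$ is entirely symmetric: shortening $I$ at its right endpoint by replacing it with $[i, j-1]$ yields, by minimization, some arc $I_R \in \Fc$ whose intersection with $I$ is lost, and the same containment gives $I \cap I_R = \{j\}$.

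I do not foresee any real obstacle here; the only mild subtlety is to make sure the discrete shortening $[i, j] \mapsto [i+1, j]$ is well-defined (which uses $i \neq j$) and to correctly handle the modular arithmetic of circular arcs, but no case analysis beyond observing $I \setminus \{i\} = [i+1, j]$ should be needed.
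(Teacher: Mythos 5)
Your proof is correct and takes essentially the same approach as the paper's one-line proof: you apply the minimization hypothesis to each endpoint of $I$ separately, observing that shortening $I$ by one endpoint can only lose (never gain) intersections, so some arc $I_L$ must intersect $I$ only at $\{i\}$, and symmetrically for $I_R$. Your spelled-out version correctly fills in the containment argument $I \cap I_L \subseteq I \setminus [i+1,j] = \{i\}$ that the paper leaves implicit.
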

\begin{proof}
    If~$I$ violates the condition, then it can be shortened by deleting either~$i$ or~$j$.
\end{proof}

\begin{lemma}\label{lem:cag-trans-pair}
    Let $k$ be a positive integer. Consider a minimized circular arc representation~$\Fc$ of a graph~$G_\Fc$.
    If~$M_\Fc$ contains a $(4k+2)$-grid, then~$G_\Fc$ contains a transversal pair~$T_k$.
\end{lemma}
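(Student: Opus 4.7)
I would first apply Lemma~\ref{lem:grid-disjoint} to the $(4k+2)$-grid in $M_\Fc$, extracting a $(2k+1)$-subgrid whose row indices (right endpoints) and column indices (left endpoints) lie in disjoint intervals of the cycle $[n]$. By rotating the circle, I may assume rows precede columns cyclically, giving disjoint intervals $R_1 < \cdots < R_{2k+1} < C_1 < \cdots < C_{2k+1}$ and arcs $I_{a,b} = [c_{a,b},\, r_{a,b}] \in \Fc$ for $a, b \in [2k+1]$ with $c_{a,b} \in C_b$ and $r_{a,b} \in R_a$. Each such $I_{a,b}$ is a wrap-around arc: going clockwise from $c_{a,b}$ it sweeps through $C_{b+1} \cup \cdots \cup C_{2k+1}$, wraps past the end of $[n]$, traverses $R_1 \cup \cdots \cup R_{a-1}$ in full, and ends at $r_{a,b}$ inside the initial segment of $R_a$.

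The middle of the transversal pair will use the even indices: set $b_{i,j} \eqdef I_{2i,\, 2j}$ for $i, j \in [k]$. For the left and right arcs, the plan is to use the minimization hypothesis (Lemma~\ref{lem:min-arc-rep}) at odd-indexed grid positions to pull out \emph{short} companion arcs. Pick $a_i \in \Fc$ with $a_i \cap I_{2i-1,\, 1} = \{r_{2i-1,\, 1}\}$; since the complement of $I_{2i-1,\, 1}$ is the short arc from $r_{2i-1,\, 1}$ clockwise to $c_{2i-1,\, 1}$, this forces $a_i$ to be contained in $\{r_{2i-1,\, 1}\} \cup R_{2i-1} \cup R_{2i} \cup \cdots \cup R_{2k+1}$, with at most an initial fragment of $C_1$ attached through the wrap. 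Symmetrically, pick $c_j \in \Fc$ with $c_j \cap I_{2k+1,\, 2j+1} = \{c_{2k+1,\, 2j+1}\}$; it is a short arc contained in $C_1 \cup \cdots \cup C_{2j+1}$ with at most a final fragment of $R_{2k+1}$ attached.

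The required adjacencies follow by direct set-theoretic inspection. For $a_i$ versus $b_{i',j'}$: the wrap-around arc $b_{i',j'} = I_{2i',\, 2j'}$ contains $R_1, \ldots, R_{2i'-1}$ in full, hence contains $r_{2i-1,\, 1} \in R_{2i-1}$ exactly when $2i-1 \le 2i'-1$, i.e.\ when $i \le i'$. When $i > i'$, the content of $b_{i',j'}$ (supported on $R_1 \cup \cdots \cup R_{2i'-1}$, the initial fragment of $R_{2i'}$, and $C_{2j'} \cup \cdots \cup C_{2k+1}$ with $2j' \ge 2$) is disjoint from that of $a_i$ (supported on $R_{2i-1} \cup \cdots \cup R_{2k+1}$ and possibly $C_1$). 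A symmetric check shows $c_j \sim b_{i',j'}$ iff $j \ge j'$, matching the transversal-pair edge condition from Definition~\ref{def:transversal-pair}. All $k^2+2k$ chosen arcs are distinct, since the $b_{i,j}$ are wrap-around arcs while the $a_i$ and $c_j$ are short, and within each family the designated endpoints lie in distinct $R$- or $C$-intervals.

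The main obstacle is that every grid arc $I_{a,b}$ wraps around and so shares the wrap region with every other grid arc, making them all pairwise adjacent in $G_\Fc$; no naive choice of the $a_i$ and $c_j$ from among the grid arcs themselves can produce the threshold adjacency required by $T_k$. The minimization hypothesis of Lemma~\ref{lem:min-arc-rep} is exactly what permits extracting short ``sidekick'' arcs attached at the grid endpoints, and the factor-two slack in the grid size ($4k+2$ rather than $2k+1$) provides the odd row and column indices reserved for these sidekicks while the even indices serve as the middle.
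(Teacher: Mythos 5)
Your proof is correct and follows essentially the same route as the paper's: extract a $(2k+1)$-subgrid with disjoint endpoint ranges via Lemma~\ref{lem:grid-disjoint}, take the even-indexed grid arcs as the $B$-part, and use Lemma~\ref{lem:min-arc-rep} at odd-indexed extremal grid arcs to pull out short companion arcs for $A$ and $C$, then verify the threshold adjacencies. The only cosmetic difference is that you develop the wrap-around case (left endpoints cyclically after right endpoints) while the paper writes out the non-wrap case and declares the other similar, which also lets you avoid the index reversal at the end.
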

\begin{proof}
    Consider a $(4k+2)$-grid in~$M_\Fc$. It consists of intervals~$I_{i,j}$ satisfying $\Lend(I_{i,j}) < \Lend(I_{i',j'})$ whenever~$i<i'$, and $\Rend(I_{i,j}) < \Rend(I_{i',j'})$ whenever~$j<j'$.
    By \cref{lem:grid-disjoint}, a $(2k+1)$-grid can be extracted so that the $x$- and $y$-coordinates belong to disjoint intervals.
    Thus we assume that $i,j \in [2k+1]$, and that $\Lend(I_{i,j}) < \Rend(I_{i',j'})$ for all~$i,i',j,j'$, the case $\Lend(I_{i,j}) > \Rend(I_{i',j'})$ being similar.

\begin{figure}
    \centering
    \begin{tikzpicture}
        \node at (0,0) {\includegraphics[width=7cm]{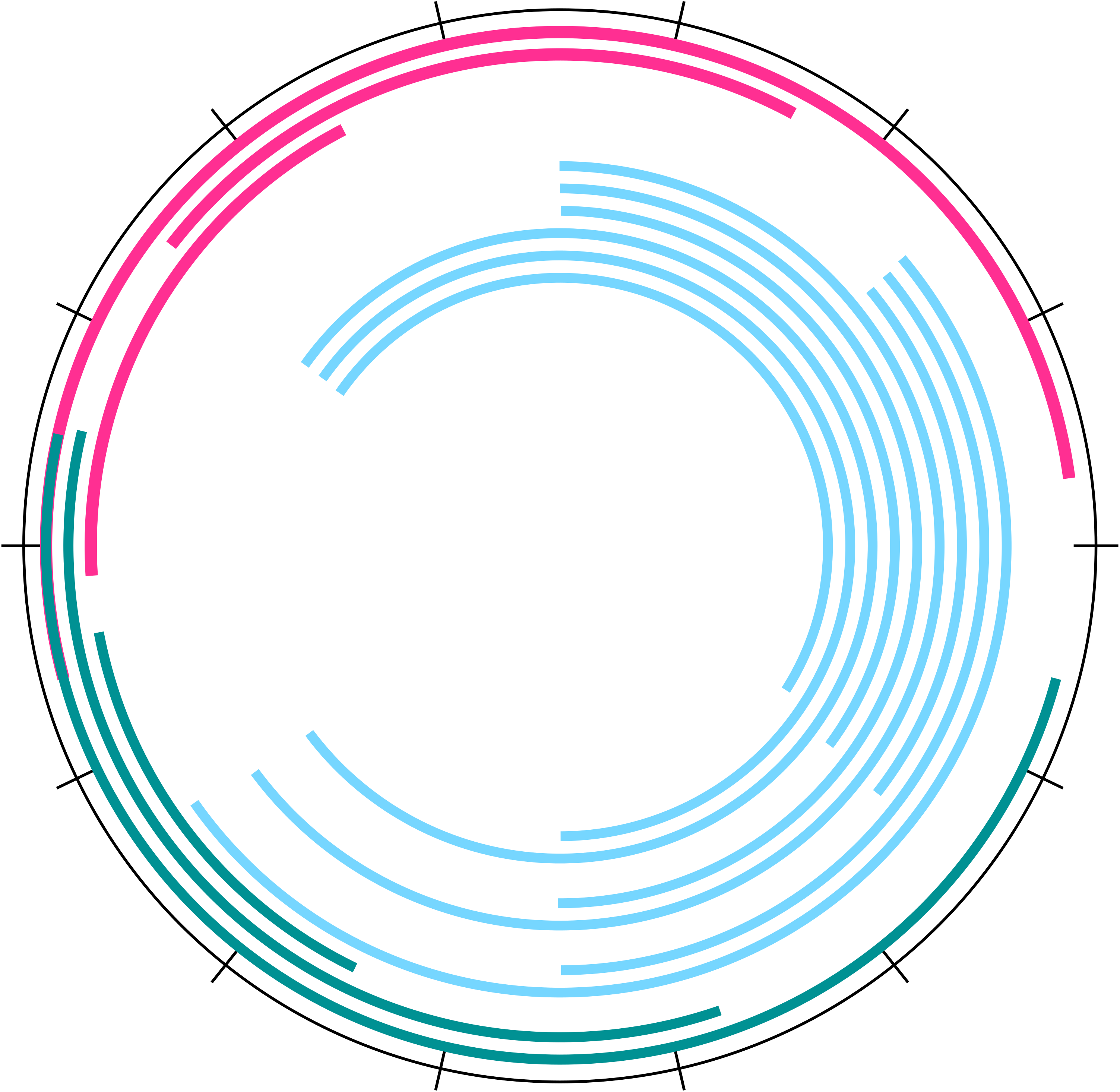}};
        
        \node at (2.9,0.9) {\color{magenta} \small $a_1$};
        \node at (1.0,2.75) {\color{magenta} \small $a_2$};
        \node at (-1.5,2.3) {\color{magenta}\small $a_3$};

        \node at (2.8,-1.2) {\color{teal} \small $c_3$};
        \node at (0.6,-2.865) {\color{teal} \small $c_2$};
        \node at (-2.4,-1.3) {\color{teal} \small $c_1$};
    \end{tikzpicture}
    \caption{A circular arc representation of a transversal pair $T_3$ with the tripartition $(A,B,C)$. \textcolor{magenta}{Magenta} circular arcs represent vertices in $A$, \textcolor{cyan}{cyan} circular arcs in $B$, and \textcolor{teal}{teal} circular arcs in $C$.}
    \label{fig:circular_arc_transversal_pair}
\end{figure}

    Let us now start constructing the desired transversal pair~$T_k$.
    The central vertices $b_{i,j} \eqdef I_{2i,2j}$ of the transversal pair are obtained by picking every other row and column of the grid.
    For the left vertices~$a_i$, we first extract auxiliary intervals $x_i \eqdef I_{2i+1,2k+1}$ for $i \in [k]$.
    We claim that for all~$i,i',j' \in [k]$
    \begin{align}
        & \text{if $i \ge i'$, then $\Lend(x_i) \in b_{i',j'}$, and} \label{eq:case-intersect} \\
        & \text{if $i < i'$, then $b_{i',j'} \subseteq x_i \setminus \{\Lend(x_i)\}$.} \label{eq:case-disjoint}
    \end{align}
    Indeed, if $i \ge i'$, then $2i+1 > 2i'$, which gives
    \begin{equation}
        \Lend(b_{i',j'}) = \Lend(I_{2i',2j'}) < \Lend(I_{2i+1,2k+1}) = \Lend(x_i).
    \end{equation}
    In addition, since all left endpoints are smaller than all right endpoints for intervals of the grid, we also have $\Lend(x_i) < \Rend(b_{i',j'})$. Thus~$b_{i',j'}$ contains~$\Lend(x_i)$, proving~\eqref{eq:case-intersect}.
    On the other hand, if~$i < i'$, then we have $2i+1 < 2i'$ and $2j' < 2k+1$, which gives 
    \begin{equation}
        b_{i',j'} = I_{2i',2j'} \subset I_{2i+1,2k+1} = x_i,
    \end{equation}
    the inclusion being strict at both endpoints, which implies~\eqref{eq:case-disjoint}.
    
    Using the minimality of $\mathcal{F}$ and \cref{lem:min-arc-rep}, there is an interval $a_i \in \mathcal{F}$ such that $x_i \cap a_i = \{\Lend(x_i)\}$.
    It follows from~\labelcref{eq:case-intersect,eq:case-disjoint} that~$a_i$ and~$b_{i',j'}$ intersect if and only if~$i \ge i'$.
    Symmetrically, we define $y_j = I_{1,2j-1}$, use \cref{lem:min-arc-rep} to find~$c_j$ such that $y_j \cap c_j = \{\Rend(y_j)\}$, and obtain that~$c_j$ and~$b_{i',j'}$ intersect if and only if~$j \le j'$.
    Up to reversing the order of indices, the former description of $a_i,b_{i,j},c_j$ gives a transversal pair. 
\end{proof}

Our main theorem for circular arc graphs directly follows from \cref{lem:cag-gridthm,lem:cag-trans-pair}.
\begin{theorem} \label{thm:delineation_circular_arc}
    Let~$\Cc$ be a subclass of circular arc graphs.
    Then~$\Cc$ has bounded twin-width if and only if it is monadically dependent.
    Further, when~$\Cc$ is hereditary, and assuming $\fpt \neq \aw$, these are also equivalent to
    FO model checking in~$\Cc$ being FPT.
\end{theorem}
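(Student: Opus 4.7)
My plan is to assemble Lemma \ref{lem:cag-gridthm} and Lemma \ref{lem:cag-trans-pair} into the dichotomy described in the introduction's overview. Together, these two lemmas state that for any $G \in \Cc$ with a minimized circular arc representation $\Fc$, either $M_\Fc$ has no $k$-grid (in which case $\tww(G) \leq f(k)$ and a contraction sequence is computable in FPT time), or $M_\Fc$ has a $(4k+2)$-grid (in which case $G$ contains a transversal pair $T_k$). Once this geometric dichotomy is isolated, the three equivalences follow by packaging together standard facts.

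For the equivalence of bounded twin-width and monadic dependence, the forward direction is a general consequence of Theorem \ref{thm:tww-transduction}: if $\Cc$ were monadically independent, it would transduce the class of all graphs, contradicting the preservation of bounded twin-width under transductions. For the converse, I would proceed contrapositively. Assume $\tww$ is unbounded on $\Cc$; then for each $k$, pick some $G \in \Cc$ with $\tww(G) > f(k)$, obtain a circular arc representation using Theorem \ref{thm:cag-detection}, and minimize it in polynomial time by iteratively discarding endpoints that can be removed without altering the intersection graph. By Lemma \ref{lem:cag-gridthm}, $M_\Fc$ must contain a $k$-grid, and Lemma \ref{lem:cag-trans-pair} then produces a transversal pair of size growing with $k$ in $G$. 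Letting $k \to \infty$, Lemma \ref{lem:transversal-pair-NIP} yields that $\Cc$ is monadically independent.

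For the hereditary equivalence with FPT model checking under $\fpt \neq \aw$, I would conclude via the above equivalence. If $\Cc$ has bounded twin-width $d$, then arbitrarily large transversal pairs cannot occur in $\Cc$ (they would force monadic independence and hence unbounded twin-width), so by the contrapositive of Lemma \ref{lem:cag-trans-pair} there is a fixed $K = K(d)$ such that $M_\Fc$ has no $K$-grid for any minimized $\Fc$ representing a graph in $\Cc$; the algorithm of Lemma \ref{lem:cag-gridthm} then computes a contraction sequence of constant width in polynomial time, and Theorem \ref{thm:model-checking} delivers FPT FO model checking. Conversely, if $\Cc$ is hereditary and admits FPT model checking, then the contrapositive of Theorem \ref{thm:independent-hardness} under $\fpt \neq \aw$ rules out monadic independence, so by the first equivalence $\Cc$ must have bounded twin-width.

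The main substantive work has already been done in proving Lemmas \ref{lem:cag-gridthm} and \ref{lem:cag-trans-pair}; the theorem itself is essentially an assembly argument, and I expect no serious obstacle. The only points requiring minor care are verifying that the greedy minimization of a circular arc representation is polynomial and preserves the intersection graph, and observing that for any fixed hereditary subclass $\Cc$ of bounded twin-width the parameter $K(d)$ is a constant, so the FPT algorithm need not know $d$ in advance to achieve the required running time.
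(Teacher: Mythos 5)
Your proposal is correct and uses the same key ingredients and dichotomy as the paper: minimize a circular arc representation, then apply Lemma~\ref{lem:cag-gridthm} when the endpoint matrix avoids large grids (yielding computable contraction sequences and FPT model checking) and Lemma~\ref{lem:cag-trans-pair} plus Lemma~\ref{lem:transversal-pair-NIP} when it does not (yielding monadic independence and, via Theorem~\ref{thm:independent-hardness}, \aw-hardness). The paper organizes the argument as a single dichotomy on whether the family $\{M_{\Fc_G} : G \in \Cc\}$ has bounded grid size rather than through the contrapositive detours you take, but the substance and the lemmas invoked are identical.
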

\begin{proof}
    For each $G \in \Cc$, call~$\Fc_G$ the circular arc representation computed by \cref{thm:cag-detection}, after minimizing it.
    Consider the endpoint matrices $\Mc = \{M_{\Fc_G} : G \in \Cc\}$.
    If matrices in~$\Mc$ have no $k$-grid for some~$k \in \N$, then by \cref{lem:cag-gridthm} contraction sequences of width~$f(k)$ can be computed in FPT time for graphs in~$\Cc$.
    Thus~$\Cc$ has bounded twin-width (and thus is monadically dependent), and model checking in~$\Cc$ is FPT by \cref{thm:model-checking}.

    If however matrices in~$\Mc$ have arbitrarily large grids, then by \cref{lem:cag-trans-pair}, $\Cc$ contains arbitrarily large transversal pairs.
    Thus~$\Cc$ is monadically independent by \cref{lem:transversal-pair-NIP}.
    If~$\Cc$ is hereditary, this implies by \cref{thm:independent-hardness} that model checking in~$\Cc$ is \aw-hard.
\end{proof}

\section{Delineation of Axis-Parallel Unit Segment Graphs}
\label{sec:apus}
This section proves \cref{thm:main} for APUS graphs.
Recall that we work with non-degenerate APUS graphs: intersections between two horizontal (resp.\ vertical) segments are ignored.

As in the circular arc case, to a family~$\Fc$ of APUS, we associate the \emph{endpoint matrix}~$M_\Fc$:
the columns (resp.\ rows) of are the $x$- ($y$-) coordinates of all endpoints of segments in~$M_\Fc$,
and the entry at position~$(x,y)$ is~1 if and only $\Hseg(x,y) \in \Fc$ or $\Vseg(x,y) \in \Fc$.
A variant of \cref{lem:cag-gridthm} shows that when~$M_\Fc$ has no $k$-grid, then~$G_\Fc$ has bounded twin-width.
However, the converse fails: even with appropriate minimality conditions on~$\Fc$, a grid in~$M_\Fc$ does not imply that~$G_\Fc$ has unbounded twin-width.
This e.g.\ fails if the grid is formed by segments that are pairwise at distance more than~1 from each other.
The solution, inspired by \cite[Lemma~56]{bonnet2022delineation}, is to partition the APUS graph into regions corresponding to unit squares, and look for a grid in each region independently.

\subsection{Splitting along unit squares}
\begin{figure}
    \centering
    \includegraphics[width=6cm]{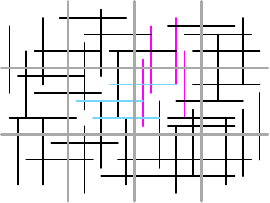}
    \caption{An instance being split along unit squares (drawn in {\color{gray}gray}).
    The {\color{cyan}cyan} segments are the set $\Hc_{s-1,t}$, while the {\color{magenta}magenta} segments are the set $\Vc_{s,t}$. Together, they form $\Fc_{s,t,\ulc}$.
    \label{fig:unit-grid}}
\end{figure}
In an APUS family~$\Fc$, for~$i,j \in \Z$, denote by~$\Hc_{s,t}$ the subset of segments $\Hseg(x,y)$ where $x \in [s,s+1)$ and $y \in [t,t+1)$, and define~$\Vc_{s,t}$ similarly for vertical segments.
Clearly the sets~$\Hc_{s,t},\Vc_{s,t}$ for $s,t \in \Z$ partition~$\Fc$.
Also, observe that segments in~$\Hc_{s,t}$ and~$\Vc_{s',t'}$ can intersect only if $s' \in \{s,s+1\}$ and $t' \in \{t-1,t\}$.
Let $\Fc_{s,t,\urc}$ denote the set $\Hc_{s,t} \cup \Vc_{s,t}$, and $E_{s,t,\urc}$ the edges of~$G_\Fc$ contained in~$\Fc_{s,t,\urc}$.
In other words, edges in~$E_{s,t,\urc}$ come from segments intersecting in the unit square $[s,s+1) \times [t,t+1)$,
for segments coming from the right and top sides of this square (hence~$\urc$ in the notation).
Similarly, define $\Fc_{s,t,\ulc} = \Hc_{s-1,t} \cup \Vc_{s,t}$,  $\Fc_{s,t,\lrc} = \Hc_{s,t} \cup \Vc_{s,t-1}$, and  $\Fc_{s,t,\llc} = \Hc_{s-1,t} \cup \Vc_{s,t-1}$, as well as $E_{s,t,\ulc},E_{s,t,\lrc},E_{s,t,\llc}$ the corresponding edge sets.
The sets~$E_{s,t,d}$ for $s,t \in \Z$ and $d \in \{\urc,\ulc,\lrc,\llc\}$ partition the edge set of~$G_\Fc$.
We work on each~$E_{s,t,d}$ independently with the techniques of \cref{sec:cag},
and show that when all have bounded twin-width, then so does~$G_\Fc$. Let us start with the latter.

In general, edge partitions do not preserve bounded twin-width, but they do for ordered graphs (\cref{thm:ordered-tww-ramsey}).
To use this result, we define an ordering~$<_\seg$ of segments.

A natural attempt is to take the lexicographic ordering~$<_\lex$ of segments,
i.e.\ $\Hseg(x,y) <_\lex \Hseg(x',y')$ if and only if $x<x'$, or $x=x'$ and $y<y'$, and similarly for vertical segments.
However this works poorly with twin-width:
imagine segments $\Fc = \{h_1,\dots,h_n,v_1,\dots,v_n\}$ such that $h_i,v_i$ intersect close to~$(0,2i)$.
In particular, $h_i,v_j$ do not intersect for $i\neq j$.
By slightly tweaking their $x$-coordinates, any permutation of~$v_1,\dots,v_n$ can be realised as the lexicographic ordering~$<_\lex$.
Thus even for these very simple segment families, the ordered intersection graph $(\Fc,E,<_\lex)$ can be an arbitrary ordered matching, which are known to have unbounded twin-width (cf.~\cite{twin-width4}).

The solution is to have two layers of lexicographic ordering, at local and global scales.
That is, we first order the sets~$\Hc_{s,t}$ lexicographically by~$(s,t)$,
and then inside each~$\Hc_{s,t}$ order the segments~$\Hseg(x,y)$ lexicographically as in~$<_\lex$.
The same ordering applies to vertical segments.
Finally, all horizontal segments are placed before all vertical ones.
Crucially, in~$<_\seg$, and unlike~$<_\lex$, each set~$\Hc_{s,t}$ or~$\Vc_{s,t}$ is an interval.

For an APUS family~$\Fc$ with intersection graph~$G$, denote by $G^\seg \eqdef (\Fc,E,<_\seg)$ the same intersection graph ordered by~$<_\seg$.
Further, for $s,t \in \Z$ and $d \in \{\urc,\ulc,\lrc,\llc\}$,
let $G^\seg_{s,t,d} = (\Fc_{s,t,d},E_{s,t,d},<_\seg)$ be the restriction to~$\Fc_{s,t,d}$.
\begin{lemma}\label{lem:unitgrid-apus}
    If $\tww(G^\seg_{s,t,d}) \le k$ for all~$s,t,d$, then $\tww(G^\seg) \le f(k)$ for some function~$f$.
\end{lemma}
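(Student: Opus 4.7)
The plan is to combine Lemma~\ref{lem:ordered-components}, applied separately for each direction $d \in \{\urc,\ulc,\lrc,\llc\}$, with Theorem~\ref{thm:ordered-tww-ramsey}, applied over the four directions. The key observation is that the segment ordering~$<_\seg$ was designed precisely to align with the partition into regions~$\Hc_{s,t},\Vc_{s,t}$: each such region is an interval of~$<_\seg$, the horizontal regions are contiguous in~$<_\seg$ and ordered by $(s,t)$ lexicographically, and the vertical regions come entirely after all horizontal ones.

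First I would fix a direction~$d$ and consider the ordered graph $G^\seg_d \eqdef (\Fc, E_d, <_\seg)$ with $E_d \eqdef \bigcup_{s,t} E_{s,t,d}$. By construction every edge of $E_{s,t,d}$ lies between a horizontal region $A_{s,t}$ and a vertical region $B_{s,t}$, where $A_{s,t}$ is $\Hc_{s,t}$ when $d \in \{\urc,\lrc\}$ and $\Hc_{s-1,t}$ when $d \in \{\ulc,\llc\}$, and symmetrically for $B_{s,t}$. Crucially, the shifts $(s,t) \mapsto (s-1,t)$ and $(s,t) \mapsto (s,t-1)$ preserve lex order, so when the (finitely many non-empty) indices are enumerated lexicographically, the sequence of $A_{s,t}$ still partitions the horizontal segments into consecutive $<_\seg$-intervals in the correct order, and likewise for $B_{s,t}$. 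This places $G^\seg_d$ exactly in the setting of Lemma~\ref{lem:ordered-components}: all the $A$'s precede all the $B$'s, both families are indexed compatibly by $(s,t)$, and edges only occur between $A_{s,t}$ and $B_{s,t}$ sharing the same index.

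For the required block bound, note that restricting $E_d$ to $A_{s,t} \cup B_{s,t}$ gives exactly $E_{s,t,d}$, since edges of $E_{s',t',d}$ for $(s',t') \neq (s,t)$ involve segments in disjoint regions and cannot fall inside $A_{s,t} \cup B_{s,t}$. Hence the induced ordered graph on this block is precisely $G^\seg_{s,t,d}$, which has twin-width at most~$k$ by hypothesis. Lemma~\ref{lem:ordered-components} then yields $\tww(G^\seg_d) \le 2k+2$. Since the four edge sets $E_d$, for $d \in \{\urc,\ulc,\lrc,\llc\}$, partition $E(G^\seg)$ over the common ordered vertex set, Theorem~\ref{thm:ordered-tww-ramsey} applied with $\ell = 4$ gives $\tww(G^\seg) \le R(2k+2, 4)$, which we take as $f(k)$.

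I do not expect a serious obstacle: the real work has been front-loaded into the definition of~$<_\seg$, which is what makes every spatial region into a $<_\seg$-interval and places all horizontal segments before all vertical ones. The only verification needed is that for each of the four directions~$d$, the shifted indexing still produces a sequence compatible with~$<_\seg$, which reduces to the trivial fact that translating one coordinate of an integer pair by a constant commutes with lex order; empty $A_{s,t}$ or $B_{s,t}$ are harmless in Lemma~\ref{lem:ordered-components}.
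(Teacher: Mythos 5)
Your proof is correct and follows essentially the same route as the paper: partition the edges by direction into $\Ec_d = \bigcup_{s,t} E_{s,t,d}$, handle each $(\Fc,\Ec_d,<_\seg)$ via Lemma~\ref{lem:ordered-components}, and combine the four directions with Theorem~\ref{thm:ordered-tww-ramsey}. Your explicit justification that the index shifts $(s,t)\mapsto(s-1,t)$, $(s,t)\mapsto(s,t-1)$ commute with lexicographic order (so that the matched $A_{s,t}$, $B_{s,t}$ intervals appear in compatible $<_\seg$-order for all four directions, not just $\urc$) makes rigorous what the paper dispatches with ``the others being similar.''
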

\begin{proof}
    Group the edge sets~$E_{s,t,d}$ into~4 classes according to the direction~$d$,
    i.e.\ define $\Ec_d \eqdef \bigcup_{s,t} E_{s,t,d}$.
    The sets~$\Ec_d$ for $d \in \{\urc,\ulc,\lrc,\llc\}$ are a partition of the edges,
    hence by \cref{thm:ordered-tww-ramsey}, it suffices to prove that each $(\Fc,\Ec_d,<_\seg)$ has bounded twin-width.
    
    We consider~$\Ec_\urc$, the others being similar.    
    In~$\Ec_\urc$, vertices of~$\Hc_{s,t}$ are only adjacent to those of~$\Vc_{s,t}$.
    Also, each~$\Hc_{s,t}$ and~$\Vc_{s,t}$ is an interval of~$<_\seg$.
    The intervals~$\Hc_{s,t}$ are ordered lexicographically by~$(s,t)$ between themselves,
    and the corresponding lexicographic ordering is used for the~$\Vc_{s,t}$.
    Finally, all~$\Hc_{s,t}$ come before all~$\Vc_{s',t'}$.
    This fits all the requirements of \cref{lem:ordered-components}, proving that~$(\Fc,\Ec_d,<_\seg)$ has twin-width at most~$2k+2$.
\end{proof}

\subsection{APUS graphs within a unit square}
Let us now focus on each~$G^\seg_{s,t,d}$, to find either bounded twin-width or a transversal pair, similar to \cref{sec:cag}.

Here, it must be mentioned that the graph~$G_{s,t,d}$ is the complement of a circular arc graph.
For~$G_{0,0,\urc}$, this is through the mapping $\Hseg(x,y) \mapsto [x,1+y]$ and $\Vseg(x,y) \mapsto [1+y,x]$ (the intervals being in $[0,2)$ modulo~2).
It is thus no surprise that the techniques of \cref{sec:cag} apply.
The results however cannot be applied as a blackbox: to construct transversal pairs when the matrix~$M_{\Fc_{s,t,d}}$ has a large grid, we may also need segments outside~$\Fc_{s,t,d}$.

When~$M_{\Fc_{s,t,d}}$ has no $k$-grid, its intersection graph has bounded twin-width.
By symmetry, it is enough to prove the result for~$\Fc_{0,0,\urc}$.
The only difference with \cref{lem:cag-gridthm} is that we now care about the ordered graph, with ordering~$<_\seg$.
\begin{lemma}\label{lem:gridthm-apus}
    Let~$\Fc$ be a family of APUS of the form $\Hseg(x,y)$ or $\Vseg(x,y)$ with $x,y \in [0,1)$.
    If the matrix~$M_\Fc$ has no $k$-grid, then $\tww(G_\Fc^\seg) \le f(k)$ for some function~$f$.
\end{lemma}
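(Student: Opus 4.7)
My plan is to adapt the proof of \cref{lem:cag-gridthm} to this bipartite setting, handling both segment types and preserving the ordering $<_\seg$. Let $X$ and $Y$ be the sets of $x$- and $y$-coordinates of bases of segments in $\Fc$, and let $V \eqdef X \uplus Y$ be ordered by $<_V$ placing $X$ before $Y$, each in natural order. Consider the binary relations $R_H, R_V \subseteq X \times Y$ (embedded in $V \times V$) defined by $(x,y) \in R_H$ iff $\Hseg(x,y) \in \Fc$, and analogously for $R_V$. Since the 1s of $\adj(R_H,<_V)$ and $\adj(R_V,<_V)$ are (up to transposing) subsets of the 1s of $M_\Fc$, neither contains a $k$-grid. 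By \cref{lem:incidence-matrix-grid}, the incidence matrices $\inc(R_H,<_V,<_\lex)$ and $\inc(R_V,<_V,<_\lex)$ have no $4k$-grid. A key point here is that $<_\lex$ on $X \times Y$ agrees with $<_\seg$ restricted to the corresponding type of segment, since within a unit square $<_\seg$ is itself lexicographic in the base point.

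I then construct the ordered graph $H = (V \uplus \Hc \uplus \Vc, E_H, <_H)$, where each segment is joined by $E_H$ to its base $x$- and $y$-coordinates, and $<_H$ places $V$ first (using $<_V$), then $\Hc$, then $\Vc$ (both using $<_\lex$). The adjacency matrix $\adj(H,<_H)$ decomposes into $\inc(R_H)$, $\inc(R_V)$ and their transposes as off-diagonal blocks, with zeros elsewhere. Applying \cref{thm:grid-ramsey} to a 4-coloring of the 1s by which block they lie in, any sufficiently large grid in $\adj(H,<_H)$ yields a monochromatic subgrid contained in a single incidence block, contradicting the bound above. Hence $\adj(H,<_H)$ has no $R(4k+1,4)$-grid, and \cref{thm:grid-tww} provides a bound $\tww(H) \le g(k)$.

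Finally, $G_\Fc^\seg$ is recovered from $H$ by a fixed FO transduction. Non-deterministic coloring marks $X$, $Y$, $\Hc$, and $\Vc$. For $h \in \Hc$ and $v \in \Vc$ with bases $(x_h,y_h), (x_v,y_v) \in [0,1)^2$, intersection holds if and only if $x_h \le x_v$ and $y_h \ge y_v$, since the other two inequalities needed for APUS intersection are automatic given bases in $[0,1)^2$. As each segment is adjacent in $H$ to its base coordinates (identifiable via the $X$- and $Y$-colors), and comparisons are available through $<_H$, this condition is expressible by an FO formula $\psi(h,v)$. Deleting the vertices in $V$ then yields $G_\Fc^\seg$, and \cref{thm:tww-transduction} gives the desired bound $\tww(G_\Fc^\seg) \le f(k)$. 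The main subtlety, which the construction is designed to handle, is coordinating the orderings: $<_H$ must restrict to $<_\seg$ on $\Hc \cup \Vc$ after deletion, and the lexicographic orderings coming from \cref{lem:incidence-matrix-grid} must match $<_\seg$ on each segment type, which they do because $<_\seg$ itself is lex by base point inside a unit square.
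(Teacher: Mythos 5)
Your proof is correct and follows essentially the same route as the paper: encode the segments as a relation on the coordinate set $X \uplus Y$, pass to the incidence matrix via \cref{lem:incidence-matrix-grid}, apply the grid theorem for ordered graphs to an auxiliary incidence structure $H$, and then recover $G_\Fc^\seg$ through a fixed FO transduction using the observation that intersection in $[0,1)^2$ reduces to two coordinate inequalities. The only variation is that you split the relation into $R_H$ and $R_V$ and invoke \cref{thm:grid-ramsey} to combine the four nonzero blocks, whereas the paper treats $\Fc$ as a single relation (identifying $h_{x,y}$ with $v_{x,y}$) and bounds the grid size of the combined matrix directly; your split is a slightly cleaner handling of the case where both $\Hseg(x,y)$ and $\Vseg(x,y)$ occur, but is conceptually the same argument.
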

\begin{proof}
    Firstly, since the segments of~$\Fc$ have coordinates in the unit square~$[0,1)^2$,
    the ordering~$<_\seg$ by definition coincides with the lexicographic ordering~$<_\lex$ on~$\Fc$.

    Let~$X,Y \subset \R$ denote the columns and rows of~$M_\Fc$, i.e.\ the sets of $x$- and $y$-coordinates of endpoints of segments in~$\Fc$ respectively.
    We think of~$\Fc$ as a binary relation in $X \times Y$, where a segment~$h_{x,y}$ or~$v_{x,y}$ is interpreted as the pair~$(x,y)$.
    We order~$X \cup Y$ as $X < Y$, with the natural ordering inside each of them.
    Then the adjacency matrix $\adj(\Fc,<)$ consists of~$M_\Fc$, plus some empty zones to extend the columns to~$Y$ and rows to~$X$.
    Thus~$\adj(\Fc,<)$ has no $k$-grid, and by \cref{lem:incidence-matrix-grid}, the incidence matrix $\inc(\Fc,<,<_\lex)$ has no $4k$-grid.

    Construct now the following ordered graph $H = (V,E,<)$.
    The vertex set is $V \eqdef X \uplus Y \uplus \Fc$, ordered as $X < Y < \Fc$, with the natural ordering inside~$X,Y$, and the lexicographic ordering~$<_\lex$ inside~$\Fc$.
    The edges in~$E$ connect each segment~$h_{x,y}$ or~$v_{x,y}$ in~$\Fc$ to the coordinates~$x \in X$ and~$y \in Y$ of its endpoint.
    Thus the adjacency matrix of~$E$ consists of $\inc(\Fc,<,<_\lex)$ and its transpose, arranged as diagonal blocks.
    One may check that this matrix has no $(4k+1)$-grid,
    hence by \cref{thm:grid-tww}, $H$ has twin-width bounded by some~$f(k)$.

    To conclude, it suffice to show that there is a fixed first-order transduction~$\Phi$ such that $G_\Fc^\lex \in \Phi(H)$.
    Indeed, this implies by \cite[Theorem~8.1]{Bonnet2022twinwidth1} that $\tww(G_\Fc^\lex) \le g(\tww(H) \le g(f(k))$ for some function~$g$ depending only on~$\Phi$, proving the result.
    The transduction~$\Phi$ proceeds as follows given~$H$:
    \begin{itemize}
        \item Firstly, non-deterministic colouring is used to distinguish the subsets of vertices~$X$, $Y$, and~$\Fc$,
        and to indicate whether a vertex~$w \in \Fc$ is an horizontal or vertical segment.
        
        \item Next, observe that since the coordinates~$x,y,x',y'$ are restricted to~$[0,1)$, the segments~$\Hseg(x,y)$ and~$\Vseg(x',y')$ intersect if and only $x \le x'$ and $y' \le y$.
        To test if two segments~$s,s' \in \Fc$ intersect,
        it thus suffices to check that (1) one is horizontal and the other vertical (using the colours given at step~1),
        and (2) their coordinates satisfy the former inequalities.
        The later is checked by following edges of~$E$ from~$s,s'$ to their coordinates, and comparing them with the ordering~$<$.
        
        All of the above can be expressed as a first-order formula~$\phi(s,s')$.
        
        \item Finally, vertices of~$X,Y$ are deleted, and the edge set is redefined by the formula~$\phi(h,v)$ above, while keeping the same ordering~$<$ whose restriction to~$\Fc$ is~$<_\lex$.
    \end{itemize}
    Assuming that the correct non-deterministic choices are done in the colouring step,
    it is clear that~$\phi(s,s')$ holds if and only if~$s,s'$ intersect,
    and the structure obtained is thus $G_\Fc^\lex = G_\Fc^\seg$, proving the result.
\end{proof}

We now show that when~$M_{\Fc_{s,t,d}}$ does contain a grid, then~$G_\Fc$ contains a transversal pair.
As in \cref{sec:cag}, this requires some minimality hypothesis.

Here, we assume discrete coordinates as in \cref{sec:prelim:geom},
i.e.\ in an APUS family~$\Fc$, the coordinates of segments are non-negative multiples of some $\eta(\Fc)>0$.
Subject to this condition, we call~$\Fc$ \emph{minimized} if its segments cannot be pushed towards the bottom-left while preserving the intersection graph.
Formally, for a fixed choice of $\eta = \eta(\Fc)$, we say that~$\Fc$ is minimized if there is no different family~$\Fc'$ with the same $\eta = \eta(\Fc')$ and with a bijection~$\phi : \Fc \to \Fc'$
that preserves the vertical/horizontal orientation, is an isomorphism of the intersection graphs,
and such that the endpoint of~$\phi(s)$ is coordinate-wise no larger than that of~$s$.

\begin{lemma}\label{lem:minimal-apus}
    In a minimized APUS family~$\Fc$, for any horizontal $\Hseg(x,y) \in \Fc$, there are:
    \begin{enumerate}
        \item \label{item:horiz-min} a vertical segment~$\Vseg(x-\eta,y')$ or~$\Vseg(x+1,y')$ with $y' \le y \le y'+1$, unless~$x=0$, and
        \item \label{item:vert-min} a vertical segment~$\Vseg(x',y-1-\eta)$ or~$\Vseg(x',y)$ with $x \le x' \le x+1$, unless~$y=0$,
    \end{enumerate}
    and similarly when flipping the roles of vertical and horizontal, and of $x$- and $y$-coordinates.
\end{lemma}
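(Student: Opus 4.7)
The plan is to argue by contrapositive via the minimization property: if the conclusion of item~\ref{item:horiz-min} (resp.~item~\ref{item:vert-min}) fails for some horizontal segment $\Hseg(x,y) \in \Fc$, then that segment can be shifted left (resp.\ down) by $\eta \eqdef \eta(\Fc)$ without altering the intersection graph, contradicting the minimality of $\Fc$. The bijection $\phi : \Fc \to \Fc'$ witnessing non-minimality is the identity outside $\Hseg(x,y)$, so we only need to check that the intersection relation is preserved.

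For item~\ref{item:horiz-min}, suppose $\Hseg(x,y) \in \Fc$ with $x > 0$, and let $\Fc'$ be obtained from $\Fc$ by replacing $\Hseg(x,y)$ with $\Hseg(x-\eta, y)$; this is a valid position since $x$ is a positive multiple of $\eta$, hence $x - \eta \ge 0$. Because horizontal-horizontal intersections are ignored, the only pairs whose adjacency can change are those between the shifted segment and a vertical segment. Now $\Vseg(a, b)$ meets $\Hseg(x, y)$ iff $a \in [x, x+1]$ and $b \le y \le b+1$, while it meets $\Hseg(x-\eta, y)$ iff $a \in [x-\eta, x+1-\eta]$ and $b \le y \le b+1$. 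The set of first coordinates distinguishing the two conditions is $[x-\eta, x) \cup (x+1-\eta, x+1]$, and among multiples of~$\eta$ this contains only $a = x-\eta$ and $a = x+1$. Hence if no segment of the form $\Vseg(x-\eta, y')$ or $\Vseg(x+1, y')$ with $y' \le y \le y'+1$ exists in $\Fc$, the shift preserves the intersection graph, contradicting minimization.

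The proof of item~\ref{item:vert-min} is strictly symmetric: shift $\Hseg(x,y)$ downward to $\Hseg(x, y-\eta)$ when $y > 0$, and a parallel computation shows that the set of $b$-values distinguishing the original and shifted intersections is $[y-1-\eta, y-1) \cup (y-\eta, y]$, whose only multiples of $\eta$ are $b = y - 1 - \eta$ and $b = y$, while the relevant range for $a$ is $[x, x+1]$ in both cases. This yields the two candidate segments $\Vseg(x', y-1-\eta)$ and $\Vseg(x', y)$ listed in the statement. The flipped-role statements for vertical segments follow by the same reasoning after swapping horizontal and vertical segments, and $x$- and $y$-coordinates.

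The only subtle point is to correctly account for which multiples of $\eta$ fall in the half-open intervals of length $\eta$ that arise from a single shift; this discreteness is precisely what allows the conclusion to name only two candidate segment positions per item, rather than a range. Everything else is a routine case analysis combined with the definition of minimality.
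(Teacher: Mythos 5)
Your proposal is correct and takes essentially the same approach as the paper: the paper's proof is a one-sentence contrapositive ("if not, the segment can be moved left or down by $\eta$ without changing the intersection graph, contradicting minimality"), and your argument simply spells out the case analysis verifying which coordinates of a vertical segment can distinguish the original and shifted horizontal segment, and why the discreteness (multiples of $\eta$, with $1/\eta \in \N$) reduces the candidates to exactly the two positions named in the statement.
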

\begin{proof}
    If that is not the case, $\Hseg(x,y)$ can be moved to the left (in case~1) or the bottom (in case~2) by~$\eta$ without changing the intersection graph, contradicting minimality.
\end{proof}

\begin{lemma}\label{lem:transversal-apus}
    Consider a minimized APUS family~$\Fc$, and its restriction~$\Fc_{s,t,d}$ for some~$s,t \in \Z$ and $d \in \{\urc,\ulc,\lrc,\llc\}$.
    There is a function $f: \N \rightarrow \N$ such that if~$M_{\Fc_{s,t,d}}$ has an $f(k)$-grid, then $G_\Fc$ contains a transversal pair~$T_k$.
\end{lemma}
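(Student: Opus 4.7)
The strategy parallels that of \cref{lem:cag-trans-pair} for circular arcs, adapted to the unit-square geometry. Without loss of generality take $d = \urc$, so both endpoints of every segment indexed by a $1$ in the grid lie in $[s, s+1) \times [t, t+1)$, and within this region a horizontal $\Hseg(x, y)$ intersects a vertical $\Vseg(x', y')$ if and only if $x \le x'$ and $y' \le y$. The plan is to extract $b_{i,j}$ directly from the grid, and to produce vertical witnesses $a_i, c_j$ by applying \cref{lem:minimal-apus} to carefully chosen separator horizontals.

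Starting from a sufficiently large grid in $M_{\Fc_{s,t,\urc}}$, I would first apply \cref{thm:grid-ramsey} to pass to a sub-grid in which every $1$ corresponds to a segment of uniform orientation. By symmetry assume all are horizontal, giving $h_{i, j} = \Hseg(x_{i, j}, y_{i, j})$ with $x$ strictly monotone in $i$ and $y$ strictly monotone in $j$, and set $b_{i, j} = h_{2i, 2j}$ for $i, j \in [k]$. To build $a_i$, I invoke \cref{lem:minimal-apus} (part~1) on an auxiliary horizontal $h_{2i+1, N}$ for $N$ well beyond $2k$; the lemma produces a vertical segment placed either just to the left of $h_{2i+1, N}$'s left endpoint or just past its right endpoint. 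Using grid monotonicity, the $x$-coordinate of $a_i$ in either case separates $X_{2i}$ from $X_{2i+2}$ (or, in the ``past the right endpoint'' case, lies outside the unit square but still acts as a separator), yielding a half-graph of the form $a_i \sim b_{i', j'}\iff i\ge i'$ (in one case) or $\iff i < i'$ (in the other)—both of which are transversal-pair $A$--$B$ edges after an elementary re-indexing. Applying \cref{thm:grid-ramsey} once more over the two cases makes the choice uniform. The $c_j$'s are constructed symmetrically by invoking part~2 of \cref{lem:minimal-apus} on separator horizontals $h_{N, 2j+1}$, whose witnesses now control the $y$-coordinate rather than the $x$.

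The main technical obstacle is coordinating the \emph{orthogonal} coordinate of each witness: \cref{lem:minimal-apus} only constrains the $y$-coordinate of $a_i$ to a unit-length interval around the separator's height, so the $y$-range of $a_i$ may fail to cover all $y$-coordinates of the $b$'s, and symmetrically for $c_j$'s $x$-coordinate. I would resolve this by combining a further Ramsey reduction making the case of \cref{lem:minimal-apus} uniform across $i$ (and $j$) with a sub-gridding of $B$ in the $j$-direction so that the $y$-coordinates of the surviving $b$'s lie within the common $y$-range of the $a_i$'s (and symmetrically in the $i$-direction for $c_j$). Setting $f(k)$ to a sufficient composition of the various Ramsey bounds guarantees that after all reductions the remaining $b$-grid still has size at least $k \times k$, so that the resulting $a_i$, $b_{i,j}$, $c_j$ form the desired transversal pair $T_k$ in $G_\Fc$.
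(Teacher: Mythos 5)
Your proposal is correct and follows essentially the same route as the paper's proof: apply \cref{thm:grid-ramsey} to pass to a sub-grid of uniform orientation, invoke \cref{lem:minimal-apus} on a well-chosen grid element to produce vertical witnesses, use grid monotonicity to establish the half-graph condition, and apply Ramsey again to uniformise the several cases of \cref{lem:minimal-apus}. The paper implements this by applying $\hormin$ and $\vermin$ directly to corner elements $h_{2i,2k}$ and $h_{2k,2j}$ of the sub-grid, whereas you use auxiliary elements $h_{2i+1,N}$, $h_{N,2j+1}$ for large $N$; these are equivalent since taking $N$ to be the maximal surviving column index makes your separators coincide in spirit with the paper's corner elements. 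You correctly spot the central technical obstacle (the witness vertical's $y$-range is only a unit interval around the separator's height, so it may fail to cover all chosen $b$'s), which is exactly what the paper's third colour coordinate handles (whether $\hormin(h_{i,j})$ crosses the line $y=t$ versus $y=t+1$). Your resolution by Ramsey plus restricting the $b$-grid in the orthogonal direction so that the surviving $b$'s have $y$-coordinates bounded by that of the separator works out, but it would help to spell out that this requires the separator's $j$-index $N$ to exceed all $j$-indices used for $B$; with that made explicit, the argument closes.
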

\begin{proof}
    Take an $f(k)$-grid in~$M_{\Fc_{s,t,d}}$, for~$f(k)$ some large function of~$k$ to be determined later.
    It consists of points $(p_{i,j})_{i,j \in [f(k)]}$ as described in \cref{subsec:grids}.
    Each~$p_{i,j}$ corresponds in~$\Fc$ to either a horizontal segment with left endpoint~$p_{i,j}$, or a vertical segment with bottom endpoint at~$p_{i,j}$.
    By \cref{thm:grid-ramsey}, if this grid has size at least $R(g(k),2)$, then we can find a $g(k)$-subgrid consisting only of horizontal segments, or only of vertical segments.
    We thus fix $f(k) = R(g(k),2)$ (for~$g(k)$ to be determined later), and assume by symmetry that we find a $g(k)$-subgrid consisting only of horizontal segments $(h_{i,j})_{i,j \in [g(k)]}$ in~$\Fc_{s,t,d}$.
    Recall that the horizontal segments of~$\Fc_{s,t,d}$ are all in~$\Hc_{s,t}$ or all in~$\Hc_{s-1,t}$, depending on the direction~$d$.
    Assume without loss of generality this is~$\Hc_{s,t}$.
    Thus the segments~$h_{i,j}$ have their left endpoint in the square $[s,s+1) \times [t,t+1)$.

    We may also assume that none of the segments~$h_{i,j}$ have an endpoint at $x = 0$ or $y = 0$:
    This is ensured by simply removing the first row and first column of the grid.
    Then, since~$\Fc$ is minimized, \cref{lem:minimal-apus} gives that for each $h_{i,j} = \Hseg(x,y)$ in~$\Fc$, there is some segment $\hormin(h_{i,j}) \in \Fc$
    ensuring the horizontal minimality of~$h_{i,j}$ (i.e.\ blocking it from moving left), of the following form:
    \begin{itemize}
        \item either $\hormin(h_{i,j}) = \Vseg(x - \eta, y') \in \Fc$, with $y' \leq y \leq y' + 1$,
        \item or $\hormin(h_{i,j}) = \Vseg(x + 1, y') \in \Fc$ with $y' \leq y \leq y' + 1$.
    \end{itemize}
    Similarly, there is some $\vermin(h_{i,j}) \in \Fc$ ensuring vertical minimality of~$h_{i,j}$ (blocking it from moving down), of the following form:
    \begin{itemize}
        \item either $\vermin(h_{i,j}) = \Vseg(x', y) \in \Fc$ with $x \leq x' \leq x + 1$,
        \item or $\vermin(h_{i,j}) = \Vseg(x', y-1-\eta) \in \Fc$ with $x \leq x' \leq x + 1$.
    \end{itemize}
    Finally, we care about whether $\hormin(h_{i,j})$ intersects the horizontal line with coordinate $y=t$, or $y=t+1$.
    Note that this is not relevant for~$\vermin(h_{i,j})$: segments $\Vseg(x',y)$ will always intersect the line $y=t+1$,
    while $\Vseg(x',y-1-\eta)$ intersects the line $y=t$.

    Colour the points~$h_{i,j}$ of the $g(k)$-grid according to all the previous cases, i.e.\ to indicate:
    \begin{enumerate}
        \item whether horizontal minimality of~$h_{i,j}$ is ensured on the left or the right,
        \item whether vertical minimality of~$h_{i,j}$ is ensured above or below, and
        \item whether or not $\hormin(h_{i,j})$ intersects the line $x=s+1$.
    \end{enumerate}

    This gives~8 possible colours in total.
    For an appropriate choice of~$g(k)$, \cref{thm:grid-ramsey} yields a monochromatic $2k$-subgrid.
    Let us assume that as outcome of \cref{thm:grid-ramsey}, this $2k$-grid satisfies the following conditions:
    \begin{enumerate}
        \item For each $h_{i,j} = \Hseg(x,y)$, horizontal minimality is ensured on the left, meaning that $\hormin(h_{i,j}) = \Vseg(x-\eta,y')$ with $y' \le y \le y'+1$.
        \item Similarly, vertical minimality of $h_{i,j} = \Hseg(x,y)$ is ensured on the bottom, meaning that $\vermin(h_{i,j}) = \Vseg(x',y-1-\eta)$ with $x \le x' \le x+1$.
        \item Finally, the segments $\hormin(h_{i,j})$ all intersect the line $x=s+1$.
    \end{enumerate}
    This case is depicted in \cref{fig:construction_apus}.
    All other cases can be handled similarly, reversing the indices in the rest of the construction appropriately.

\begin{figure}
    \centering
    \begin{tikzpicture}
        \node at (0,0) {\includegraphics[width=0.4\linewidth]{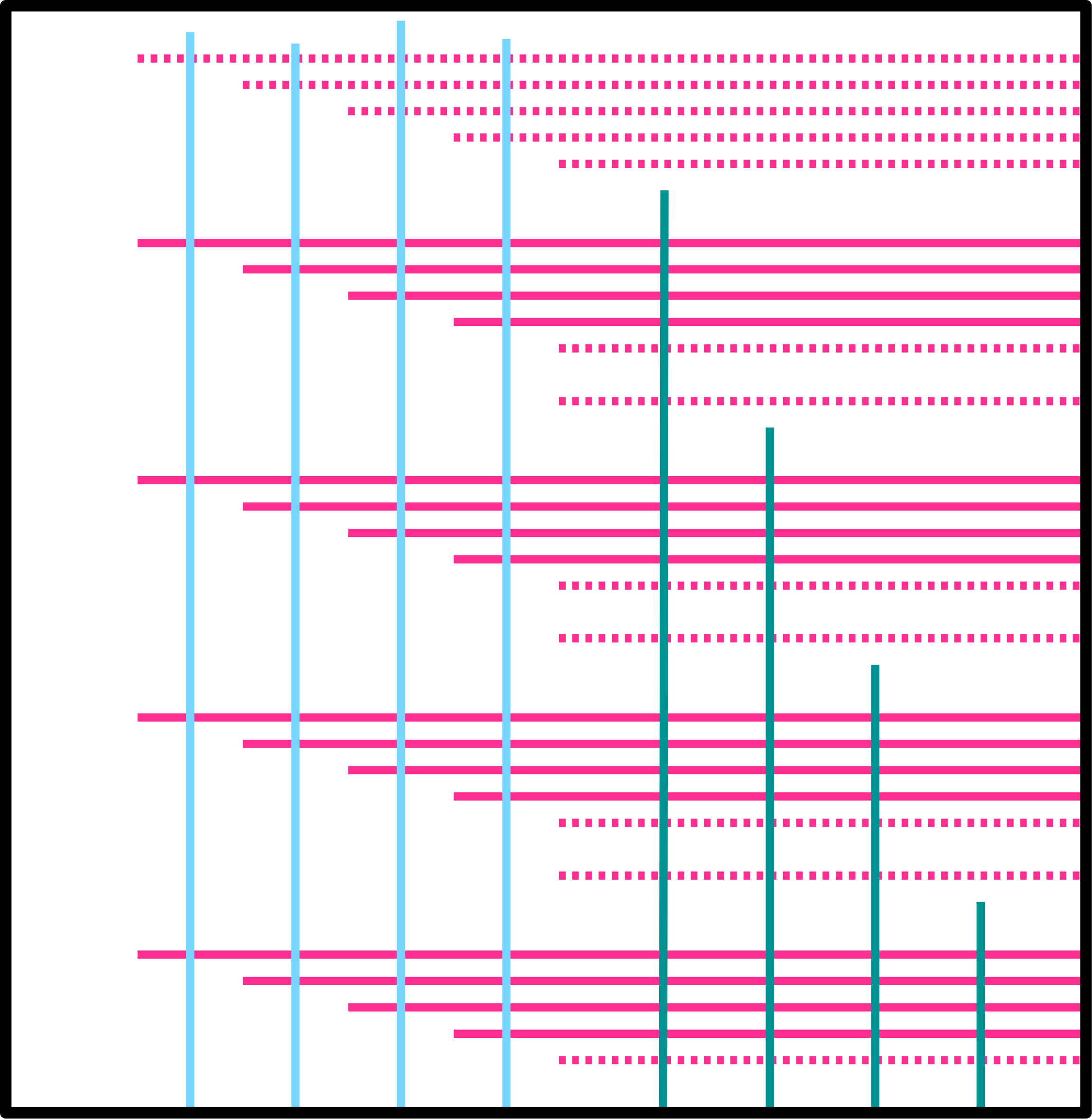}};
        \node at (-1.1,-3.3) {\color{cyan} \Large $A$};
        \node at (1.7,-3.3) {\color{teal} \Large $C$};
        \node at (3.3,0) {\color{magenta} \Large $B$};
        
    \end{tikzpicture}
    \caption{Transversal pair $T_4$ as APUS graph. 
    The {\color{magenta}magenta} lines are part of an $8$-grid.
    Some of them (solid lines) are picked as~$B$. The others (dotted) are used to define~$A,C$:
    the {\color{cyan}cyan} (resp.\ {\color{teal}teal}) lines forming~$A$ (resp.~$C$) ensure horizontal (resp.\ vertical) minimality of said dotted lines.
    }
    \label{fig:construction_apus}
\end{figure}

    Recall that by definition of a grid, the $x$- and $y$-coordinates of the~$h_{i,j}$ satisfy
    \[ \text{if $i<i'$, then $\Xco(p_{i,j}) < \Xco(p_{i',j'})$} \qquad \text{and} \qquad \text{if $j<j'$, then $\Yco(p_{i,j}) < \Yco(p_{i',j'})$}. \]
    From this, and the assumptions on~$\hormin(h_{i,j})$, the following is immediate for $i,i',j,j' \in [2k]$:
    \begin{align}
        & \text{if $i<i'$, then $\hormin(h_{i,j})$ does not intersect $h_{i',j'}$, and} \\
        & \text{if $i>i'$ and $j>j'$, then $\hormin(h_{i,j})$ does intersect $h_{i',j'}$.}
    \end{align}
    Thus, if we take $a_i \eqdef \hormin(h_{2i,2k})$ and $b_{i',j'} \eqdef h_{2i'-1,2j'-1}$,
    then we have that~$a_i$, $b_{i',j'}$ intersect if and only if~$i \ge i'$.

    Similarly, from the assumptions on~$\vermin(h_{i,j})$, we have for $i,i',j,j' \in [2k]$:
    \begin{align}
        & \text{if $j<j'$, then $\vermin(h_{i,j})$ does not intersect $h_{i',j'}$, and} \\
        & \text{if $i>i'$ and $j>j'$, then $\vermin(h_{i,j})$ does intersect $h_{i',j'}$.}
    \end{align}
    Thus, for $c_j \eqdef \vermin(h_{2k,2j})$ (and the same $b_{i',j'} \eqdef h_{2i'-1,2j'-1}$),
    we have that~$c_j$, $b_{i',j'}$ intersect if and only if $j \ge j'$.
    
    Up to reversal of the indices of the~$a_i$s, the vertices~$a_i$, $b_{i,j}$, and~$c_j$ form a transversal pair~$T_k$ as described in \cref{def:transversal-pair}.
\end{proof}

Our main theorem for APUS graphs follows from \cref{lem:unitgrid-apus,lem:gridthm-apus,lem:transversal-apus}, along the same lines as \cref{thm:delineation_circular_arc},
with the additional step of splitting into unit squares (\cref{lem:unitgrid-apus}).
Since computing APUS representation of graphs is NP-hard~\cite{mustata2013unitgrid},
some subtleties arise regarding the complexity of FO model checking:
we assume graphs to be given by their APUS representation,
and use that it is easy to compute an APUS representation of transversal pairs as in \cref{fig:construction_apus}.%
\begin{theorem}\label{thm:delineation-apus}
    Let~$\Cc$ be a subclass of non-degenerate APUS graphs.
    Then~$\Cc$ has bounded twin-width if and only if it is monadically dependent.
    Further, when~$\Cc$ is hereditary, and assuming $\fpt \neq \aw$,
    these are also equivalent to FO model checking in~$\Cc$ being FPT, when graphs in~$\Cc$ are given by some APUS representation.
\end{theorem}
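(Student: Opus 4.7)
The plan is to mirror the proof of \cref{thm:delineation_circular_arc}, but carried out locally inside each unit square via \cref{lem:unitgrid-apus}. For each $G \in \Cc$ given with its APUS representation~$\Fc_G$, I first preprocess~$\Fc_G$ by pushing each segment towards the bottom-left (one step of~$\eta(\Fc_G)$ at a time) as long as this does not change the intersection graph. This yields a minimized APUS family in polynomial time, so assume henceforth that~$\Fc_G$ is minimized. Let $\Mc = \{M_{\Fc_{G,s,t,d}} : G \in \Cc, \ s,t \in \Z, \ d \in \{\urc,\ulc,\lrc,\llc\}\}$ be the family of all local endpoint matrices.

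The dichotomy is then the following. If there is $k \in \N$ such that every matrix in~$\Mc$ contains no $k$-grid, then \cref{lem:gridthm-apus} (applied after a rigid translation to bring the relevant square to $[0,1)^2$) provides, for each $G \in \Cc$ and each $(s,t,d)$, a contraction sequence witnessing $\tww(G^\seg_{s,t,d}) \le f(k)$, computable in FPT time. \cref{lem:unitgrid-apus} combines these into a contraction sequence of bounded width for~$G^\seg$, hence for~$G$, and \cref{thm:model-checking} delivers an FPT FO model checking algorithm. In particular, $\Cc$ has bounded twin-width and is therefore monadically dependent. Conversely, if matrices in~$\Mc$ have arbitrarily large grids, then \cref{lem:transversal-apus} (which is the step crucially relying on minimization of~$\Fc_G$) produces arbitrarily large transversal pairs in graphs of~$\Cc$, so by \cref{lem:transversal-pair-NIP} the class~$\Cc$ is monadically independent. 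Under the hereditary hypothesis, \cref{thm:independent-hardness} then gives \aw-hardness of FO model checking in~$\Cc$, which together with $\fpt \neq \aw$ rules out FPT algorithms and bounded twin-width.

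The main technical point to be careful about is the combination step: the edge partition $E(G) = \bigsqcup_{s,t,d} E_{s,t,d}$ does \emph{not} preserve twin-width in general, which is why the proof must work with the ordered graph $G^\seg$ and invoke \cref{thm:ordered-tww-ramsey} through \cref{lem:unitgrid-apus}. A small bookkeeping point concerns complexity: since recognizing APUS graphs is NP-hard~\cite{mustata2013unitgrid}, we assume, as in the statement, that inputs come with an APUS representation; the minimization, the enumeration of the (polynomially many non-empty) squares $\Fc_{G,s,t,d}$, and the computation of the local contraction sequences all proceed in FPT time from this representation. For the hardness direction, the transduction of arbitrary graphs from~$\Cc$ requires no geometric data of~$\Cc$ beyond the abstract graphs, so no complications arise there, and the APUS representations of the transversal pairs used as witnesses are easy to produce explicitly (cf.\ \cref{fig:construction_apus}).
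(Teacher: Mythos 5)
Your plan follows essentially the same route as the paper: minimize the representation, form the local endpoint matrices $M_{\Fc_{s,t,d}}$, and dichotomize on whether these have bounded grid size, using \cref{lem:gridthm-apus} plus \cref{lem:unitgrid-apus} for the small-grid case and \cref{lem:transversal-apus} plus \cref{lem:transversal-pair-NIP} plus \cref{thm:independent-hardness} for the large-grid case.

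One place you are slightly loose: you say \cref{lem:unitgrid-apus} \emph{combines} the local contraction sequences into a contraction sequence for $G^\seg$. That lemma is an existence statement only; its proof passes through \cref{thm:ordered-tww-ramsey}, the Ramsey theorem for ordered twin-width, which is not constructive. So you do not in fact obtain a contraction sequence for $G^\seg$ by stitching together the local ones. The paper sidesteps this by using \cref{lem:unitgrid-apus} only to certify that $G^\seg$ has bounded twin-width, and then running the general FPT algorithm for \emph{ordered} graphs (\cref{thm:grid-tww}, or \cite[Theorem~7]{twin-width4}) on $G^\seg$ to produce an actual contraction sequence. Your argument should take the same detour.

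On the hardness side you identify the right resolution (explicitly construct APUS representations of the transversal pairs, cf.\ \cref{fig:construction_apus}), but state it tersely. The paper's version makes explicit why this is an FPT reduction: since the class is bipartite, the unconstrained pairs of \cref{def:transversal-pair} must all be non-edges, so $T_k$ is a well-defined graph, and the class of $T_k$'s and their induced subgraphs is hereditary, monadically independent, hence \aw-hard to model-check as abstract graphs by \cref{thm:independent-hardness}; the reduction then just attaches the polynomial-time-computable APUS representation to these inputs. Worth spelling out, since the whole point is that \cref{thm:independent-hardness} speaks about abstract graphs, while the statement of the theorem concerns inputs given as representations.

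Aside from these two points of detail, your proposal is correct and is the paper's proof.
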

\begin{proof}
    Consider all discrete and minimized APUS representations~$\Fc$ of graphs in~$\Cc$.
    For each of them, consider further the subfamilies~$\Fc_{s,t,d}$ for $s,t, \in \Z$ and $d \in \{\urc,\ulc,\lrc,\llc\}$,
    and their endpoint matrices~$M_{\Fc_{s,t,d}}$.
    Call~$\Mc$ the class of matrices~$M_{\Fc_{s,t,d}}$ over all choices of~$\Fc,s,t,d$ as above.
    If there is a~$k$ such that no matrix in~$\Mc$ has a $k$-grid,
    then the ordered graphs~$G^\seg_{\Fc_{s,t,d}}$ have twin-width bounded by some~$f(k)$ by \cref{lem:gridthm-apus}.
    It follows by \cref{lem:unitgrid-apus} that the full intersection graphs~$G^\seg_\Fc$ also have bounded twin-width, i.e.~$\Cc$ has bounded twin-width.
    On the other hand, if~$\Mc$ has matrices with arbitrarily large grids,
    then \cref{lem:transversal-apus} gives arbitrarily large transversal pairs in graphs of~$\Cc$, hence~$\Cc$ is monadically independent.

    Assuming that~$\Mc$ has no $k$-grid, observe that given any APUS representation of a $G \in \Cc$,
    one can in polynomial time find a discrete and minimal representation~$\Fc$, and compute the ordered graph~$G^\seg_\Fc$ with bounded twin-width.
    Since this is an ordered graph, \cite[Theorem~7]{twin-width4} gives an FPT algorithm to find a contraction sequence of bounded width.
    It follows by \cref{thm:model-checking} that FO model checking in~$\Cc$ is FPT.

    On the other hand, if~$\Cc$ is hereditary and contains arbitrarily large transversal pairs (and thus is monadically independent),
    then \cref{thm:independent-hardness} shows that FO model checking in~$\Cc$ is \aw-hard, \emph{when the input is given as a graph $G \in \Cc$ itself.}
    We want to show that the problem remains hard if the graph is given by some APUS representation.

    Note that~$\Cc$ is a class of bipartite graphs.
    In general, transversal pairs are defined up to some unconstrained vertex pairs, i.e.\ pairs which may or not be an edge (see \cref{def:transversal-pair}).
    For bipartite graphs however, one may check that these unconstrained pairs must all be non-edges.
    Thus it is meaningful to talk about \emph{the} transversal pair~$T_k$ of order~$k$, uniquely defined in~$\Cc$.
    For any~$T_k$, or any induced subgraph thereof, \cref{fig:construction_apus} gives an APUS representation which can be computed in polynomial time (this construction was already known in \cite[Figure~3.4]{twin-width2}).
    This gives an FPT reduction from model checking on transversal pairs given as graphs (which is \aw-hard), to model checking on graphs in~$\Cc$ given by some APUS representation.
\end{proof}

\subsection{Segments with a fixed set of permissible lengths}
We now consider the generalisation of APUS from the last item of \cref{thm:main}:
axis-parallel segments whose lengths belong to a fixed, finite set $X \subseteq \R_+$.
By compactness, these lengths can be assumed to be rational, and thus up to rescaling,
this is equivalent to requiring segments to have integer lengths in $\{1,\dots,L\}$ for some fixed $L \in \N$.

\begin{theorem}\label{thm:seg-fixed-length-set}
    Let~$\Cc$ be a class of non-degenerate, axis-parallel segments graphs, with segment lengths in~$[L]$ for some fixed~$L$.
    Then~$\Cc$ has bounded twin-width if and only if it is monadically dependent.
    Further, when~$\Cc$ is hereditary, and assuming $\fpt \neq \aw$, these are also equivalent to FO model checking in~$\Cc$ (given by segment representations) being FPT.
\end{theorem}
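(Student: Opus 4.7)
The approach is to reduce to the APUS case (\cref{thm:delineation-apus}) by partitioning the edge set of $G_\Fc$ according to the pair of lengths of the two intersecting segments. Since there are exactly $L^2$ possible length pairs, this partition is finite, and each class can be analysed by the methods of \cref{sec:apus} after an affine rescaling.

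First, I would partition $\Fc$ by segment length, writing $\Hc = \bigcup_{\ell \in [L]} \Hc^{(\ell)}$ and $\Vc = \bigcup_{\ell \in [L]} \Vc^{(\ell)}$, where $\Hc^{(\ell)}$ (resp.\ $\Vc^{(\ell)}$) contains the horizontal (resp.\ vertical) segments of length $\ell$. For each pair $(\ell_1, \ell_2) \in [L]^2$, let $E_{\ell_1, \ell_2} \subseteq E(G_\Fc)$ be the set of edges between $\Hc^{(\ell_1)}$ and $\Vc^{(\ell_2)}$; these $L^2$ sets partition $E(G_\Fc)$. Applying the affine map $(x,y) \mapsto (x/\ell_1, y/\ell_2)$ sends $\Hc^{(\ell_1)} \cup \Vc^{(\ell_2)}$ to a non-degenerate APUS family (non-degeneracy is preserved since only intra-orientation intersections are affected), so the machinery of \cref{sec:apus} applies to this sub-family, yielding local endpoint matrices obtained by partitioning the rescaled plane into unit squares.

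Then I would apply the dichotomy argument from the proof of \cref{thm:delineation-apus}. If every local endpoint matrix, across all length pairs, positions, and directions, avoids $k$-grids, then each ordered subgraph $(\Fc, E_{\ell_1, \ell_2}, <_\seg)$ has bounded twin-width by (rescaled versions of) \cref{lem:gridthm-apus,lem:unitgrid-apus}; combining the $L^2$ edge classes via \cref{thm:ordered-tww-ramsey} yields bounded twin-width for $G_\Fc^\seg$, hence for $G_\Fc$. Conversely, if some matrix contains arbitrarily large grids, we construct arbitrarily large transversal pairs in $G_\Fc$ via an adapted version of \cref{lem:transversal-apus}, giving monadic independence through \cref{lem:transversal-pair-NIP}. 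The complexity consequences (FPT model checking versus \aw-hardness) then follow exactly as in \cref{thm:delineation-apus}, using that transversal pairs admit explicit segment representations along the lines of \cref{fig:construction_apus} (which work regardless of the fixed segment lengths, as long as at least one length is allowed).

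The main obstacle is the generalisation of the minimality analysis of \cref{lem:minimal-apus} and \cref{lem:transversal-apus}. In a minimised representation with variable lengths, the segment blocking a horizontal segment from moving left is still at horizontal distance $\eta$ to the left, but it may have any length $\ell' \in [L]$, so its $y$-coordinates span an interval whose size depends on $\ell'$. In the construction of transversal pairs, this requires enlarging the colouring of grid points used in the proof of \cref{lem:transversal-apus} by an additional $L$ colours recording the length of the blocker $\hormin(h_{i,j})$, and similarly for $\vermin(h_{i,j})$, then applying one more Ramsey extraction (\cref{thm:grid-ramsey}) to obtain a monochromatic sub-grid where all blockers of each type share a common length. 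Once this is done, the intersection pattern between blockers and grid segments becomes uniform, and the geometric transversal pair construction transfers essentially unchanged.
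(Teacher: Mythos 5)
Your overall strategy is sound and close to the paper's: generalise the ordering, partition the edge set, bound the twin-width of each class and combine with \cref{thm:ordered-tww-ramsey}, and for the reverse direction extend the Ramsey colouring in \cref{lem:transversal-apus} to account for the length of the blocking segment. The transversal-pair half of your argument is essentially what the paper does and looks fine.

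However, the rescaling step creates a genuine gap in the bounded-twin-width direction. When you apply the map $(x,y)\mapsto(x/\ell_1,y/\ell_2)$ to $\Hc^{(\ell_1)}\cup\Vc^{(\ell_2)}$ and invoke \cref{lem:unitgrid-apus}, the resulting bound is on the twin-width of the ordered graph under the \emph{rescaled} ordering $<_{\ell_1,\ell_2}$, i.e.\ the one that groups segments by rescaled unit squares (which are $\ell_1\times\ell_2$ rectangles in the original plane). This ordering depends on the pair $(\ell_1,\ell_2)$. But \cref{thm:ordered-tww-ramsey} requires a \emph{single} fixed ordering $<$ under which every class $(\Fc,E_{\ell_1,\ell_2},<)$ has small twin-width. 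Your proposal never produces such an ordering: the paper's generalised $<_\seg$ (direction, length, unit square, endpoint) does not make rescaled squares into intervals when $\ell_1,\ell_2>1$ (the $\ell_1$ columns of original unit squares inside a rescaled square are interleaved with unrelated squares in the lexicographic order), so \cref{lem:ordered-components} cannot be applied after rescaling with this ordering, and twin-width of ordered graphs is not stable under reordering, even locally. The paper avoids this by not rescaling at all: it keeps the original unit-square grid and a single generalised $<_\seg$, observes that interacting squares differ by offsets in $\{-L,\dots,L\}$, and partitions the edges by \emph{both} offsets and lengths into $O(L^4)$ classes, each of which is then a matching between intervals of $<_\seg$ to which \cref{lem:ordered-components} applies directly. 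To fix your argument you would either need to adopt this finer offset-and-length partition under one ordering, or explicitly define a common ordering and reprove the analogue of \cref{lem:unitgrid-apus} for it.
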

\begin{proof}[Sketch of proof.]
    Let~$\Fc$ be a family of axis-parallel segments with lengths in~$[L]$.
    We once again partition the segments according to their direction, the unit square containing their first endpoint, and now also their length.
    That is, define~$\Hc_{s,t,\ell}$ to contain all horizontal segments in~$\Fc$ of length~$\ell$ with their left endpoint in $[s,s+1) \times [t,t+1)$, and similarly~$\Vc_{s,t,\ell}$ for vertical ones.
    Clearly segments of~$\Hc_{s,t,\ell}$ and~$\Vc_{s',t',\ell'}$ can only intersect when $|s-s'|$ and $|t-t'|$ are at most~L.
    We also generalise the global ordering~$<_\seg$ of segments to account for segment lengths.
    The ordering~$<_\seg$ sorts segments according to, in that order: (1)~direction, (2)~segment length, (3)~the square containing the endpoint, lexicographically, and finally (4)~the endpoint coordinate, lexicographically.

    Generalising \cref{lem:unitgrid-apus}, if the ordered intersection graph induced by~$\Hc_{s,t,\ell} \cup \Vc_{s',t',\ell'}$ has twin-width at most~$k$ for all parameters $s,t,s',t',\ell,\ell'$,
    then the full ordered intersection graph $(\Fc,E,<_\seg)$ has twin-width bounded by a function of~$k$ and~$L$.
    The proof is essentially the same:
    Each edge of~$G_\Fc$ joins some sets~$\Hc_{s,t,\ell}$ and~$\Vc_{s',t',\ell'}$,
    with $i \eqdef s-s'$ and $j \eqdef t-t'$ both in~$\{-L,\dots,L\}$.
    Then, we classify all edges of~$G_\Fc$ according to the offsets~$i,j$ and the lengths~$\ell,\ell'$ as above.
    This is a partition of the edge set into~$O(L^4)$ groups, and by \cref{thm:ordered-tww-ramsey}, it suffices to prove that each of these classes of edges gives an ordered subgraph of bounded twin-width.
    The latter is obtained by using \cref{lem:ordered-components} exactly as in the proof of \cref{lem:unitgrid-apus}.

    Next, with essentially no modifications, \cref{lem:gridthm-apus,lem:transversal-apus} can be adapted to show that:
    \begin{itemize}
        \item If the endpoints of segments in $\Hc_{s,t,\ell} \cup \Vc_{s',t',\ell}$ do not form a grid,
        then the corresponding induced subgraph, ordered by~$<_\seg$, has bounded twin-width, and
        \item if however the endpoints do form an $f(k)$-grid (for appropriate~$f$), and~$\Fc$ has been minimized, then~$G_\Fc$ contains the transversal pair~$T_k$.
    \end{itemize}
    These results combine to prove the theorem in the same way as \cref{thm:delineation-apus}.
\end{proof}

\section{Obstructions to delineation}\label{sec:obstructions}
In this section, we construct graphs with unbounded twin-width, but for which FO model checking is FPT.
We use them to show that delineation fails for 1.5D-terrain visibility graphs,
and also when relaxing any of the hypotheses of \cref{thm:delineation-apus}.

Precisely, we construct a class~$\Hc$ of graphs combining half-graphs and paths of slightly more than constant length,
and prove that is has unbounded twin-width, but bounded \emph{merge-width}~\cite{merge-width}.
The latter implies FPT model checking and monadic dependence.
The class~$\Hc$ differs significantly from the class of all subcubic graphs used as the basic example of non-delineation in~\cite{bonnet2022delineation}:
any monadically stable (i.e.\ that does not transduce all half-graphs) subclass of~$\Hc$ has bounded twin-width.
This gives a concrete counter-example to \cite[Conjecture~6.6]{bonnet2022delineation}.
Samuel Braunfeld previously found a different, non-explicit counter-example to this conjecture (personal communication).

\subsection{Construction}
Let~$\Perm_n$ denote the set of permutations of~$[n]$.
Given $\sigma \in \Perm_n$ a length $\ell > 0$, define~$H_\sigma^\ell$ as follows (see \cref{fig:halfgraph-longpath}).
Create vertices $a_i,b_i,c_i,d_i$ for~$i \in [n]$, forming two disjoint half-graphs with edges~$a_ib_j$ and~$c_id_j$ for all~$i \le j$.
Then, connect~$b_i$ to~$c_{\sigma(i)}$ by a path of length~$\ell$.
For a function $f : \N \to \N$, we define the class of graphs $\Hc_f \eqdef \{H_\sigma^{f(n)} : n \in \N,\ \sigma \in \Perm_n\}$.

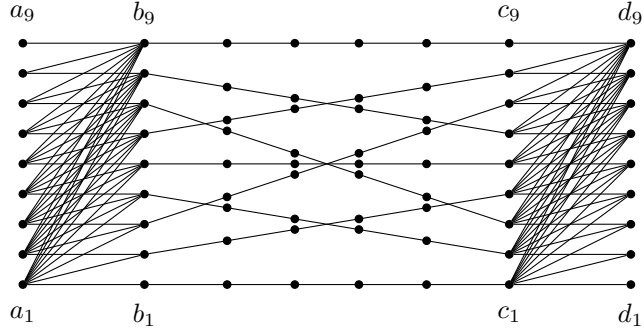
\begin{figure}[htb]
    \centering
    \begin{tikzpicture}[scale=0.8]
        \tikzstyle{vertex}=[black,fill,draw,circle,inner sep=0pt, minimum width=1mm]
        \foreach \i in {1,...,9}{
            \node[vertex] (a\i) at (0,0.5*\i) {};
            \node[vertex] (b\i) at (2,0.5*\i) {};
            \node[vertex] (c\i) at (8,0.5*\i) {};
            \node[vertex] (d\i) at (10,0.5*\i) {};
        }
        \node at (0,0) {$a_1$};
        \node at (2,0) {$b_1$};
        \node at (8,0) {$c_1$};
        \node at (10,0) {$d_1$};

        \node at (0,5) {$a_9$};
        \node at (2,5) {$b_9$};
        \node at (8,5) {$c_9$};
        \node at (10,5) {$d_9$};
        
        \foreach \i in {1,...,9}{
            \foreach \j in {\i,...,9}{
                \draw (a\i) -- (b\j);
                \draw (c\i) -- (d\j);
            }
        }

        \foreach \i/\j in {1/1,2/4,3/7,4/2,5/5,6/8,7/3,8/6,9/9}{
            \draw (b\i) -- (c\j)
                node[vertex, pos=0.22] {}
                node[vertex, pos=0.41] {}
                node[vertex, pos=0.59] {}
                node[vertex, pos=0.78] {};
        }
    \end{tikzpicture}
    \caption{The graph $H_\sigma^5$ for the permutation $\sigma = 147258369$.}
    \label{fig:halfgraph-longpath}
\end{figure}

When~$f$ grows slowly, we show that~$\Hc_f$ has unbounded twin-width by transducing $f(n)$-subdivisions of bicliques of size~$\sqrt{n}$, known to have unbounded twin-width~\cite{twin-width2}.%
\begin{fact}\label{fact:logn-subdiv-tww}
    For $f(n) = o(\log n)$ a positive function, $\Hc_f$ has unbounded twin-width.
\end{fact}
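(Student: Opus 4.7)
The plan is to exhibit an FO transduction $\Phi$ that extracts from $H_\sigma^{f(n)}$, for a suitable permutation $\sigma \in \Perm_n$, a subdivided biclique of size roughly $\sqrt{n} \times \sqrt{n}$, and then invoke the known unbounded twin-width of such subdivided bicliques from~\cite{twin-width2}.

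Fix $m \in \N$, set $n = m^2$, identify $[n] \cong [m]\times[m]$ lexicographically, and let $\sigma_m$ be the ``transpose'' permutation $\sigma_m(i,j) = (j,i)$. In $H_{\sigma_m}^{f(n)}$, the $b$-vertices split into $m$ row-groups $R_i = \{b_{(i-1)m+1}, \dots, b_{im}\}$; the $c$-vertices split analogously into $m$ column-groups, and under $\sigma_m$ the paths through $b_{(i,j)}$--$c_{(j,i)}$ provide exactly one internal path of length $f(n)$ between row $i$ and column $j$.

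To build the transduction $\Phi$, use non-deterministic coloring to mark $m$ row representatives $u_i \eqdef a_{(i-1)m+1}$, $m$ column representatives $w_j \eqdef d_{jm}$, and to distinguish the five vertex types ($a,b,c,d,$ path-internal). The crucial observation is that within the $(a,b)$ half-graph, consecutive row representatives differ by exactly one row-group: $(N(u_i) \setminus N(u_{i+1})) \cap \{b_1,\dots,b_n\} = R_i$. This lets one write an FO formula $B_{\mathrm{row}}(u,b)$ meaning ``$b$ lies in the row-group of the representative $u$'', because the linear order on representatives is itself FO-definable through neighborhood inclusion on the $b$-side. A symmetric formula $C_{\mathrm{col}}(w,c)$ handles columns. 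The output of $\Phi$ keeps the path-internal edges unchanged, adds an edge between $u_i$ and a path-internal vertex $p$ whenever some $b$ satisfying $B_{\mathrm{row}}(u_i,b)$ is adjacent to $p$ (symmetrically for $w_j$ and $c$), and deletes every vertex outside the representatives and the path-internal vertices. The result is precisely the $(f(n)-1)$-subdivision of $K_{m,m}$.

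To conclude, it is known~\cite{twin-width2} that $t$-subdivisions of $K_{m,m}$ have twin-width tending to infinity with $m$ whenever $t = o(\log m)$. With $t = f(n) - 1$ and $n = m^2$, the hypothesis $f(n) = o(\log n)$ gives $t = o(\log m)$, so $\{\Phi(H_{\sigma_m}^{f(n)}) : m \in \N\}$ has unbounded twin-width. Since FO transductions preserve boundedness of twin-width (\cref{thm:tww-transduction}), $\Hc_f$ itself has unbounded twin-width. The main delicacy is implementing the representative construction in FO without a primitive ordering, which succeeds because neighborhood inclusion is already a linear order on one side of a half-graph and restricts to a linear order on the marked set of representatives.
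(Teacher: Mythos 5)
Your approach is essentially the paper's: both construct an FO transduction from $H_\sigma^{f(n)}$ (with $\sigma$ the universal/transpose permutation on $[m]\times[m]$, $n=m^2$) to a subdivision of $K_{m,m}$, both mark every $n$-th (resp.\ $m$-th) vertex of $A$ and $D$ as representatives and use neighbourhood-inclusion on the half-graph side to carve out the ``row/column groups'' in FO, and both then invoke the lower bound from~\cite{twin-width2} for $o(\log m)$-length subdivisions of bicliques.

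One small technical gap: you delete the $b$- and $c$-vertices, so your output is $K_{m,m}^{(f(n))}$ (paths of length $f(n)$, i.e.\ $f(n)-1$ subdivision vertices). The cited result from~\cite{twin-width2} needs path lengths $\ge 2$, but $f$ is only assumed positive, so $f(n)=1$ infinitely often is allowed; in that case your transduction outputs edgeless graphs (no path-internal vertices at all) and the argument says nothing. The paper sidesteps this by \emph{keeping} the $b$- and $c$-vertices and attaching them to the representatives as stars, so the resulting paths have length $f(n)+2\ge 3$ and the twin-width lower bound always applies. Your proof works verbatim once you either keep $b$ and $c$ in the output, or handle the case of bounded $f$ separately.
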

\begin{proof}
    Denote by~$K_{n,n}$ the biclique with~$n$ vertices on each sides,
    and by~$K_{n,n}^{(\ell)}$ the graph obtained by replacing each edge of this biclique by a path of length~$\ell$.
    For a positive function~$g$, consider the class $\Kc_g \eqdef \{K_{n,n}^{(g(n))} : n \in \N\}$.
    A simple variant of \cite[Theorem~6.2]{twin-width2} shows that~$\Kc_g$ has unbounded twin-width whenever $g(n) = o(\log n)$ and $g(n) \ge 2$.
    We will show that~$\Hc_f$ transduces~$\Kc_g$ for $g(n) \eqdef f(n^2)+2$.
    Note that $f(n) = o(\log n)$ implies $f(n^2) = o(\log n)$, so~$\Kc_g$ indeed has unbounded twin-width.
    By \cite[Theorem~8.1]{Bonnet2022twinwidth1}, this transduction implies that~$\Hc_f$ also has unbounded twin-width.

    Specifically, we construct a first-order transduction~$\Phi$ satisfying the following:
    If~$\sigma$ is the universal permutation on~$n^2$ elements defined as $\sigma(in+j+1) = jn+i+1$ for $i,j \in \{0,\dots,n-1\}$,
    then $\Phi(H_\sigma^\ell) \ni K_{n,n}^{(\ell+2)}$ for any~$\ell \in \N$.
    In what follows, $N \eqdef n^2$ denotes the size of~$\sigma$.
    
    The transduction~$\Phi$ proceeds as follows. Recall here that the composition of several transductions is also a transduction (\cref{lem:transduction-comp}).
    \begin{itemize}
        \item First, pick every $n$th vertex in~$A$ and~$D$,
        that is define $x_i \eqdef a_{(i-1)n+1}$ and $X \eqdef \{x_i : i \in [n]\}$,
        and symmetrically $y_i \eqdef d_{in}$ and $Y \eqdef \{y_i : i \in [n]\}$.
        These will become the high degree vertices in~$K_{n,n}^{(\ell+2)}$.
        Vertices of $A \setminus X$ and $D \setminus Y$ will not be used, the transduction~$\Phi$ starts by deleting them
        (taking any induced subgraph is a transduction: one marks the vertices to be kept with non-deterministic colouring, and only keep these in the interpretation step).
        \item Next, use non-deterministic colouring to mark each of the subsets~$X,Y$, as well as $B \eqdef \{b_i\}_{i \in [N]}$, and $C \eqdef \{c_i\}_{i \in [N]}$.
        \item Our goal is now to keep the edge $x_i b_j$ if and only if $(i-1)n < j \le in$.
        Observe that this inequality holds if and only if (1) $x_i b_j$ is an edge, and (2) $x_i$ is minimal with this property,
        in the sense any other $x' \in X$ adjacent to~$b_j$ satisfies $N(x_i) \subseteq N(x')$.
        These two conditions can be expressed by an FO formula~$\phi(x_i,b_j)$, allowing to keep only the desired edges with an interpretation (see \cref{fig:halfgraph-longpath-transduction}).

        A similar argument can be used to keep the edge~$c_jy_i$ if and only if $(i-1)n < j \le in$.
        The edges of the paths between~$B$ and~$C$ are not modified in this step.
    \end{itemize}
    
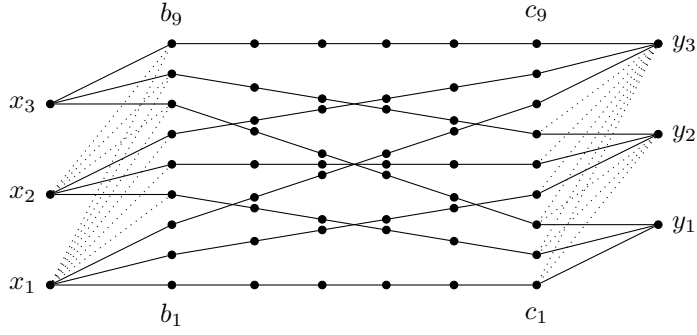
\begin{figure}[tb]
    \centering
    \begin{tikzpicture}[scale=0.8]
        \tikzstyle{vertex}=[black,fill,draw,circle,inner sep=0pt, minimum width=1mm]
        \foreach \j in {1,...,9}{
            \node[vertex] (b\j) at (2,0.5*\j) {};
            \node[vertex] (c\j) at (8,0.5*\j) {};
        }
        \foreach \i in {1,...,3}{
            \pgfmathtruncatemacro{\a}{\i*3-2}
            \pgfmathtruncatemacro{\d}{\i*3}
            \node[vertex, label={left:$x_\i$}] (x\i) at (0,0.5*\a) {};
            \node[vertex, label={right:$y_\i$}] (y\i) at (10,0.5*\d) {};

            \foreach \j in {\a,...,\d}{
                \draw (x\i) -- (b\j);
                \draw (c\j) -- (y\i);
            }
            \foreach \j in {1,...,9}{
                \ifnum \j>\d
                    \draw[dotted] (x\i) -- (b\j);
                \fi
                \ifnum \j<\a
                    \draw[dotted] (y\i) -- (c\j);
                \fi
            }
        }

        \node at (2,0) {$b_1$};
        \node at (8,0) {$c_1$};
        \node at (2,5) {$b_9$};
        \node at (8,5) {$c_9$};

        \foreach \i/\j in {1/1,2/4,3/7,4/2,5/5,6/8,7/3,8/6,9/9}{
            \draw (b\i) -- (c\j)
                node[vertex, pos=0.22] {}
                node[vertex, pos=0.41] {}
                node[vertex, pos=0.59] {}
                node[vertex, pos=0.78] {};
        }
    \end{tikzpicture}
    \caption{Transduction from~$H_\sigma^5$ to the subdivided biclique $K_{3,3}^{(7)}$, for $\sigma = 147258369$.
        The unnecessary vertices $A \setminus X$ and $D \setminus Y$ have already been removed.
        The solid edges form~$K_{3,3}^{(7)}$, and the transduction only needs to remove the dotted edges of the half-graphs to obtain it.
    }
    \label{fig:halfgraph-longpath-transduction}
\end{figure}

    Assuming the `correct' non-deterministic choices happen,
    the resulting graph consists of two families of~$n$ stars with~$n$ branches, with~$x_i$s (resp.~$y_i$s) as centres and $b_j$s (resp.~$c_j$s) as branches.
    The choice of universal permutation~$\sigma$ ensures that each $x_i$-star is connected to each $y_{i'}$-star by a path of length~$\ell$, yielding the graph~$K_{n,n}^{(\ell+2)}$.
\end{proof}

On the other hand, when~$f$ tends to infinity, the class~$\Hc_f$ has bounded merge-width.
More generally, we prove the following.
\begin{lemma}\label{lem:mw-long-paths}
    Let~$G$ be a graph with $\mw_r(G) = k$, and $\ell > 2r$. Pick any number of pairs of vertices $(x_1,y_1),\dots,(x_m,y_m)$,
    and for each of them add a new path of length~$\ell$ from~$x_i$ to~$y_i$, with fresh internal vertices.
    Call~$G'$ the resulting graph. Then $\mw_r(G') \le \max(k+1,4)$.
\end{lemma}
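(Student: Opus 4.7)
The plan is to extend a width-$k$ radius-$r$ construction sequence $\sigma_G$ of $G$ to one for $G'$ in three phases, using a global part $M$ that progressively collects all internal path vertices.

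Phase~A processes each added path $P_i = x_i v_{i,1} \cdots v_{i,\ell-1} y_i$ (one at a time) via a sliding window: positively resolve the interior path edges $(\{v_{i,j-1}\}, \{v_{i,j}\})$ for $j = 2, \ldots, \ell-1$, merging $\{v_{i,j-1}\}$ into a growing backlog $B_i$ after each resolution, while keeping $\{v_{i,1}\}$ aside as a singleton. At the end of the loop, positively resolve the two remaining path edges $(\{x_i\}, \{v_{i,1}\})$ and $(\{v_{i,\ell-1}\}, \{y_i\})$, merge the held-aside singletons into $B_i$ to form $M_i$, and merge $M_i$ into $M$. I defer all non-edge resolutions to Phase~C: this keeps the resolved graph on $V(G') \setminus V(G)$ equal to the disjoint union of the paths, so paths do not become cliques. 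Phase~B then runs $\sigma_G$ verbatim on the $V(G)$-parts, leaving $M$ untouched. Phase~C negatively resolves $(M, M)$ and $(V_G, M)$ in turn, then merges.

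The key width bounds rely on $\ell > 2r$. In Phase~B, since every path has length $\ell > r$, no path provides a resolved-graph shortcut between $V(G)$-vertices; so for $v \in V(G)$ the parts within distance~$r$ are the $\le k$ parts from $\sigma_G$, plus possibly $M$, totalling at most $k+1$. For an internal vertex $w = v_{i',j} \in M$, $w$ is at path-distance $j$ from $x_{i'}$ and $\ell-j$ from $y_{i'}$, and $\ell > 2r$ implies at most one of these distances is $\le r$; hence $w$ reaches $V(G)$-parts only through a single endpoint, giving width at most $k+1$ as well. Phase~C trivially has width~$2$.

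The main obstacle is bounding the width at $4$ during Phase~A, where several small parts coexist in the sliding window (the backlog $B_i$, the trailing singleton, the current frontier singleton, and the held-aside $\{v_{i,1}\}$, plus possibly $M$ or $\{x_i\}$ reached through a previously processed path sharing an endpoint). The deferred endpoint edges $(x_i, v_{i,1})$ and $(v_{i,\ell-1}, y_i)$ keep the sliding window disconnected from $\{x_i\}$ and $\{y_i\}$ in the resolved graph until the very end of processing $P_i$, isolating the path's internal structure from the rest. Together with the deferral of intra-path non-edges (which would otherwise make each processed path a clique in the resolved graph, exposing $M$-vertices to both endpoints and inflating the Phase~B bound above $k+1$), this is what I expect will allow the case analysis to close at~$4$.
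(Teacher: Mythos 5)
Your proposal is correct and follows essentially the same three-phase plan as the paper: a sliding-window pass over the paths that accretes their internal vertices into a single growing part while resolving only the positive path edges, then replaying the construction sequence for $G$, then a final cleanup with negative resolutions. The one (beneficial) variation is that you defer resolving the endpoint edges $(x_i,v_{i,1})$ and $(v_{i,\ell-1},y_i)$ until the end of processing $P_i$ so the active window is disconnected from the rest, whereas the paper resolves the left endpoint edge immediately and handles $x_i$'s reachability in the width-$4$ case analysis; both close for the same reason ($\ell>2r$), so the difference is cosmetic.
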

\begin{proof}
    Call~$P_i$ the path of length $\ell$ connecting~$x_i$ and~$y_i$ for $i \in [m]$,
    and denote its vertices by $x_i = v_i^0 v_i^1 v_i^2 \dots v_i^{\ell-1} v_i^{\ell} = y_i$.
    In a first phase, we inductively resolve the paths~$P_i$, and merge their internal vertices into a single part.

    First positively resolve the edges in the path $x_1 v_1^1 v_1^2 v_1^3$, before merging~$\{v_1^1\},\{v_1^2\}$ into the same part.
    Next resolve $v_1^3 v_1^4$, and merge~$\{v_1^3\}$ with $\{v_1^1,v_1^2\}$,
    and repeat until all of~$P_1$ is resolved, and all its internal vertices are in the same part.
    We then proceed with~$P_2$ similarly, not forgetting to merge its internal vertices with those of~$P_1$.
    In general, before processing vertex~$v_i^j$ in~$P_i$,
    the edges that are resolved are all edges of $P_1,\dots,P_{i-1}$, and those of the prefix $x_1 \dots v_i^j$ of~$P_i$;
    the current partition has one part~$S$ containing all internal vertices of $P_1,\dots,P_{i-1}$ plus $v_i^1,\dots,v_i^{j-1}$,
    and all other vertices in singletons.
    The next steps are to positively resolve $v_i^j$ with $v_i^{j+1}$, and merge~$v_i^j$ into~$S$.
    Note that during this process, the radius-$r$ width is at most~4.
    When finished, all the internal vertices of all $P_i$s are merged into a single part~$S$, while all edges of~$P_i$ are resolved.
    
    We then proceed with the construction sequence for $G$ witnessing $\mw_r(G) = k$.
    Since the paths~$P_i$ have length more than~$r$, they cannot be used by a vertex $v \in V(G)$ to reach any new part besides~$S$.
    Also, for a vertex $v_i^j \in P_i$, all parts $r$-reachable from~$v_i^j$ must also be reachable from either~$x_i$ (if~$j \le r$) or from~$y_i$ (when $j \ge \ell-r > r$).
    Thus, the radius-$r$ width of this second part of the construction sequence is at most~$k+1$.
    
    At this point, all pairs in~$V(G)$ and all edges of $P_i$'s are resolved, and there are just two parts remaining:~$S$ and~$V(G)$.
    We merge them, and negatively resolve all remaining pairs to complete the construction sequence.
    This construction sequence for~$G'$ has width at most $\max(k+1,4)$.
\end{proof}

\begin{fact}\label{fact:subdiv-mw}
    For any function~$f$ which tends to infinity, the class $\Hc_f$ has bounded merge-width.
\end{fact}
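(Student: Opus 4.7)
The plan is to apply \cref{lem:mw-long-paths} with the \emph{skeleton} $G_{n,\sigma}$ being the disjoint union of the two half-graphs on $\{a_i,b_i\}_{i\in[n]}$ and $\{c_i,d_i\}_{i\in[n]}$, obtained from $H_\sigma^\ell$ by discarding every internal vertex of the subdivision paths. Fix a target radius $r \in \N$. Since $f$ tends to infinity, the set $S_r \eqdef \{n : f(n) \le 2r\}$ is finite; for every $n \in S_r$ the graph $H_\sigma^{f(n)}$ has at most $4n + n(f(n)-1)$ vertices, bounded by a constant depending only on $r$, so $\mw_r$ is trivially bounded on that part of $\Hc_f$. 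For $n \notin S_r$ we have $f(n) > 2r$, and \cref{lem:mw-long-paths} applied to $G_{n,\sigma}$ with the pairs $(b_i, c_{\sigma(i)})_{i\in[n]}$ yields
\[
    \mw_r(H_\sigma^{f(n)}) \le \max\bigl(\mw_r(G_{n,\sigma}) + 1,\, 4\bigr).
\]
It therefore suffices to bound $\mw_r(G_{n,\sigma})$ by a constant independent of $n$ and $\sigma$.

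For this, I will exhibit an explicit construction sequence for the skeleton whose radius-$\infty$ width (which upper-bounds the radius-$r$ width at every $r$) is a constant. On a single half-graph, iterate over $i = 1, \dots, n$: at step $i$, resolve the two singletons $\{a_i\}, \{b_i\}$ against the accumulated chain parts $A_{<i} \eqdef \{a_1,\dots,a_{i-1}\}$ and $B_{<i} \eqdef \{b_1,\dots,b_{i-1}\}$ and against each other (positively or negatively according to the half-graph relation), then merge $\{a_i\}$ into $A_{<i}$ and $\{b_i\}$ into $B_{<i}$. The key check is that at the moment of merging, $\{a_i\}$ and $A_{<i}$ share the same unresolved pattern towards every remaining singleton (edges to all future $b_j$, non-edges to all future $a_j$), and symmetrically for $b$. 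Throughout, the resolved-pair graph has the property that the connected component of any vertex contains at most the two chain parts plus the two singletons currently being processed, bounding the radius-$r$ width by~$4$. For the whole skeleton, run this construction on each half-graph sequentially, and close with a single bulk negative resolution between the two merged halves $V_1, V_2$ (every cross pair is a non-edge) followed by a final merge; the width stays bounded by~$4$ throughout.

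Combining the two regimes bounds $\mw_r$ uniformly over $\Hc_f$ for every $r$, giving bounded merge-width. The main subtlety to anticipate is the cross-pairs between the two halves of the skeleton: resolving them piecemeal \emph{during} the second half-graph's construction would cause every singleton of the second half to become simultaneously reachable from every vertex of $V_1$ in the resolved-pair graph, blowing up the radius-$r$ width. Deferring the entire cross-resolution to a single bulk negative-resolution step at the very end sidesteps this issue.
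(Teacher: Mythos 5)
Your proof is correct and takes essentially the same approach as the paper: a two-regime argument applying \cref{lem:mw-long-paths} to the half-graph skeleton when $f(n)>2r$, and handling the finitely many remaining $n$ via the trivial vertex-count bound. The only difference is that you spell out an explicit merge-width-$4$ construction sequence for the disjoint union of two half-graphs (being correctly careful to defer cross-resolutions to a single bulk step), where the paper simply states that half-graphs have merge-width at most~$3$ and points to clique-width.
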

\begin{proof}
    It is simple to check that half-graphs have merge-width at most~3 at any radius
    (one could also get a bound through clique-width using \cite[Theorem~7.1]{merge-width}).
    Since $H_\sigma^\ell$ is obtained from half-graphs by the process described in \cref{lem:mw-long-paths},
    it follows that $\mw_r(H_\sigma^\ell) \le 4$ whenever $\ell > 2r$.

    Now denote by~$g$ some inverse of~$f$, in the sense that $n \le g(f(n))$ for all~$n$.
    Now consider $H_\sigma^{f(n)} \in \Hc_f$ for some permutation $\sigma \in \Perm_n$.
    Remark that $|V(H_\sigma^{f(n)})| = n \cdot (f(n)+3)$, which is a trivial upper bound on the merge-width of~$H_\sigma^{f(n)}$ at any radius~$r$.
    If $f(n) \le 2r$, then by choice of~$g$ we have $n \le g(2r)$, which implies
    \[ \mw_r(H_\sigma^{f(n)}) \le n \cdot (f(n)+3) \le g(2r) \cdot (2r+3), \]
    while if~$f(n) > 2r$, then $\mw_r(H_\sigma^{f(n)}) \le 4$ as argued in the first paragraph.
    This proves that~$\Hc_f$ has bounded merge-width.
\end{proof}

Now pick for instance $f(n) \eqdef \log\log n$, and consider $\Hc \eqdef \Hc_f$.
\begin{corollary}
    \label{cor:not-delin1}
    The class~$\Hc$ has unbounded twin-width, but is monadically dependent, and FO model checking in~$\Hc$ is FPT.
\end{corollary}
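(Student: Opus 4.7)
The plan is to observe that \cref{cor:not-delin1} is a direct combination of the three preceding results applied to the specific function $f(n) \eqdef \log \log n$ used to define~$\Hc$. All the substantive work has already been carried out in \cref{fact:logn-subdiv-tww,fact:subdiv-mw,thm:mw-model-checking}; only an asymptotic verification of the chosen~$f$ remains.

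First I would check that $f(n) = \log \log n$ lies in the intersection of the two required asymptotic regimes. Clearly $\log \log n \to \infty$, so the hypothesis of \cref{fact:subdiv-mw} holds. For \cref{fact:logn-subdiv-tww}, one needs $f$ to be a positive function with $f(n) = o(\log n)$; positivity is fine for $n \ge 3$ (adjusting~$f$ on finitely many small inputs preserves both the growth to infinity and the $o(\log n)$ bound, and in any case~$\Hc$ is only of interest when $f(n) \ge 1$), and $\log \log n / \log n \to 0$ is immediate.

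With the hypotheses verified, the three claims follow at once. \cref{fact:logn-subdiv-tww} yields that~$\Hc$ has unbounded twin-width, while \cref{fact:subdiv-mw} gives that~$\Hc$ has bounded merge-width. Feeding the latter into \cref{thm:mw-model-checking} then produces simultaneously the monadic dependence of~$\Hc$ and the FPT algorithm for FO model checking on~$\Hc$, which completes the proof. There is no genuine obstacle here: the function $f(n) = \log \log n$ was chosen precisely to lie in the sweet spot where \cref{fact:logn-subdiv-tww,fact:subdiv-mw} simultaneously apply, and any other~$f$ with $f(n) \to \infty$ and $f(n) = o(\log n)$, for example any iterated logarithm $\log^{(k)} n$, would serve equally well.
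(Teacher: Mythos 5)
Your proposal is correct and matches the paper's own proof exactly: both combine \cref{fact:logn-subdiv-tww} for unbounded twin-width, \cref{fact:subdiv-mw} for bounded merge-width, and then \cref{thm:mw-model-checking} to deduce monadic dependence and FPT model checking. Your extra verification that $f(n)=\log\log n$ tends to infinity and is $o(\log n)$ is a harmless elaboration of what the paper leaves implicit.
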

\begin{proof}
    The class~$\Hc$ has unbounded twin-width by \cref{fact:logn-subdiv-tww}, and bounded merge-width by \cref{fact:subdiv-mw}.
    The latter implies monadically dependent and FPT model checking by \cref{thm:mw-model-checking}.
\end{proof}

\subsection{Non-delineation in segment graphs}
We can now show that all hypotheses of \cref{thm:seg-fixed-length-set} are required,
by showing that relaxing any of them allows to obtain the non-delineated class~$\Hc_f$ (for any function~$f$ with~$f(n) \ge 3$) as intersection graph.
\nondelinsegments*
\begin{figure}[tbh]
    \centering
    \begin{tikzpicture}
        \node at (0,0) {\includegraphics[width=12cm]{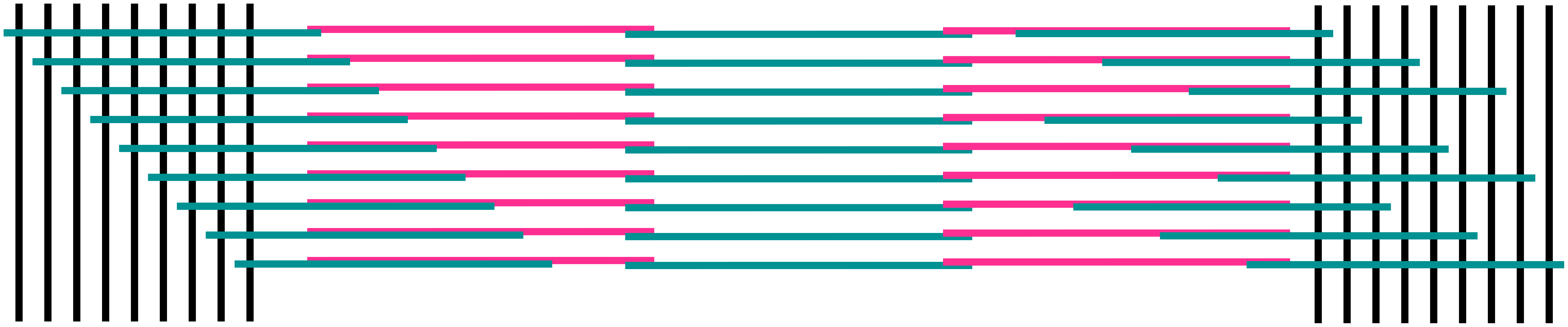}};
        \node at (-4.0,-1.4) {\small $a_1$};
        \node at (-4.3,-1.4) {\small $a_2$};
        \node at (-5,-1.4) {$\dots$};
        \node at (-5.8,-1.4) {\small $a_9$};

        \node at (5.95,-1.4) {\small $b_1$};
        \node at (5.7,-1.4) {\small $b_2$};
        \node at (5,-1.4) {$\dots$};
        \node at (4.15,-1.4) {\small $b_9$};
    \end{tikzpicture}
    \caption{A degenerate APUS whose intersection graph is $H_\sigma^5$ for $\sigma = 147258369$.}
    \label{fig:non_delineated_APUS}
\end{figure}
\begin{proof}
    \Cref{fig:non_delineated_APUS} shows how to construct any graph $H_\sigma^\ell$ (with $\ell \ge 2$) as an APUS graph if one allows intersections between horizontal segments.
    Thus the class~$\Hc_f$ of \cref{cor:not-delin1} is a class of degenerate APUS graphs, proving point~\ref{item:degen}.

    Almost the same construction can be use for points~\labelcref{item:twolengths,item:epsilon-directions}.
    For~\ref{item:twolengths}, we simulate the intersection between two horizontal segments~$h_1,h_2$ by adding a tiny vertical segment~$v$ intersecting only these two.
    This only doubles the length of the paths between the two half-graphs in the intersection graphs, which does not change the fact that the class is non-delineated.
    For~\ref{item:epsilon-directions}, the horizontal segments in \cref{fig:non_delineated_APUS} can be made slightly more and slightly less than horizontal alternatively,
    so that the family of segments is now in general position, while keeping the same intersection graph.

    Finally, to prove point~\labelcref{item:epsilon-lengths}, we once again obtain any graph~$H_\sigma^\ell$, but with a slightly different construction, illustrated in \cref{fig:var-length-not-delineated}.
    \begin{figure}[tbh]
    \centering
    \includegraphics[width=0.5\linewidth]{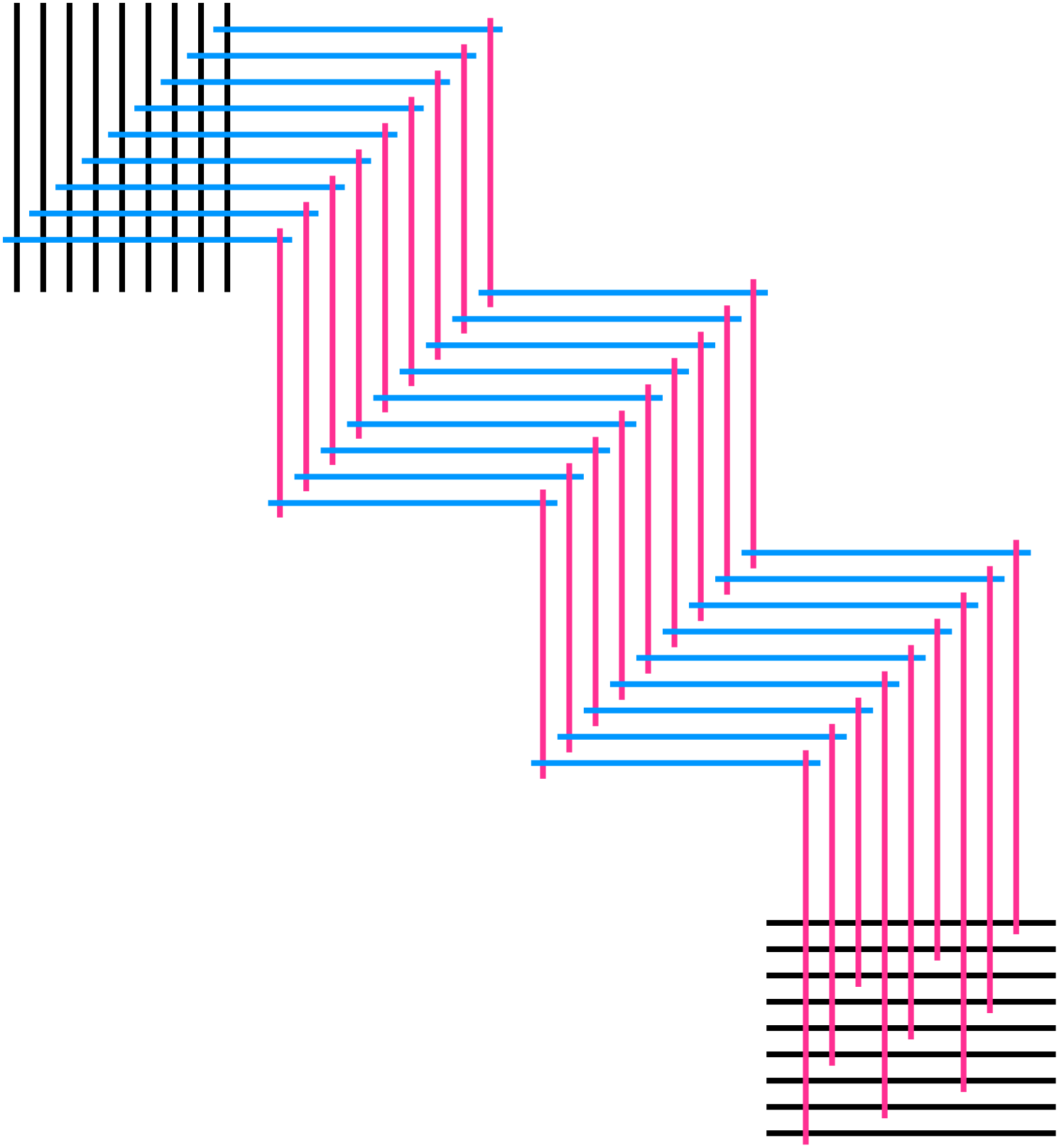}
    \caption{%
        Construction of a graph~$H^5_\sigma$ as intersection graph of axis-parallel segments whose lengths are between~1 and~2.
        The only segments of non-unit length are in the bottom-most vertical group, to represent an arbitrary permutation as half-graph.
        By packing each group of segments tighter together, the segment lengths needed in this construction can be made arbitrarily close to~1.
    }
    \label{fig:var-length-not-delineated}
\end{figure}
\end{proof}

\subsection{1.5D-Terrain visibility graphs}
A variant of this construction can be used for terrain visibility graphs:
\terrainnotdelin*

\begin{figure}[tbh]
    \centering
    \begin{tikzpicture}    
        \tikzstyle{vertex}=[black,fill,draw,circle,inner sep=0pt, minimum width=0.5mm]
        \def\cx#1{0.005*(#1)*(#1) - 0.1*(#1)}
        \def\cy#1{0.5*(#1) + 1}
        \def\bx#1{(#1)}
        \def\by#1{0.2*(#1) - 0.01*(#1)*(#1)}
        
        \foreach \i in {1,...,9}{
            \node[vertex] (c\i) at ({\cx\i},{\cy\i}) {};
        }
        \foreach \i/\j in {1/3,2/6,3/9,4/2,5/5,6/8,7/1,8/4,9/7}{
            \pgfmathsetmacro{\x}{\bx{\i-0.1}}
            \pgfmathsetmacro{\y}{\by{\i-0.1}}
            \pgfmathsetmacro{\xx}{\bx{\i+0.1}}
            \pgfmathsetmacro{\yy}{\by{\i+0.1}}
            \pgfmathsetmacro{\tx}{\cx\j-0.1}
            \pgfmathsetmacro{\ty}{\cy\j+0.3}
        
            \pgfmathsetmacro{\slope}{(\ty - \y) / (\tx - \x)}

            \draw (\x,\y) node[vertex] (a\i1){} -- ++(0.1,0.1*\slope) node[vertex] (b\i){} -- (\xx,\yy) node[vertex] (a\i2){};
            \draw[dotted] (\x,\y) -- (\tx,\ty);
        }
        
        \foreach \i in {1,...,8}{
            \pgfmathtruncatemacro{\j}{\i+1}
            \draw (a\i2) -- (a\j1)
                node[vertex, pos=0.33] (a\i3) {}
                node[vertex, pos=0.67] (a\i4) {};
            \draw (c\i) -- (c\j);
        }

        \draw (a11) -- (c1);

        \node[vertex, label=left:$c_1$] at (c1) {};
        \node[vertex, label=left:$c_9$] at (c9) {};
        \node[vertex, pin=below left:$a_0$] at (a11) {};
        \node[vertex, pin=below:$b_1$] at (b1) {};
        \node[vertex, pin=below right:$a_1$] at (a12) {};
        \node[vertex, pin=below:$b_9$] at (b9) {};
        \node[vertex, pin=below left:$a_{8\ell}$] at (a91) {};
        \node[vertex, pin=below right:$a_{8\ell+1}$] at (a92) {};
    \end{tikzpicture}
    \caption{Construction of non-delineated terrain visibility graphs.
        Dotted lines represent the `horizon' for each~$b_i$, which can be controlled by moving~$b_i$ up or down.
        This allows the half-graph between the~$b_i$s and $c_j$s to represent any permutation of the~$b_i$s, here $369258147$.
    }
    \label{fig:non-delineated-terrain}
\end{figure}
To prove this result, we construct a variant of the graphs~$H_\sigma^\ell$ as a terrain visibility graph.
Consider the following terrain, depicted in \cref{fig:non-delineated-terrain}.
On the left side, construct an almost vertical path $C = \{c_1,\dots,c_n\}$, with~$c_n$ top-left most and~$c_1$ bottom-right most.
This path is slightly convex, so that~$C$ is a clique in the visibility graph.
To the right of and below~$C$, construct an almost horizontal path $A = \{a_1,\dots,a_{\ell (n-1)+1}\}$,
slightly concave so that $a_1,\dots,a_{\ell (n-1)+1}$ is a path in the visibility graph.
The points of~$A$ and~$C$ are positioned so that~$a_i$ and~$c_j$ see each other for all~$i,j$.

We now add a point~$b_i$ between and slightly below~$a_{\ell (i-1)}$ and~$a_{\ell (i-1) + 1}$, for each~$i \in [n]$.
The point~$b_i$ is not visible from any other~$a_j$, and adding it does not change the visibility in~$B \cup C$.
By adjusting the height of~$b_i$ relative to~$a_{\ell(i-1)}$, one can control the height of its `horizon'.
It is thus possible to position~$b_i$ to see $c_{\sigma(i)},\dots,c_n$, but not $c_1,\dots,c_{\sigma(i)-1}$.

The visibility graph of this terrain is thus as follows.
Between~$B$ and~$C$ is a half-graph, with edges~$b_ic_j$ whenever~$j \ge \sigma(i)$.
Furthermore $A = a_1,\dots,a_{\ell n}$ is a path, and~$b_i$ is adjacent to~$a_{\ell(i-1)},a_{\ell(i-1)+1}$ and no other vertex of~$A$.
Finally, $B$ is edgeless, $C$ is a clique, and between~$A$ and~$C$ is a complete bipartite graph.
We call~$G_\sigma^\ell$ this graph, and define the class of graphs for a function $f: \N \rightarrow \N$:
\[
    \Gc_f \coloneqq \Bigl\{G_\sigma^{f(n)} : n \in \N \text{ and } \sigma \text{ is a permutation on } [n]\Bigr\}.
\]

We show that for~$f$ sufficiently small, this class has unbounded twin-width.
The proof is similar to that of \cref{fact:logn-subdiv-tww}, except we transduce subdivided cliques rather than subdivided bicliques.
\begin{fact} \label{fact:logn-subdiv-tww2}
     For $f(n) = o(\log n)$ a positive function, $\Gc_f$ has unbounded twin-width.
\end{fact}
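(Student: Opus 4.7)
The plan is to follow the approach of \cref{fact:logn-subdiv-tww} almost verbatim, transducing subdivided cliques $K_m^{(\ell+c)}$ (for a small constant $c$) instead of subdivided bicliques $K_{m,m}^{(\ell+2)}$. First, I would observe that a straightforward variant of \cite[Theorem~6.2]{twin-width2}, whose argument applies essentially unchanged to cliques, shows that the class $\{K_m^{(g(m))} : m \in \N\}$ has unbounded twin-width whenever $g(m) \ge 2$ and $g(m) = o(\log m)$. I would then set $N = m^2$ and take $\sigma \in \Perm_N$ to be the universal permutation $\sigma((a-1)m+b) = (b-1)m+a$ used in \cref{fact:logn-subdiv-tww}; since $f(N) = f(m^2) = o(\log m)$, it suffices to construct a fixed FO transduction~$\Phi$ such that $K_m^{(f(N)+c)} \in \Phi(G_\sigma^{f(N)})$ for every $m$, and then invoke \cite[Theorem~8.1]{Bonnet2022twinwidth1}.

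The transduction itself would proceed along the lines of \cref{fact:logn-subdiv-tww}: non-deterministic colouring is used to distinguish the sets $A$, $B$, $C$ and to mark $m$ clique-centres in~$C$, say $y_i = c_{im}$ for $i \in [m]$. The key structural feature to exploit is that between consecutive $b$-vertices $b_k, b_{k+1}$ there is already a unique path of length $f(N)$ through~$A$; concatenating such a path with the half-graph edges $b_k y_i$ and $b_{k+1} y_j$ yields a path of length $f(N)+2$ from $y_i$ to $y_j$. Using an FO-expressible minimality/extremality condition on the half-graph $B$-$C$, analogous to the one in the proof of \cref{fact:logn-subdiv-tww}, one selects for each pair $(y_i, y_j)$ a unique such pair $(b_k, b_{k+1})$. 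Finally, the transduction deletes the $C$-clique edges, the $A$-$C$ complete bipartite edges, and the unselected half-graph edges, using the colours to identify them, so that only the centres $y_i$ and the chosen subdivision paths remain.

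The main technical obstacle is that $G_\sigma^{\ell}$ has only one half-graph (in contrast to the two symmetric half-graphs of $H_\sigma^{\ell}$), together with the dense auxiliary structures formed by the clique on~$C$ and the complete bipartite between $A$ and~$C$. These dense pieces create many spurious short connections between the~$y_i$'s that must all be systematically removed by the transduction, and verifying that the block structure of the universal permutation really does permit an FO selection of exactly one path-witness per pair $(y_i, y_j)$ will likely require a more delicate choice of the centres~$y_i$ and of the minimality formula than a direct transcription of the argument of \cref{fact:logn-subdiv-tww}.
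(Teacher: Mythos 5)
Your high-level plan matches the paper's: transduce subdivided cliques $K_m^{(\ell+2)}$ from $G_\sigma^\ell$ (via non-deterministic colouring of $A,B,C$, an extremality formula pruning the half-graph, and deletion of the $C$-clique and $A$-$C$ biclique edges), then invoke the $o(\log)$ subdivided-clique lower bound from~\cite{twin-width2}. However, the caveat in your last paragraph is in fact a genuine gap, and the ``direct transcription'' you begin with provably fails. With the biclique-case universal permutation $\sigma((a-1)m+b)=(b-1)m+a$, the half-graph edge $b_k y_i$ (for $y_i=c_{im}$) exists iff $i\ge b$ where $b=((k-1)\bmod m)+1$; so the smallest $C'$-neighbour of $b_k$ cycles through $y_1,y_2,\dots,y_m,y_1,\dots$ as $k$ increases, and every consecutive pair $(b_k,b_{k+1})$ realises only a path between cyclically-adjacent centres $(y_b,y_{b+1})$ or $(y_m,y_1)$ --- never all $\binom{m}{2}$ pairs. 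No re-choice of the centres $y_i$ within $C$ repairs this, since the cycle structure depends only on $\sigma$. The role of $\sigma$ has genuinely changed between $H_\sigma^\ell$ and $G_\sigma^\ell$: there it labels the path endpoints, here it sets half-graph thresholds while the $A$-paths between consecutive $b$'s are fixed by the terrain, so the two constraints must be satisfied by the \emph{same} permutation.

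What is missing is the explicit choice of $\sigma$ that makes both constraints compatible, which is the one piece of real content in this proof. The paper does it by partitioning $B$ into $m$ blocks $B_1,\dots,B_m$ of size $m$ so that $c_i'$ sees exactly $\bigcup_{j\le i}B_j$, and \emph{simultaneously} arranging the $A$-order of $B$ so that for each $i<j$ a designated pair $(b_j^i,b_i^j)\in B_j\times B_i$ is consecutive. Since the $\binom{m}{2}$ required consecutive pairs are disjoint and fit into $m^2$ slots, and each block needs exactly $m-1$ of its $m$ elements, such a $\sigma$ exists by a straightforward counting/degrees-of-freedom argument. Once this $\sigma$ is fixed, your sketched transduction (prune to the extremal $C'$-neighbour, delete the dense auxiliary structure, restrict to the used vertices) goes through essentially as written.
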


\begin{proof}
    We denote by $K_n^{\ell}$ the clique with $n$ vertices, where every edge is replaced by a path of length $\ell$. For a positive function $g$, consider the class $\Kc_g \coloneqq \{K_n^{g(n)} : n \in \N\}$. It is known that~$\Kc_g$ has unbounded twin-width whenever $g(n) = o(\log n)$ and $g(n) \ge 2$ \cite[Theorem~6.2]{twin-width2}. We show that $\Gc_f$ transduces $\Kc_g$ for $g(n) \coloneqq f(n^2) + 2 = o(\log n)$. Then $\Gc_f$ also has unbounded twin-width by \cite[Theorem~8.1]{Bonnet2022twinwidth1}.

    We construct an FO transduction $\Phi$ such that $\Phi(G_\sigma^\ell) \ni K_n^{\ell+2}$ for any $\ell \in \N$, where $\sigma$ will be chosen appropriately later on. Let $A, B, C$ be the partition of vertices of $G_\sigma^\ell$ as in its description. The vertices in $B$ are partitioned into $n$ parts $B_k = \{b_k^1, \dots, b_k^n\}$ for $k \in [n]$, each of which contains $n$ vertices. We choose every $n$-th vertex from $C$ to let $C' = \{c_1', \dots, c_n'\}$, which form a half-graph with the vertices in $B$ in such a way that $c_i'$ is connected with all the vertices in $B_j$ for $j \leq i$. 
    We want $b_j^i$ and $b_i^j$ to be connected by disjoint paths $P_{i,j}$ of length~$\ell$ whose inner vertices are in $A$, for $1 \leq i < j \leq n$.
    Since these $\binom{n}{2}$ paths are disjoint, we can choose a permutation $\sigma$ such that the path induced by~$A$ in $G_\sigma^\ell$ extends all of them.

    Now the transduction $\Phi$ is a composition of the following transductions, which is then again a transduction due to \cref{lem:transduction-comp}.
    Note that taking an arbitrary induced subgraph is a transductions, and so is removing all edges between two sets.
    \begin{itemize}
        \item Take the induced subgraph consisting of vertices in $C'$, $B_k \setminus \{b_k^k\}$ for $k \in [n]$, and $A' \coloneqq V(P_{i,j}) \cap A$ for $1 \leq i < j \leq n$.
        \item Remove all edges in the clique $G_\sigma^\ell[C']$, and in the biclique $G_\sigma^\ell[C' \cup A']$.
        \item Remove edges between $c_i'$ and the vertices in $B_j$ for $j < i$ by using an FO interpretation similar to the one used in \cref{fact:logn-subdiv-tww}.
    \end{itemize}
    Then the output of $\Phi$ is $K_n^{\ell+2}$.
\end{proof}

On the other hand, using \cref{lem:mw-long-paths}, and the fact that adding (in the sense of edge union) a single clique or biclique preserves bounded merge-width,
one can show that when~$f$ tends to infinity, the class~$\Gc_f$ has bounded merge-width.
It follows that for say $f(n) \eqdef \log \log n$, the class~$\Gc_f$ has bounded merge-width but unbounded twin-width, proving \cref{thm:terrain-not-delin}.

\subsection[H-graphs]{$H$-graphs}
Finally, we prove \cref{thm:Hgraphs-non-delin}: $H$-graphs are not delineated if~$H$ contains two cycles in the same connected component.

Let~$H_\infty$ be the graph consisting of a single vertex~$x_0$ with two self-loops.
For any bipartite graph $G = (U,V,E)$, let~$G^\bullet$ be the graph obtained by subdividing each edge of~$G$ once, and taking the complement of this subdivision, except that no edge is added between~$U$ and~$V$.
That is, the vertex set of~$G^\bullet$ is $U \uplus V \uplus E$, each $e \in E$ is adjacent to all of~$U,V$ except the two endpoints of~$e$,
and additionally each of~$U,V,E$ is a clique.
\begin{fact}
    For any bipartite graph~$G$, the graph~$G^\bullet$ is an $H_\infty$-graph.
\end{fact}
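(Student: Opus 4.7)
The plan is to explicitly construct an $H_\infty$-representation of $G^\bullet$. The topological realisation of $H_\infty$ is the wedge $C_1 \vee C_2$ of two circles joined at the point $x_0$; a connected subspace of $H_\infty$ is either contained in $C_1 \setminus \{x_0\}$, contained in $C_2 \setminus \{x_0\}$, or contains $x_0$ (in which case it is a union of two arcs through $x_0$, one in each $C_k$). I parameterise $C_k$ as $[0,1]$ with the endpoints identified to $x_0$, so an arc through $x_0$ has the form $[0,\alpha] \cup [\beta,1]$ for some $\alpha < \beta$.

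Enumerate $U = \{u_1,\dots,u_n\}$ and $V = \{v_1,\dots,v_m\}$. On $C_1 \setminus \{x_0\}$ I place $u_i$ as the arc $[a_i, b_i]$ with coordinates chosen so that
\[
    0 < a_1 < a_2 < \cdots < a_n < b_1 < b_2 < \cdots < b_n < 1.
\]
All pairs $u_i, u_{i'}$ share the sub-interval $[a_n, b_1]$, so this realises the clique on $U$. The same kind of nested arrangement places $V$ as arcs $[c_j, d_j]$ on $C_2 \setminus \{x_0\}$ with $c_1 < \cdots < c_m < d_1 < \cdots < d_m$, realising the clique on $V$. Because $U$-arcs and $V$-arcs lie on different circles and miss $x_0$, no edge is induced between $U$ and $V$, as required.

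For each edge $e = u_i v_j \in E$ I define the subspace $S_e \subseteq H_\infty$ to contain $x_0$ and to equal, on $C_1$, an arc $[0,\alpha_i] \cup [\beta_i, 1]$ with
\[
    a_{i-1} < \alpha_i < a_i \quad \text{and} \quad b_i < \beta_i < b_{i+1}
\]
(using the conventions $a_0 = 0$ and $b_{n+1} = 1$), and symmetrically on $C_2$ an arc $[0,\gamma_j] \cup [\delta_j,1]$ with $c_{j-1} < \gamma_j < c_j$ and $d_j < \delta_j < d_{j+1}$. Each $S_e$ is connected because it contains $x_0$, and all the $S_e$ pairwise intersect at $x_0$, realising the clique on $E$.

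It remains to verify adjacencies between $E$ and $U \cup V$; by symmetry it suffices to check $E$ versus $U$. The choice $\alpha_i < a_i$ and $\beta_i > b_i$ ensures $S_e \cap C_1$ is disjoint from $u_i = [a_i, b_i]$, so $e \not\sim u_i$ in the intersection graph. For $i' \neq i$, one checks: if $i' < i$ then $a_{i'} \le a_{i-1} < \alpha_i$, so $[a_{i'}, b_{i'}]$ meets $[0,\alpha_i]$; if $i' > i$ then $\beta_i < b_{i+1} \le b_{i'}$ and $a_{i'} < b_1 \le b_i < \beta_i$, so $[a_{i'},b_{i'}]$ meets $[\beta_i, 1]$. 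Thus $S_e$ meets $u_{i'}$ for every $i' \neq i$, giving exactly the adjacency pattern prescribed by $G^\bullet$. The main (minor) point to get right is the interleaving of the thresholds $\alpha_i, \beta_i$ with the endpoints $a_{i'}, b_{i'}$ so that each edge vertex selects precisely the one $U$-vertex it must avoid; the half-graph-style nesting of the $u_i$ arcs is what makes this possible.
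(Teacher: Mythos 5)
Your construction is correct and is essentially the same as the paper's: place $U$-arcs and $V$-arcs on the two petals of $H_\infty$ avoiding $x_0$, and for each edge $e=uv$ take an arc through $x_0$ that misses exactly the arcs of $u$ and $v$. The only cosmetic difference is that the paper uses half-circles for the $U$- and $V$-arcs (so that $X_e$ can simply be taken as the complement of $X_u \cup X_v$, automatically meeting all other half-circles), whereas you use a nested family of arcs and verify the required intersections via explicit threshold interleaving.
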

\begin{proof}
    Consider the following construction, depicted in \cref{fig:two-circles}.
    Take the topological realisation of~$H_\infty$: two circles~$C_1,C_2$ joined in exactly one point~$x_0$.
    Identify each vertex $u \in U$ with a half-circle~$X_u$ in $C_1 \setminus \{x_0\}$, picking arbitrary distinct half-circles for each vertex.
    Similarly, associate each $v \in V$ to some half-circle~$X_v$ in $C_2 \setminus \{x_0\}$.
    Then, for any edge $e = uv$ in~$E$, let~$X_e$ be the complement of $C_u \cup C_v$.
    Thus~$X_e$ consists of two half-circles which share the point~$x_0$, hence it is connected,
    and it clearly is disjoint from $X_u,X_v$, but intersects~$X_w$ for any $w \in (U \cup V) \setminus \{u,v\}$.
    The sets~$X_u$, $X_v$, $X_e$ give~$G^\bullet$ as intersection graph.
\end{proof}
\begin{figure}
    \centering
    \begin{tikzpicture}
        \def\r{2.5}
        \def\dt{0.08}
        
        \node[black,fill,draw,circle,inner sep=0pt, minimum width=1.5mm] (x0) at (0,0) {};
        \draw (\r,0) circle [radius=\r];
        \draw (-\r,0) circle [radius=\r];

        \pgfmathsetmacro{\rr}{\r+2*\dt}
        \draw (-\r,0) ++(90:\rr) -- (-1.2,3.2) node[fill=white] {$X_u$};
        \pgfmathsetmacro{\rr}{\r+4*\dt}
        \draw (\r,0) ++(110:\rr) -- (0.7,3.2) node[fill=white] {$X_v$};
        \pgfmathsetmacro{\rr}{\r-\dt}
        \draw (-\r,0) ++(50:\rr) -- (0,2.5) node[fill=white] (Xe) {$X_e$};
        \draw (\r,0) ++(130:\rr) -- (Xe);

        \foreach \i in {1,...,4}{
            \pgfmathsetmacro{\rr}{\r+\i*\dt}
            \pgfmathsetmacro{\a}{40+20*\i}
            \draw[teal, very thick] (-\r,0) ++(\a:\rr) arc [start angle=\a, delta angle=180, radius=\rr];
            \draw[magenta, very thick] (\r,0) ++(\a:\rr) arc [start angle=\a, delta angle=-180, radius=\rr];
        }

        \pgfmathsetmacro{\rr}{\r-\dt}
        \pgfmathsetmacro{\a}{80-3}
        \pgfmathsetmacro{\aa}{120+3}
        \draw[cyan, very thick] (-\r,0) ++(\a:\rr) arc [start angle=\a, delta angle=-174, radius=\rr];
        \draw[cyan, very thick] (\r,0) ++(\aa:\rr) arc [start angle=\aa, delta angle=174, radius=\rr];
        \draw[cyan, very thick] (-\dt,0) -- (\dt,0);
    \end{tikzpicture}
    \caption{%
        Construction of $G^\bullet$ as an $H_\infty$-graph.
        Half-circles~$X_u,X_v$ representing vertices of~$U,V$ are on either side in \textcolor{teal}{teal} and \textcolor{magenta}{magenta} respectively.
        In the middle in \textcolor{cyan}{cyan}: the subset $X_e$ representing some edge $e=uv$,
        chosen so that~$X_e$ intersects all half-circles on the left and right except~$X_u,X_v$.
    }
    \label{fig:two-circles}
\end{figure}
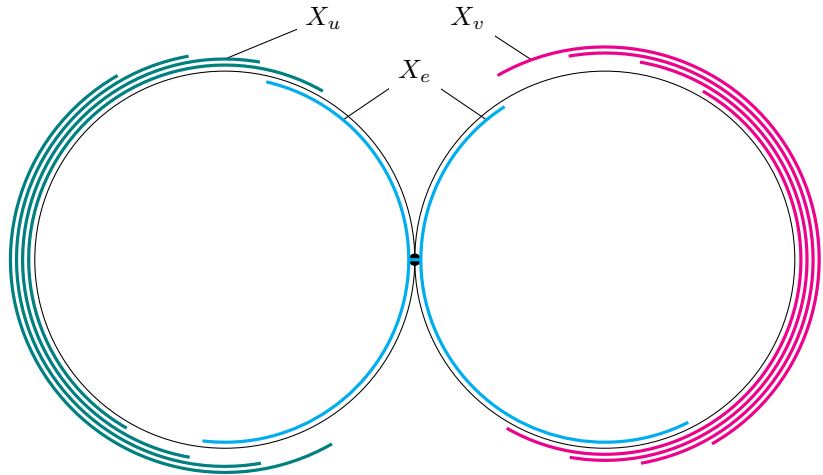

\begin{corollary}\label{cor:Hgraphs-non-delin}
    There is a hereditary class~$\Cc$ of $H_\infty$-graphs that is monadically dependent and has FPT FO model checking,
    but has unbounded merge-width (and thus unbounded twin-width).
\end{corollary}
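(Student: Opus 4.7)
The plan is to take $\Cc$ to be the hereditary closure of $\{G^\bullet : G \in \mathcal{B}\}$ for a carefully chosen class $\mathcal{B}$ of bipartite graphs, and transfer the required properties through the map $G \mapsto G^\bullet$. By the preceding fact, $G^\bullet$ is an $H_\infty$-graph for every bipartite $G$, and since being an $H_\infty$-graph is preserved under taking induced subgraphs, $\Cc$ will automatically be a hereditary class of $H_\infty$-graphs, meeting the first two requirements for free.

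The heart of the argument is that the operation $G \mapsto G^\bullet$ is an FO interpretation from the bipartite/incidence structure of $G$, and is invertible by an FO interpretation in the opposite direction: after a bounded colouring to mark the three cliques $U$, $V$, $E$ in $G^\bullet$ (they are FO-definable intrinsically, as each vertex of $E$ is characterised by having exactly two non-neighbours outside its own clique, namely the endpoints of the corresponding edge), the edges of $G$ are recovered as the non-edges of $G^\bullet$ between $E$ and $U \cup V$. This FO bi-interpretation transfers monadic dependence, FPT FO model checking, and bounded merge-width between $\mathcal{B}$ and $\{G^\bullet : G \in \mathcal{B}\}$ in both directions, using \cref{thm:tww-transduction} for merge-width and standard pullback of formulas for FO model checking. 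First I would make this formal, and then verify that the transfers survive the hereditary closure, which is the case whenever $\mathcal{B}$ is itself taken to be hereditary.

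The problem then reduces to exhibiting a hereditary bipartite class $\mathcal{B}$ that is monadically dependent, has FPT FO model checking, and has unbounded merge-width. The hard part will be this final step: the standard routes to FPT FO model checking in hereditary classes (bounded twin-width, bounded expansion, nowhere density, and bounded merge-width more generally) all impose bounded merge-width, so $\mathcal{B}$ must be built more delicately, and simpler candidates such as 1-subdivisions of $K_{n,n}$ fail because their hereditary closure encodes all bipartite graphs and loses FPT FO model checking. My plan is to start from the construction recalled in the introduction after \cref{thm:seg-non-delineated} --- 2-subdivisions of bipartite graphs with girth and minimum degree going to infinity, together with an added biclique between the two bipartition sides --- which is already known to be monadically dependent with unbounded merge-width, and to complement it with a tailored FPT model checking argument that collapses the added biclique to an FO colouring of the sparser underlying structure, before pulling the resulting FPT algorithm back through the $G \mapsto G^\bullet$ interpretation to the class $\Cc$.
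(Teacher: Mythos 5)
The transfer strategy via a definable inverse (flips/bi-interpretation from $G^\bullet$ back to $G$) is essentially the paper's route, so the first part of your plan is sound. The proof of the claim that being an $H_\infty$-graph is hereditary, and that the needed properties transfer through a transduction, are both fine.

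The genuine gap is in the final step, and it stems from a misconception about merge-width. You write that ``the standard routes to FPT FO model checking in hereditary classes (bounded twin-width, bounded expansion, nowhere density, and bounded merge-width more generally) all impose bounded merge-width.'' This is false: merge-width captures bounded expansion, but \emph{not} nowhere density, which is strictly more general. Concretely, \cite[Corollary~1.8]{merge-width} implies that a nowhere dense class with unbounded expansion has unbounded merge-width. So the base class $\mathcal{B}$ you are looking for is readily available: take any hereditary class of bipartite graphs that is nowhere dense but has unbounded expansion (e.g.\ bipartite graphs with girth and minimum degree both going to infinity). Nowhere density gives monadic dependence~\cite{adler2014nowhereNIP} and FPT FO model checking~\cite{grohe2017deciding}, while unbounded expansion gives unbounded merge-width. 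This is exactly the class $\Dc$ the paper uses. Your proposed detour through 2-subdivisions with an added biclique (followed by ``a tailored FPT model checking argument'') is not developed and is not needed; indeed that class is not bipartite and would not directly serve as the $\mathcal{B}$ feeding into $G \mapsto G^\bullet$. One further detail the paper handles and your sketch elides: the relevant transduction is applied to the 1-subdivision of $G$, not to $G$ itself --- $G^\bullet$ is precisely the 1-subdivision of $G$ with a bounded number of flips, and the 1-subdivision of a nowhere dense class is still nowhere dense, which keeps the FPT model checking argument clean.
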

\begin{proof}
    Take~$\Dc$ a class of bipartite graphs that is nowhere dense but has unbounded expansion (see e.g.~\cite{sparsity} for the definitions).
    These are typically obtained from graphs with girth and minimum degree both growing to infinity.
    By \cite[Corollary~1.8]{merge-width}, the class~$\Dc$ also has unbounded merge-width.
    Then the class of 1-subdivisions of graphs in~$\Dc$ is still nowhere dense (immediate from the definition of nowhere dense),
    and still has unbounded merge-width (because one can FO transduce~$\Dc$ from the class of 1-subdivisions of~$\Dc$, and FO transductions preserve bounded merge-width~\cite[Theorem~1.12]{merge-width}).
    Note that nowhere dense implies FPT FO model checking~\cite{grohe2017deciding} and monadically dependent~\cite{adler2014nowhereNIP}.
    
    Now let $\Cc = \{G^\bullet : G \in \Dc\}$ be obtained by the previous construction of $H_\infty$-graphs.
    The graph~$G^\bullet$ is obtained from the 1-subdivision of~$G$ by complementing edges between the appropriate sets of vertices.
    In the terminology of e.g.~\cite{flip-breakability}, this means that~$G^\bullet$ is obtained from the 1-subdivision by a bounded number of \emph{flips}.
    Flips are a very simple case of FO transduction, and thus preserve monadic dependence, FPT model checking, bounded twin-width, and bounded merge-width.
    It follows that~$\Cc$ still is monadically dependent and has FPT model checking, while having unbounded merge-width and twin-width.
\end{proof}

Finally, if~$H$ is any graph with two cycles (not necessarily disjoint) in the same connected component,
then~$H$ contains~$H_\infty$ as a minor (if we adapt the usual notion of minors to allow double edges and self loops).
It is simple to check that when~$H'$ is a minor of~$H$, then any $H'$-graph is also an $H$-graph.
Thus the non-delineated class~$\Cc$ of \cref{cor:Hgraphs-non-delin} can also be found as $H$-graphs, proving \cref{thm:Hgraphs-non-delin}.

\bibliography{literature}

\end{document}